\renewcommand{\hbar}{\overline{h}}
\newcommand{\R}{\mathbb{R}}
\newcommand{\N}{\mathbb{N}}
\newcommand{\Z}{\mathbb{Z}}
\newcommand{\ab}{\text{{\boldmath $a$}}}
\newcommand{\bb}{\text{{\boldmath $b$}}}
\newcommand{\cb}{\text{{\boldmath $c$}}}
\newcommand{\ub}{\text{{\boldmath $u$}}}
\newcommand{\wb}{\text{{\boldmath $w$}}}
\newcommand{\xb}{\text{{\boldmath $x$}}}
\newcommand{\yb}{\text{{\boldmath $y$}}}
\newcommand{\zb}{\text{{\boldmath $z$}}}
\newtheorem{thm}{Theorem}
\newtheorem{lem}{Lemma}
\newtheorem{cor}{Corollary}
\theoremstyle{definition}
\DeclareMathOperator*{\minimize}{minimize}
\DeclareMathOperator*{\maximize}{maximize}
\DeclareMathOperator{\subto}{subject\ to}
\DeclareMathOperator*{\argmax}{argmax}
\DeclareMathOperator*{\argmin}{argmin}
\newcommand{\supp}{\mathrm{supp}}
\newcommand{\ceil}[1]{\left\lceil {#1} \right\rceil}
\newcommand{\floor}[1]{\left\lfloor {#1} \right\rfloor}
\newcommand{\unif}{u}
\newcommand{\relmid}[1]{\mathrel{#1|}}
\newif\iffigure
\active\gdef@{\mkern1mu}}
\mathchardef\Gamma="7100 \mathchardef\Delta="7101
\mathchardef\Theta="7102 \mathchardef\Lambda="7103
\mathchardef\Psi="7104 \mathchardef\Pi="7105 \mathchardef\Sigma="7106
\mathchardef\Upsilon="7107 \mathchardef\Phi="7108
\mathchardef\Psi="7109 \mathchardef\Omega="710A
\newcommand{\dset}{{[d]}}
\begin{document}
	
	%
	
	%
	
	
	\title{
		On Maximization of Weakly Modular Functions:
		Guarantees of Multi-stage Algorithms, Tractability, and Hardness
	}
	
	\author{Shinsaku Sakaue
	\\
NTT Communication Science Laboratories}
	
	\maketitle
	
	\begin{abstract}
	Maximization of {\it non-submodular} functions appears in various scenarios, and many previous works studied it based on some measures that quantify the closeness to being submodular. On the other hand, many practical non-submodular functions are actually close to being {\it modular}, which has been utilized in few studies. In this paper, we study cardinality-constrained maximization of {\it weakly modular} functions, whose closeness to being modular is measured by {\it submodularity} and {\it supermodularity ratios}, and reveal what we can and cannot do by using the weak modularity. We first show that guarantees of multi-stage algorithms can be proved with the weak modularity, which generalize and improve some existing results, and experiments confirm their effectiveness. We then show that weakly modular maximization is {\it fixed-parameter tractable} under certain conditions; as a byproduct, we provide a new time--accuracy trade-off for $\ell_0$-constrained minimization. We finally prove that, even if objective functions are weakly modular, no polynomial-time algorithms can improve the existing approximation guarantees achieved by the greedy algorithm. 
	\end{abstract}

\newcommand{\poly}{\mathrm{poly}}

\newcommand{\Ml}{\mathcal{M}}
\newcommand{\Il}{\mathcal{I}}
\newcommand{\Bl}{\mathcal{B}}

\renewcommand{\S}[1]{{S_{#1}}}
\renewcommand{\O}[1]{{O_{#1}}}
\renewcommand{\N}{N}

\newcommand{\cuz}[1]{\text{{$\because$~{#1}}}}

\newcommand{\curv}{\alpha}
\newcommand{\icurv}{{\check \alpha}}

\renewcommand{\k}{k}
\newcommand{\s}{s}
\newcommand{\kp}{{\k^\prime}}
\renewcommand{\sp}{{\s^\prime}}

\newcommand{\br}{\gamma}
\newcommand{\pr}{\beta}
\newcommand{\bruk}{\br_{\Us,\k}}
\newcommand{\pruk}{\pr_{\Us,\k}}
\newcommand{\brus}{\br_{\Us,\s}}
\newcommand{\prus}{\pr_{\Us,\s}}
\newcommand{\brkk}{\br_{\k}}
\newcommand{\prkk}{\pr_{\k}}

\newcommand{\bs}{\backslash}

\newcommand{\ls}{l}
\newcommand{\f}{f}
\newcommand{\F}{F}
\newcommand{\Fdel}[2]{\F({#1}\mid{#2})}
\newcommand{\G}{G}
\newcommand{\Gdel}[2]{\G({#1}\mid{#2})}
\renewcommand{\c}{c}

\renewcommand{\d}{d}
\newcommand{\As}{{\sf A}}
\newcommand{\Bs}{{\sf B}}
\newcommand{\Ss}{{\sf S}}
\newcommand{\Ts}{{\sf T}}
\newcommand{\Ls}{{\sf L}}
\newcommand{\Us}{{\sf U}}
\newcommand{\Is}{{\sf I}}
\newcommand{\Ms}{{\sf M}}

\newcommand{\return}{{\bf return}}

\renewcommand{\bb}[1]{\mathbf{b}^{({#1})}}
\newcommand{\bvec}{\mathbf{b}}
\newcommand{\bv}{\tau}

\newcommand{\iprod}[2]{\langle{#1},{#2}\rangle}

\newcommand{\Tl}{\mathcal{T}}

\newcommand{\singlerun}{{\tt SingleRun}\text{$()$}}
\newcommand{\Sbest}{\Ss_{\text{best}}}

\newcommand{\gbk}{{\brkk}{\pr_{k,\d}}}
\newcommand{\bgk}{\frac{1}{\brkk\pr_{k,\d}}}

\newcommand{\I}{\mathbf{I}}
\newcommand{\A}{\mathbf{A}}
\newcommand{\B}{\mathbf{B}}
\newcommand{\X}{\mathbf{X}}
\renewcommand{\H}{\mathbf{H}}

\newcommand{\m}[1]{\mu_{{#1}}}
\newcommand{\M}[1]{\nu_{{#1}}}

\newcommand{\tm}[1]{\tilde{\mu}_{{#1}}}
\newcommand{\tM}[1]{\tilde{\nu}_{{#1}}}

\newcommand{\Mst}{\M{\Ss,\Ss\cup \Ts}}
\newcommand{\mst}{\m{\Ss,\Ss\cup \Ts}}

\newcommand{\Mab}{\M{\As,\As\cup \Bs}}
\newcommand{\mab}{\m{\As,\As\cup \Bs}}

\newcommand{\tMab}{\tM{\As,\As\cup \Bs}}

\newcommand{\km}[2]{\m{{#1},{#2}}}
\newcommand{\kM}[2]{\M{{#1},{#2}}}

\newcommand{\co}{\c^*}
\newcommand{\ct}{\tilde{\c}}

\newcommand{\kt}{{\tilde{\k}}}
\newcommand{\ko}{{\k^*}}
\newcommand{\tk}{t}

\newcommand{\Pl}{\mathcal{P}}
\newcommand{\prj}[1]{\Pl_k({#1})}

\newcommand{\bek}{{\pr_{\emptyset,\k}}}
\newcommand{\beko}{{\pr_{\emptyset,\ko}}}
\newcommand{\Sso}{\Ss^*}
\newcommand{\Sst}{\Ss_{t}}
\newcommand{\Sstt}{\Ss_{t+1}}
\newcommand{\mkkk}{\m{2k+k^*}}
\newcommand{\Mkkk}{\M{2k+k^*}}
\newcommand{\teps}{\tilde{\epsilon}}

\newcommand{\Fl}{\mathcal{F}}

\newcommand{\ip}{{i^\prime}}
\newcommand{\ar}{\alpha}

\newcommand{\acc}{x}

\newcommand{\zbh}{{\hat \zb}}

\newcommand{\XS}{\X_{\Ss}}
\newcommand{\XSS}{\X_{\Ss,\Ss}}
\newcommand{\AS}{\A_{\Ss}}
\newcommand{\ASS}{\A_{\Ss,\Ss}}
\newcommand{\ASSj}{\A_{{\Ss\cup\{j\}},\Ss\cup\{j\}}}

\newcommand{\NI}{\text{NI}}

\newcommand{\xtrue}{\xb_{\rm true}}
\newcommand{\ytrue}{\yb_{\rm true}}
\newcommand{\ynoise}{\yb_{\rm noise}}
\renewcommand{\unif}{\mathcal{N}}

\newcommand{\Tfpt}{\left\lceil\left(\bgk\cdot
	\frac{\tilde{\F}+\epsilon}{\epsilon}\right)^k
	\log\delta^{-1}\right\rceil}

\newcommand{\bok}{\pr_{\emptyset,\ko}}
\newcommand{\mkko}{\m{\k+\ko}}
\newcommand{\Mkoo}{\M{\k+1,1}}
\newcommand{\gammM}{\frac{\mkko}{\Mkoo}}
\newcommand{\gamMm}{\frac{\Mkoo}{\mkko}}
\newcommand{\emcb}{\exp\left( {-\gammM\cdot\frac{\c}{\co\bok}}\right)}

\newcommand{\St}{{\tilde \Ss}}
\newcommand{\Sh}{{\hat \Ss}}
\newcommand{\tkh}{{\hat t}}
\newcommand{\Ft}{{\tilde \F}}
\newcommand{\Ftdel}[2]{\Ft({#1}\mid{#2})}
\newcommand{\Gt}{{\tilde \G}}
\newcommand{\Gtdel}[2]{\Gt({#1}\mid{#2})}
\newcommand{\GSt}{{\G(\St)}}

\newcommand{\bmin}{b_{\min}}
\newcommand{\bmax}{b_{\max}}
\newcommand{\mm}{m}
\newcommand{\ave}[1]{{\bar {#1}}}
\newcommand{\app}{\theta}

\newcommand{\Gsim}{\G_{\Ss_{i-1}}}
\newcommand{\Gs}{G_{\Ss}}
\newcommand{\Gsdel}[2]{\Gs({#1}\mid{#2})}

\newcommand{\Os}{{\sf O}}
\newcommand{\ms}{{m_{\sf S}}}
\newcommand{\ns}{{n_{\sf S}}}
\newcommand{\mt}{{m_{\sf T}}}
\newcommand{\nt}{{n_{\sf T}}}
\newcommand{\ml}{{m_{\sf L}}}
\newcommand{\nl}{{n_{\sf L}}}
\newcommand{\msl}{{m_{{\sf S}\cup{\sf L}}}}
\newcommand{\nsl}{{n_{{\sf S}\cup{\sf L}}}}
\newcommand{\ma}{{m_{\sf A}}}
\newcommand{\na}{{n_{\sf A}}}
\newcommand{\nb}{{n_{\sf B}}}

\newcommand{\Go}{G_1^k}
\newcommand{\Godel}[2]{\Go({#1}\mid{#2})}

\newcommand{\Gr}{G_r^k}
\newcommand{\Grdel}[2]{\Gr({#1}\mid{#2})}

\newcommand{\Gor}{G_1^{k-r+1}}
\newcommand{\Gordel}[2]{\Gor({#1}\mid{#2})}

\newcommand{\Do}[2]{\Delta_1^k({#1}\mid{#2})}
\newcommand{\Dr}[2]{\Delta_r^k({#1}\mid{#2})}
\newcommand{\Dor}[2]{\Delta_1^{k-r+1}({#1}\mid{#2})}

\newcommand{\Hl}{{H^\ell}}
\newcommand{\Hldel}[2]{{\Hl({#1}\mid{#2})}}

\newcommand{\Hr}{{H^{k-r+1}}}
\newcommand{\Hrdel}[2]{{\Hr({#1}\mid{#2})}}

\newcommand{\Rl}{{R^\ell}}

\newcommand{\pk}{p^k_r}
\newcommand{\prlb}{\left(2+\frac{r-1}{k-r+1} \right)^{-1}}

\newcommand{\fpt}[1]{{\tt Rand-FPT-Approx}{#1}}
\newcommand{\greedy}[1]{{\tt Greedy}{#1}}
\newcommand{\mgreedy}[1]{{\tt Multi-Greedy{#1}}}
\newcommand{\omp}[1]{{\tt OMP}{#1}}
\newcommand{\momp}[1]{{\tt Multi-OMP{#1}}}
\newcommand{\iht}[1]{{\tt IHT}{#1}}
\newcommand{\htp}[1]{{\tt HTP}{#1}}

\newcommand{\zeros}{\mathbf{0}}
\newcommand{\ones}{\mathbf{1}}
\newcommand{\random}{{\tt Random}}

\newcommand{\Omegarm}{\mathrm{\Omega}}
\newcommand{\Thetarm}{\mathrm{\Theta}}
\newcommand{\tOmegarm}{\tilde{\Omegarm}}

\section{INTRODUCTION}\label{sec:introduction}
We consider the following 
set function maximization with a cardinality constraint:  
\begin{align}
\label{prob:main_F}
\maximize_{\Ss\subseteq\dset}\ \F(\Ss) \quad \subto\ |\Ss|\le k, 
\end{align}
where 
$\d,\k\in\Z_{>0}$, 
$\dset\coloneqq\{1,\dots,\d\}$,  
and 
$\F:2^\dset\to\R$. 
We assume $\F$ to be monotone, normalized, and {\it weakly modular} (WM), 
where the closeness to being modular is represented with 
{\it submodularity ratio} (SBR) $\br\in[0,1]$ 
and {\it supermodularity ratio} (SPR) $\pr\in[0,1]$; 
(see, \Cref{subsec:background} for precise definitions).  
We say $\F$ is {\it weakly submodular} ({\it weakly supermodular}) 
if its SBR (SPR) is lower bounded. 
The larger SBR and SPR are, 
the closer $\F$ is to being submodular and supermodular, 
respectively, 
and $F$ is modular if $\gamma=\beta=1$.

Many previous studies on non-submodular maximization are based on 
some measures that quantify the deviation from being submodular \citep{elenberg2018restricted,qian2019fast}, 
and SBR is one of the most prevalent among such measures. 
As regards weakly submodular maximization, 
\citet{das2018approximate} proved a well-known $(1-e^{-\br})$-approximation guarantee of the greedy algorithm (\greedy{}). 

When it comes to practical non-submodular maximization instances, 
it can be effective to 
employ additional measures other than those quantifying the distance to being submodular. 
\citet{bian2017guarantees} considered a problem class such that $\F$ has bounded SBR and {\it curvature} $\curv\in[0,1]$, 
and they proved a $\frac{1}{\curv}(1-e^{-\curv\br})$-approximation guarantee 
of \greedy{}. 
Namely, 
an improved approximation guarantee is possible if $\curv<1$.  
Unfortunately, however, 
$\curv=1$ occurs quite naturally in many applications as in \Cref{sec:applications} (see also \citep{soma2018new}), 
which motivates us to 
consider a wider class of non-submodular maximization 
that can capture the structures of various practical problems.  

Weakly modular maximization (WMM) forms a wider class than that of \citep{bian2017guarantees}. 
In fact, 
SPR $\pr$ and curvature $\curv$ always satisfy 
$\pr\ge1-\curv$ \citep{bogunovic2018robust}; 
i.e., 
SPR $\beta$ and be bounded even if $\alpha=1$. 
As shown in \Cref{sec:applications}, 
various problems 
including feature selection~\citep{das2018approximate} 
and production planning~\citep{bian2017guarantees} 
strictly belong to WMM; 
that is, 
$\F$ has bounded SPR $\pr$ even though $\curv=1$ in general.  
This fact suggests the importance of studying WMM. 
However, few previous works have studied problem~\eqref{prob:main_F} 
by utilizing the weak modularity, and so WMM remains 
to be studied.

\subsection{Our Contribution}
Our first contribution provides guarantees of efficient algorithms for WMM. 
As described in \Cref{sec:applications}, 
WMM can model various continuous optimization problems including 
$\ell_0$-constrained minimization and linear programming (LP) 
with a cardinality constraint. 
Given such WMM instances, 
the evaluation of objective functions involves solving optimization subproblems, 
which is often so costly that even standard \greedy{} becomes impractical. 
To overcome this hardship, 
we consider using {\it multi-stage} algorithms for WMM.

\paragraph{Guarantees of Multi-stage Algorithms}
In \Cref{sec:multi}, we show that 
the {multi-stage} approach is effective for costly WMM instances; 
with this approach, we accelerate greedy-style algorithms 
by adding multiple elements in each iteration instead of a single element.  
The only existing study that proved guarantees of multi-stage algorithms 
is \citep{wei2014fast}; 
{their result requires the submodularity and its approximation ratio 
	is expressed as $\frac{1}{\alpha}(1-e^{-{\curv}{(1-\curv)}})$ in general, 
	which becomes $0$ if curvature $\curv$ is equal to $1$.} 
Our guarantee of 
the multi-stage greedy algorithm (\mgreedy{}) for WMM is advantageous relative to the previous result in two aspects: 
It can be applied to WM functions, which are generally non-submodular, 
and it can yield positive approximation ratios 
even if $\curv=1$ 
as long as SBR and SPR are bounded.  
Our result also includes the 
existing $(1-e^{-\br})$-approximation 
guarantee~\citep{das2018approximate} 
as a special case. 
We then focus on $\ell_0$-constrained minimization 
and prove 
a guarantee of the 
multi-stage orthogonal matching pursuit 
(\momp{}), 
which can achieve a better approximation ratio than \mgreedy{}. 
Surprisingly, 
our result 
matches that of standard \omp{} \citep{elenberg2018restricted}, 
while \momp{} can run faster than $\omp{}$. 
Our result also improves that of the latest feature selection 
algorithm \citep{qian2019fast}. 
Experiments show that 
the multi-stage approach successfully accelerates 
\greedy{} and \omp{} 
at the cost of a slight decline in solution quality. 

\paragraph{}
Our second and their contributions, 
presented in \Cref{sec:theoretical}, 
are related to theoretical properties of WMM. 
As detailed below, 
these contributions are important for revealing what we can and cannot do with the weak modularity (or bounded SBR and SPR).

\paragraph{Fixed-parameter Tractability}
In \Cref{sec:fpt}, 
we show that $\epsilon$-error solutions for WMM can be obtained
with 
a randomized {\it fixed-parameter tractable} (FPT) algorithm, 
whose computation cost 
depends arbitrarily on 
certain inputs including SBR $\br$, SPR $\pr$, sparsity $\k$, and $\epsilon$, 
but it is polynomial in $\d$. 
The algorithm we use was developed by \citet{skowron2017fpt}, 
but its guarantee was proved only for a special case of 
monotone submodular maximization. 
We also provide a time--accuracy trade-off 
for $\ell_0$-constrained minimization 
as a byproduct, 
which is contrasted with the 
existing sparsity--accuracy trade-off \citep{shalev2010trading}.

\paragraph{Hardness of Improving Approximation Ratio}
As mentioned before, 
if curvature $\curv$ is bounded by a constant smaller than $1$, 
the $\frac{1}{\curv}(1-e^{-\curv\br})$-approximation guarantee of \citep{bian2017guarantees} improves the approximation ratio, $1-e^{-\br}$, 
of \citep{das2018approximate}.  
When it comes to WMM, 
not curvature $\curv$ but SPR $\pr$ ($\ge1-\curv$) is bounded. 
Given this background, the following question arises: 
Can we improve the approximation ratio, $1-e^{-\br}$, 
if SPR $\pr$ is bounded by a constant, instead of curvature $\curv$.  
In \Cref{sec:hard}, we prove that it is generally impossible in polynomial time. 
More precisely,  we prove that, 
even if $\br$ and $\pr$ are lower bounded by some constants,  
no polynomial-time algorithms can improve the 
$(1-e^{-\br})$-approximation guarantee in general in the value oracle model. 
This result clarifies the theoretical gap between SPR $\pr$ and curvature $\curv$.

\subsection{Notation and Definitions}\label{subsec:background}  
Given any $\F:2^\dset\to\R$, 
we define $\Fdel{\Ts}{\Ss}\coloneqq\F(\Ss\cup\Ts)-\F(\Ss)$
for any $\Ss,\Ts\subseteq\dset$. 
All the set functions considered in this paper are monotone  
($\Fdel{\Ts}{\Ss}\ge0$, $\forall\Ss,\Ts\subseteq\dset$) 
and normalized 
($\F(\emptyset)=0$).
We say $\F$ is submodular (supermodular) 
if 
$\Fdel{j}{\Ss}\ge\Fdel{j}{\Ts}$ 
($\Fdel{j}{\Ss}\le\Fdel{j}{\Ts}$) 
holds for any
$\Ss\subseteq\Ts$ and $j\notin\Ts$. 
We assume that $\F$ 
can be evaluated in polynomial time w.r.t. $\d$ 
(or $\poly(\d)$ time). 
Given any $\Ss\subseteq \dset$ and $\xb\in\R^\dset$,
whose $j$-th entry $\xb_j$ is associated with $j\in\dset$,
$\xb_\Ss\in\R^\Ss$ denotes the restriction of $\xb$ to $\Ss$. 
We define 
the support of $\xb$ as 
$\supp(\xb)\coloneqq \{ j\in\dset \relmid{} \xb_j\neq0 \}$.

\paragraph{SBR and SPR} 
Given any monotone 
$\F:2^\dset\to\R$, 
$\Us\subseteq\dset$, 
and $\s\in\Z_{>0}$, 
we define SBR $\brus$ 
and 
SPR $\prus$ 
of $\F$ 
as the largest scalars that satisfy         
\begin{align}
\brus
{\Fdel{\Ss}{\Ls}}
\le
{{\sum}_{j\in \Ss}\Fdel{j}{\Ls}} 
\le 
\prus^{-1}
{\Fdel{\Ss}{\Ls}}    
\end{align}
for any disjoint $\Ls,\Ss\subseteq\dset$ 
such that $\Ls\subseteq\Us$ and $|\Ss|\le\s$.  
We say $\F$ is ($\br_{\Us_1,\s_1}$, $\pr_{\Us_2,\s_2}$)-WM 
if $\F$ has bounded $\br_{\Us_1,\s_1}$ and $\pr_{\Us_2,\s_2}$. 
Note that 
$\br_{\Us^\prime,\sp}\ge\brus$ 
and 
$\pr_{\Us^\prime,\sp}\ge\prus$ hold 
for any $\Us^\prime\subseteq \Us$ and $\sp\le \s$.
We can confirm that 
$\brus\in[0,1]$ 
and 
$\prus\in[1/\s,1]$ hold for any $\Us$ and $\s$.  
We define  $\br_{\sp,\s}\coloneqq\min_{|\Us|\le\sp}\brus$ 
and
$\pr_{\sp,\s}\coloneqq\min_{|\Us|\le\sp}\prus$; 
we sometimes use $\br_{\s}\coloneqq\br_{\s,\s}$ and $\pr_{\s}\coloneqq\pr_{\s,\s}$. 
We have 
$\br_{\d}=1$ 
($\pr_{\d}=1$) 
iff $\F$ is submodular (supermodular).

\paragraph{Curvature}
Given monotone $\F:2^\dset\to\R$, 
its curvature~$\curv\in[0,1]$ 
is defined as the smallest scalar  
that satisfies 
$
{\Fdel{j}{\Ss\bs\{j\}\cup\Ms}}
\ge
(1-\curv)
{\Fdel{j}{\Ss\bs\{j\}}} 
$
for any 
$\Ss,\Ms\subseteq\dset$ and $j\in\Ss\bs\Ms$. 
We always have  
$\prus\ge1-\curv$ for any $\Us$ and $\s$ 
(see, \citep{bogunovic2018robust}). 

\paragraph{Restricted Strong Convexity and Restricted Smoothness} 
When studying $\ell_0$-constrained minimization algorithms, 
the restricted strong convexity (RSC) and restricted smoothness (RSM) 
of loss function $\ls:\R^\d\to\R$ is often used~\citep{jain2014iterative,elenberg2018restricted,yuan2018gradient}. 
We assume $\ls$ to be differentiable. 
Given any fixed $\s_1,\s_2\in\Z_{>0}$, 
we say $\ls$ is   
$\km{\s_1}{\s_2}$-RSC 
and 
$\kM{\s_1}{\s_2}$-RSM 
if it satisfies
\begin{align}
\ls(\yb) 
\ge 
\ls(\xb) 
+
\iprod{\nabla l(\xb)}{\yb-\xb}
+
\frac{\km{\s_1}{\s_2}}{2}\|\yb-\xb\|_2^2
\quad
\text{and}
\quad
\ls(\yb)  
\le
\ls(\xb) 
+
\iprod{\nabla l(\xb)}{\yb-\xb}
+
\frac{\kM{\s_1}{\s_2}}{2}\|\yb-\xb\|_2^2, 
\end{align}
respectively, 
for any $\xb,\yb\in\R^\d$ 
such that
$\|\xb\|_0\le \s_1$, 
$\|\yb\|_0\le \s_1$, and 
$\|\xb-\yb\|_0\le \s_2$. 
If $\ls$ is quadratic, 
the above inequalities 
reduce to 
those of 
the well-known {restricted isometric property} 
(RIP) condition~\citep{candes2006stable}.
We let 
$\m{\s}\coloneqq\km{\s}{\s}$
and 
$\M{\s}\coloneqq\kM{\s}{\s}$. 
We define the restricted condition number
as $\kappa_{\s}\coloneqq \M{\s}/\m{\s}$. 
Typically,  
$\ls$ with a smaller $\kappa_{\s}$ value is easier to deal with. 
If $\ls$ is $\m{\d}$-RSC and $\M{\d}$-RSM, 
we abbreviate the subscript and say $\ls$ is 
$\m{}$-strongly convex 
($\m{}$-SC) and 
$\M{}$-smooth 
($\M{}$-SM);  
we call $\kappa\coloneqq\nu/\mu$  a condition number. 

\subsection{Related Work}\label{subsec:related}
For the case where $\F$ is submodular,   
\citet{nemhauser1978analysis} 
proved the $(1-e^{-1})$-approximation guarantee of \greedy{}. 
\citet{nemhauser1978best} proved that 
no polynomial-time algorithms can improve this guarantee in the value oracle model, 
and 
\citet{feige1998threshold} proved the  
NP-hardness for the case of {\it Max $k$-cover}. 
As regards tractability, 
\citet{skowron2017fpt} developed a randomized FPT approximation algorithm for 
maximization of monotone submodular functions with a special property called 
{\it $p$-separability}. 
Unlike our results, 
those results hold only for monotone submodular maximization. 

When it comes to non-submodular maximization, 
various notions have been introduced to obtain 
theoretical guarantees 
\citep{krause2010dictionary,
	feige2013welfare,
	horel2016maximization,	
	wang2016approximation,
	zhou2016causal}. 
\citet{das2018approximate} 
proposed SBR, 
one of the most prevalent notion used in many studies 
\citep{hu2016efficient,
	elenberg2017streaming,	
	khanna2017approximation,
	khanna2017scalable,
	chen2018weakly,
	qian2019fast},  
and they proved that \greedy{} outputs solution $\Ss$ 
with a ($1-e^{-\br_{\Ss,\k}}$)-approximation guarantee. 
\citet{harshaw2019submodular} proved that 
no polynomial-time algorithms can improve 
this approximation guarantee for weakly submodular maximization in general. 
This result is different from our hardness result 
since they do not 
assume $\F$ to have SPR bounded by a constant; 
this difference is critical 
since bounded SPR could make the problem easier. 
The definition of SPR that we use was introduced by~\citet{bogunovic2018robust}. 
While other SPR-like notions
have been used in the context of minimization problems~\citep{takeda2013simultaneous,liberty2017greedy}, 
those are different from SPR, 
which quantifies the deviation from supermodularity 
in the context of maximization problems. 

Curvature $\curv$~\citep{conforti1984submodular,bian2017guarantees} 
is also used in many studies~\citep{
	iyer2013curvature,
	sviridenko2015optimal,
	bai2018greedy}. 
Its value is, however, often pessimistic (i.e., $\curv\approx1$) as pointed out by \citet{soma2018new}, 
and to bound the curvature value is more demanding than to bound SPR.  
Hence our results obtained with SPR  
are different from existing guarantees that use curvature; 
although those results can sometimes be improved 
by using {\it greedy curvature} $\curv_G\le\curv$~\citep{bian2017guarantees}, 
no lower bounds of $\curv_G$ for WM functions have been proved.

We remark that our work is different from some previous studies on set functions that are close to being modular. 
As mentioned before, \citet{bian2017guarantees} 
studied the case where the curvature and SBR are bounded, 
and they proved that \greedy{} finds solution $\Ss$ with 
a $\frac{1}{\curv}(1-e^{-\curv\br_{\Ss,\k}})$-approximation guarantee. 
They also proved that \greedy{} cannot improve this guarantee. 
Unlike this result, our hardness result considers every 
polynomial-time algorithm. 
\citet{bogunovic2018robust} considered the case where SBR and SPR are bounded. 
However, they are interested in obtaining guarantees for robust maximization, 
not for the standard cardinality-constrained maximization~\eqref{prob:main_F}, 
which is of our interest. 
\citet{chierichetti2015approximate} 
defined the approximate modularity as the $\ell_\infty$-distance to being modular, 
which is different from the weak modularity. 

\citet{wei2014fast} provided 
curvature-dependent approximation guarantees of 
multi-stage algorithms for submodular maximization. 
\citet{marsousi2013multi} 
applied \momp{} to a special case of $\ell_0$-constrained minimization 
where the loss function $\ls$ is quadratic, 
but its theoretical guarantee has not been proved. 
The idea of adding multiple elements in each round 
is also considered in the context of parallel algorithms~\citep{balkanski2018adaptive}. %
\citet{qian2019fast} have recently developed 
a parallel approximation algorithm that runs in $O(\log(\d))$ 
rounds for $\ell_0$-constrained minimization. 
Surprisingly, 
thanks to the use of the weak modularity, 
we can show that 
\momp{} with only one round 
achieves a better approximation guarantee.

\section{APPLICATIONS}\label{sec:applications}
We motivate to study WMM by presenting its applications. 
For each application, we present lower bounds of SBR and SPR. 
We also provide an example such that $\curv=1$ 
for one of the applications, 
and such examples for the other applications are presented in \Cref{a_sec:applications}.

\paragraph{$\ell_0$-constrained Minimization} 
Given a differentiable loss function $\ls:\R^\d\to\R$, 
we consider the following $\ell_0$-constrained minimization: 
$\min_{\|\xb\|_0\le\k}\ls(\xb)$. 
It is generally NP-hard \citep{natarajan1995sparse} 
and
appears in many practical scenarios: 
Feature selection \citep{das2018approximate} 
and M-estimation \citep{jain2014iterative}. 
The problem can be rewritten as in~\eqref{prob:main_F} 
with $\F(\Ss)=\ls(0) - \min_{\supp(\xb)\subseteq\Ss}\ls(\xb)$, 
which has SBR  
$\brus
\ge{\m{|\Us|+\s}}/{\kM{|\Us|+1}{1}}
\ge1/\kappa_{|\Us|+\s}$~\citep{elenberg2018restricted}
and  
SPR 
$\prus
\ge{\m{|\Us|+1}}/{\kM{|\Us|+\s}{\s}}
\ge1/\kappa_{|\Us|+\s}$ 
(\Cref{a_subsec:ell_0}); 
the later bound improves an 
existing result, 
$\prus\ge\mu/\nu=1/\kappa$, of 
\citep{bogunovic2018robust}. 
The evaluation of $\F(\Ss)$ involves 
solving $\min_{\supp(\xb)\subseteq\Ss}\ls(\xb)$.  
If $\ls$ is quadratic, 
we can solve it by computing a pseudo-inverse matrix.
Given a more general $\ls$,
we can use iterative methods (e.g., \citep{shalev2016accelerated})
to solve the minimization problem.

\paragraph{LP with a Cardinality Constraint}
We consider the following constrained LP that models 
optimal production planning problem \citep{bian2017guarantees}. 
Given a set of $d$ items and $\k$ production lines, 
we design a production plan so that the total profit is maximized; 
i.e., we aim to solve 
$\max_{\xb\in\Pl, \|\xb\|_0\le\k} \cb^\top\xb$, 
where 
$\cb\in\R^\d$ 
and 
$\Pl\subseteq\R^\d $ 
represent the profit of each item 
and 
a polytope specified by continuous constraints 
(e.g., upper bounds on the total quantities of materials), 
respectively.  
This problem can be reformulated as in~\eqref{prob:main_F} 
with 
$\F(\Ss)\coloneqq \max_{\xb\in\Pl} \cb_\Ss^\top\xb_\Ss$. 
As in \citep{bian2017guarantees}, 
SBR $\brus$ of $\F$ is lower bounded by some $\br_0>0$ 
for any $\Us$ and $\s$ under the {\it non-degeneracy} assumption. 
Furthermore, 
thanks to the definition of SPR, 
we have $\prus\ge1/\s$;  
although the lower bound, $1/\s$, 
can be small if $\s\approx\d$, 
this is not always the case. 
For example, 
in the guarantee of \mgreedy{} (\Cref{thm:mg}), 
$\s$ is a controllable parameter, $\bmax$; 
i.e., $\prus\ge1/\bmax$ holds.

\newcommand{\Iset}{I}
\newcommand{\ISs}{\Iset_{\Ss}}
\newcommand{\ISLs}{\Iset_{\Ss\cup\Ls}}
\newcommand{\ILs}{\Iset_{\Ls}}
\newcommand{\ILjs}{\Iset_{\Ls\cup j}}
\newcommand{\bo}{c}
\newcommand{\mset}{[m]}
\paragraph{Coverage Maximization}
Submodular functions sometimes have bounded SPR $\prus$, 
and such functions can be seen as special WM functions such that $\brus=1$ 
for any $\Us$ and $\s$. 
One such example is the coverage function. 
Let 
$V$ be a finite set and $w_v\ge0$ ($v\in V$). 
We define $\d$ groups 
$\Iset_1,\dots,\Iset_\d\subseteq V$, 
and 
we let 
$\ISs\coloneqq \bigcup_{j\in\Ss} \Iset_j$ for any $\Ss\subseteq\dset$. 
The coverage function is defined as 
$\F(\Ss)\coloneqq\sum_{v\in\ISs} w_v$, 
which is submodular and used in many scenarios 
including document summarization~\citep{lin2011class} 
and itemset mining~\citep{kumar2015fast}. 
Given $\s\in\Z_{>0}$, 
we assume that any collection of up to $\s$ 
groups 
covers every $v\in V$ at most $\bo_\s$ times; 
i.e., 
$\bo_\s\coloneqq\max_{v\in V, |\Ss|\le\s}|\{j\in\Ss \relmid{} v\in\Iset_j \}| $. 
Note that $\bo_\s\le\s$ always holds. 
In this case, 
SPR $\prus$ of $\F$ is lower bounded by $1/\bo_\s$ 
as proved in \Cref{a_subsec:coverage}.

\paragraph{Example with Unbounded Curvature}
We provide an example of LP with a cardinality constraint 
such that $\curv=1$. 
Let $\d=2$, 
$\xb=(x_1,x_2)^\top$,  
and 
$\epsilon\in(0,1)$.  
We consider a set function 
defined as
$\F(\Ss) = \max_{\supp(\xb)\subseteq\Ss}\{x_1 + \epsilon x_2 \relmid{} x_1+x_2\le 1, x_1\ge0, x_2\ge0 \}$ 
for any 
$\Ss\subseteq\dset$; 
i.e., 
$\cb=(1,\epsilon)^\top$ and $\Pl=\{\xb\in\R^2 \relmid{} x_1+x_2\le1, x_1\ge0, x_2\ge0 \}$. 
From the definitions of SBR and SPR, 
we can confirm that 
$\brus=1$ and $\prus\ge\frac{1}{1+\epsilon}$ hold 
for any $\Us$ and $\s$. 
On the other hand, we have $\curv=1$ 
since $\Fdel{\{2\}}{\{1\}}\ge(1-\curv)\F(\{1\})$
must hold for $\Fdel{\{2\}}{\{1\}} = 0$ 
and 
$\F(\{1\}) = 1$.

\section{MULTI-STAGE ALGORITHMS}\label{sec:multi}
We study multi-stage algorithms for WMM. 
Let $\Sso$ and $\xb^*$ be target solutions 
for WMM and $\ell_0$-constrained minimization, 
respectively, 
and $\ko\coloneqq|\Sso|=\|\xb^*\|_0$.  
As a warm-up, we first discuss two simple algorithms: 

\paragraph{Single-stage Algorithm}
We compute $\F(j)$ for $j\in\dset$ and let $\Ss=\argmax_{\Ss^\prime:|\Ss^\prime|\le\k}\sum_{j\in\Ss^\prime}\F(j)$. 
The algorithm requires to evaluate $\F$ only $d$ times, 
and it can find optimal solutions if $\F$ is modular. 
However, its approximation ratio becomes poor if $\F$ lacks the modularity: 
We consider a coverage maximization instance with $\d=2\k$ 
and $V=\{v_1, \dots, v_{2\k}\}$. 
Let 
$w_{v_j}=1$ and $I_{j}=\{v_j\}$ for $j=1,\dots,\k$, 
and 
let 
$w_{v_{j}}=\epsilon\ll1$ and 
$I_{j}=\{v_1,v_{j}\}$ for $j=\k+1,\dots2\k$. 
In this case, 
if $\epsilon>0$ is sufficiently small, 
the approximation ratio achieved by the single-stage algorithm is 
$\frac{1+\k\epsilon}{\k+\epsilon} = O(1/d)$.

\paragraph{Greedy Algorithm}
Starting from $\Ss=\emptyset$,
\greedy{} iteratively adds $\argmax_{j\notin\Ss}\Fdel{j}{\Ss}$ to $\Ss$
and outputs $\Ss$ after $k$ iterations.  
Given $\F$ with SBR $\br_{\Ss,\ko}$, 
\greedy{} achieves a $(1-\exp(-\br_{\Ss,\ko}))$-approximation guarantee. 
\greedy{} is, however, often costly due to the sequential evaluation of $F$, 
particularly when the evaluation of $\F$ involves solving optimization problems. 
For example, 
in the case of $\ell_0$-constrained minimization, 
\greedy{} solves convex minimization problems $\Thetarm(dk)$ times. 

Namely, while the single-stage algorithm 
can efficiently find optimal solutions if $\F$ is modular, 
\greedy{} can achieve better guarantees for non-modular $\F$ 
at the cost of more computational effort. 
In the case of WMM, 
since $\F$ belongs to a class that is close to modular functions, 
we can expect that an intermediate of the above two algorithms works well.
The multi-stage approach provides such an intermediate. 
As in \Cref{alg:multi}, 
we perform $\mm$ ($\le\k$) iterations to obtain a solution. 
In each $i$-th iteration, 
we choose a subset $\Bs_i\subseteq\dset$ 
of size at most $b_i$ so that it 
maximizes a {\it surrogate function}, 
$\Gs$, 
where $\Ss$ is the current solution. 
To obtain fast multi-stage algorithms, 
$\Gs$ should be evaluated and maximized efficiently. 
Below we design $\Gs$ for 
\mgreedy{} and \momp{}, 
and we present their theoretical guarantees.  
We then experimentally evaluate the multi-stage algorithms.

\subsection{Theoretical Guarantees}\label{subsec:multi_guarantees}
\begin{algorithm}[tb]
	\caption{Multi-stage algorithm}
	\label{alg:multi}
	\begin{algorithmic}[1]
		\State $\Us\gets \dset$, $\Ss\gets\emptyset$
		\For{$i=1,\dots,\mm$}
		\State $\Bs_i\gets\argmax_{\Bs\subseteq \Us:|\Bs|\le b_i} 
		\Gs(\Bs)$
		\State $\Ss\gets\Ss\cup\Bs_i$
		\State $\Us\gets \Us\bs\Bs_i$
		\EndFor
		\State \Return $\Ss$
	\end{algorithmic}
\end{algorithm}

Let $\Ss_i=\Bs_1\cup\cdots\cup\Bs_i$ 
for $i\in[\mm]$ and $\Ss_0=\emptyset$.
We first present a guarantee of \mgreedy{} 
for WMM. 
We then focus on $\ell_0$-constrained minimization 
and prove a guarantee of \momp{}. 
As detailed below, 
our results generalize and improve some existing results, 
which emphasizes that to utilize the weak modularity 
is effective for obtaining strong theoretical results. 
The proofs of the theorems are presented in \Cref{a_subsec:multi_guarantee}.

\subsubsection{Multi-Greedy}\label{subsubsec:mg}
\mgreedy{} uses 
$
\Gs(\Bs) = \sum_{j\in\Bs} \Fdel{j}{\Ss}
$ as a surrogate function. 
As a result, \mgreedy{} evaluates $\F$ $\Thetarm(dm)$ times. 
We can show that \mgreedy{} enjoys the following approximation guarantee: 
\begin{thm}\label{thm:mg}
	Let $\bmax$ be an integer satisfying $1\le\bmax\le\ko$. 
	Set $b_1,\dots,b_m$ so as to satisfy  
	$b_i\in[\bmax]$ for $i\in[\mm]$ and $\sum_{i\in[\mm]}b_i=\k$.  
	If 
	$\Ss$ is the solution 
	obtained with \mgreedy{} 
	and 
	$\F$ is $(\br_{\Ss,\ko}, \pr_{\Ss,\bmax})$-WM, 
	we have
	\begin{align}
	\F(\Ss)
	\ge 
	\left(1- 
	\prod_{i=1}^\mm
	\left(1 -
	\br_{\Ss_{i-1},\ko}
	\pr_{\Ss_{i-1},b_i}
	\frac{b_i}{\ko}
	\right)  \right)
	\F(\Sso)
	\ge 
	\left(1-\exp\left( -
	\br_{\Ss,\ko}
	\pr_{\Ss,\bmax}
	\frac{\k}{\ko}
	\right)  \right)
	\F(\Sso). 
	\end{align}
\end{thm}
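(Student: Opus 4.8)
The plan is to carry out the familiar ``progress made per iteration'' analysis of greedy-type algorithms, but with two modifications that exploit the weak modularity: the submodularity ratio is used to certify that the best attainable value of the \emph{modular} surrogate $\Gs$ is large, and the supermodularity ratio is used to turn that surrogate value back into genuine marginal gain $\Fdel{\Bs_i}{\Ss_{i-1}}$.

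First I would establish, for each iteration $i\in[\mm]$, the one-step bound
\[
\Fdel{\Bs_i}{\Ss_{i-1}}\ \ge\ \br_{\Ss_{i-1},\ko}\,\pr_{\Ss_{i-1},b_i}\,\frac{b_i}{\ko}\,\bigl(\F(\Sso)-\F(\Ss_{i-1})\bigr).
\]
To prove it, note first that monotonicity gives $\F(\Sso)-\F(\Ss_{i-1})\le\Fdel{\Sso\bs\Ss_{i-1}}{\Ss_{i-1}}$. Applying the definition of SBR with restricting set $\Us=\Ss_{i-1}$, $\Ls=\Ss_{i-1}$, and $\Ss=\Sso\bs\Ss_{i-1}$ (legal because $\Ls\subseteq\Ss_{i-1}$, the two sets are disjoint, and $|\Sso\bs\Ss_{i-1}|\le\ko$) yields $\br_{\Ss_{i-1},\ko}\bigl(\F(\Sso)-\F(\Ss_{i-1})\bigr)\le\sum_{j\in\Sso\bs\Ss_{i-1}}\Fdel{j}{\Ss_{i-1}}$. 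Next comes an averaging step: among the at most $\ko$ nonnegative numbers $\{\Fdel{j}{\Ss_{i-1}}\}_{j\in\Sso\bs\Ss_{i-1}}$, keeping the $b_i$ largest (or all of them, if there are fewer than $b_i$) produces a set $\Bs^\star\subseteq\dset\bs\Ss_{i-1}$ with $|\Bs^\star|\le b_i$ and $\sum_{j\in\Bs^\star}\Fdel{j}{\Ss_{i-1}}\ge\frac{b_i}{\ko}\sum_{j\in\Sso\bs\Ss_{i-1}}\Fdel{j}{\Ss_{i-1}}$, where $b_i/\ko\le1$ because $b_i\le\bmax\le\ko$. Since $\Bs^\star$ is feasible for the surrogate maximization at iteration $i$ and $\Bs_i$ is an exact maximizer of the modular function $\Gs$, we get $\sum_{j\in\Bs_i}\Fdel{j}{\Ss_{i-1}}\ge\sum_{j\in\Bs^\star}\Fdel{j}{\Ss_{i-1}}$. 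Finally, the definition of SPR with restricting set $\Us=\Ss_{i-1}$, $\Ls=\Ss_{i-1}$, and $\Ss=\Bs_i$ (legal because $|\Bs_i|\le b_i$ and $\Bs_i\cap\Ss_{i-1}=\emptyset$) gives $\Fdel{\Bs_i}{\Ss_{i-1}}\ge\pr_{\Ss_{i-1},b_i}\sum_{j\in\Bs_i}\Fdel{j}{\Ss_{i-1}}$; chaining the four inequalities yields the one-step bound.

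With the one-step bound in hand, set $\delta_i\coloneqq\F(\Sso)-\F(\Ss_{i-1})$ and $\lambda_i\coloneqq\br_{\Ss_{i-1},\ko}\pr_{\Ss_{i-1},b_i}b_i/\ko$, so that the bound reads $\delta_{i+1}\le(1-\lambda_i)\delta_i$. Since $\F$ is normalized, $\delta_1=\F(\Sso)$, and unrolling the recursion gives $\F(\Sso)-\F(\Ss)=\delta_{\mm+1}\le\bigl(\prod_{i=1}^{\mm}(1-\lambda_i)\bigr)\F(\Sso)$, which rearranges to the first displayed inequality of the theorem; here one checks $\lambda_i\in[0,1]$ from $\br_{\Ss_{i-1},\ko}\le1$, $\pr_{\Ss_{i-1},b_i}\le1$, and $b_i\le\ko$, so that every factor $1-\lambda_i$ is nonnegative. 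For the second inequality I would invoke the monotonicity of the ratios, $\br_{\Ss_{i-1},\ko}\ge\br_{\Ss,\ko}$ and $\pr_{\Ss_{i-1},b_i}\ge\pr_{\Ss,b_i}\ge\pr_{\Ss,\bmax}$ (using $\Ss_{i-1}\subseteq\Ss$ and $b_i\le\bmax$), hence $1-\lambda_i\le1-\br_{\Ss,\ko}\pr_{\Ss,\bmax}b_i/\ko\le\exp(-\br_{\Ss,\ko}\pr_{\Ss,\bmax}b_i/\ko)$; taking the product over $i$ and using $\sum_{i\in[\mm]}b_i=\k$ collapses the exponent to $-\br_{\Ss,\ko}\pr_{\Ss,\bmax}\k/\ko$.

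The argument is largely routine; the difficulty is bookkeeping rather than a genuine obstacle. One must be careful that each use of the SBR/SPR inequalities employs the correct restricting set $\Us$ --- here the \emph{current} solution $\Ss_{i-1}$, not the full ground set $\dset$ --- which is exactly why the iteration-dependent ratios $\br_{\Ss_{i-1},\ko}$ and $\pr_{\Ss_{i-1},b_i}$ (rather than their worst-case versions) appear in the sharper first bound. The only mildly delicate combinatorial point is the averaging step when $|\Sso\bs\Ss_{i-1}|<b_i$, which is harmless because the missing marginal gains are nonnegative and $b_i/\ko\le1$.
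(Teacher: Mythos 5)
Your proof is correct and follows essentially the same route as the paper's: the same per-iteration bound $\Fdel{\Bs_i}{\Ss_{i-1}}\ge\br_{\Ss_{i-1},\ko}\pr_{\Ss_{i-1},b_i}\frac{b_i}{\ko}(\F(\Sso)-\F(\Ss_{i-1}))$ obtained by chaining SPR, greedy optimality of the modular surrogate, averaging, SBR, and monotonicity, followed by unrolling the recursion. The only cosmetic differences are that you exhibit an explicit witness set $\Bs^\star$ where the paper states the averaged greedy inequality directly, and you apply $1-x\le e^{-x}$ factor by factor where the paper routes the exponential bound through AM--GM.
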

Note that, 
if we set $\bmax=1$, 
this result recovers the 
$(1-e^{-\br_{\Ss,\k}})$-approximation 
of 
\greedy{}~\citep{das2018approximate} 
since $\pr_{\Ss,1}=1$. 
\mgreedy{} with $m=1$ is sometimes called the oblivious algorithm 
in the field of $\ell_0$-constrained minimization. 
\citet{elenberg2018restricted} proved that its approximation ratio is at least  $\max\left\{\frac{1}{\k}\kappa_\k^{-1}, \frac{3}{4}\kappa_{k}^{-2}, \kappa_{k}^{-3}\right\}$. 
Note that \Cref{thm:mg} improves this result since, 
if $b_1=\k=\ko$, 
the approximation ratio becomes 
$\max\{\frac{1}{\k}\kappa_{k}^{-1}, \kappa_{k}^{-2}  \}$ 
thanks to 
the lower bounds of SBR and SPR (\Cref{sec:applications}) 
and 
$\pr_{\emptyset,\k}\ge1/\k$. 
More generally, given $m\ge1$, \mgreedy{} 
achieves a $1-\exp(\kappa^{-1}_{2\k}\kappa^{-1}_{\k+\bmax})$-approximation guarantee. 
Below we show that a stronger guarantee for $\ell_0$-constrained minimization 
can be obtained by using \momp{}.

\subsubsection{Multi-OMP}
We then focus on $\ell_0$-constrained minimization; 
i.e., 
we assume $\F(\Ss)=\ls(0)-\min_{\supp(\xb)\subseteq\Ss}\ls(\xb)$ 
($\forall \Ss\subseteq\dset$).
Let $\bb{\Ss}\coloneqq\argmin_{\supp(\xb^\prime)\subseteq\Ss}\ls(\xb^\prime)$. 
\momp{} uses 
$
\Gs(\Bs) 
=
\sum_{j\in\Bs} |\nabla\ls(\bb{\Ss})_j|^2
$
as a surrogate function; 
thus, it requires 
to compute the gradient and 
to solve convex minimization problems $m$ times. 
To prove the guarantee of \momp{}, 
we use the following lemma, 
which, 
roughly speaking, connects the decrease in $\ls$ to the increase in $\F$.

\begin{lem}\label{lem:fts}
	For
	any disjoint $\As,\Bs\subseteq\dset$, 
	if $\ls(\cdot)$ is 
	$\m{|\As\cup\Bs|}$-RSC and 
	$\M{|\Bs|,|\Bs\bs\As|}$-RSM, 
	we have
	\begin{align}
	\frac{\|\nabla \ls(\bb{\As})_{\Bs} \|^2_2}{2\M{|\Bs|,|\Bs\bs\As|}} 
	\le
	\Fdel{\Bs}{\As}
	\le
	\frac{\|\nabla \ls(\bb{\As})_{\Bs} \|^2_2}{2\m{|\As\cup\Bs|}}.
	\end{align}
\end{lem}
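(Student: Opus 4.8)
The plan is to exploit the variational characterization of $\F$ together with the RSC/RSM inequalities applied at the minimizer $\bb{\As}$. Recall $\F(\Ss)=\ls(0)-\ls(\bb{\Ss})$, so $\Fdel{\Bs}{\As}=\ls(\bb{\As})-\ls(\bb{\As\cup\Bs})$; this difference is exactly the decrease in loss obtained by optimizing over the extra coordinates in $\Bs$. I would bound this decrease from both sides by comparing $\ls$ at $\bb{\As\cup\Bs}$ against $\ls$ at a suitable one-step descent point from $\bb{\As}$ supported on $\As\cup\Bs$.

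For the \emph{upper bound} on $\Fdel{\Bs}{\As}$: since $\bb{\As}$ minimizes $\ls$ over vectors supported in $\As$, the gradient $\nabla\ls(\bb{\As})$ vanishes on coordinates in $\As$, so $\nabla\ls(\bb{\As})_{\As\cup\Bs}=\nabla\ls(\bb{\As})_{\Bs}$ (using $\As\cap\Bs=\emptyset$). Apply the RSC inequality with $\xb=\bb{\As}$ and $\yb=\bb{\As\cup\Bs}$ — both have support size at most $|\As\cup\Bs|$ and their difference has support size at most $|\As\cup\Bs|$, so the stated $\m{|\As\cup\Bs|}$-RSC hypothesis applies. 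Minimizing the resulting quadratic lower bound in $\yb$ over vectors supported on $\As\cup\Bs$, the optimum is controlled by $\|\nabla\ls(\bb{\As})_{\As\cup\Bs}\|_2^2/(2\m{|\As\cup\Bs|}) = \|\nabla\ls(\bb{\As})_{\Bs}\|_2^2/(2\m{|\As\cup\Bs|})$; since $\ls(\bb{\As\cup\Bs})$ is at most this minimized surrogate evaluated at its optimizer, rearranging gives $\Fdel{\Bs}{\As}\le\|\nabla\ls(\bb{\As})_{\Bs}\|_2^2/(2\m{|\As\cup\Bs|})$.

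For the \emph{lower bound} on $\Fdel{\Bs}{\As}$: take the explicit descent point $\yb = \bb{\As} - \frac{1}{\M{|\Bs|,|\Bs\bs\As|}}\nabla\ls(\bb{\As})_{\Bs}$ (the gradient step restricted to coordinates in $\Bs$, with $\nabla\ls(\bb{\As})_{\Bs}$ meaning the vector that agrees with $\nabla\ls(\bb{\As})$ on $\Bs$ and is zero elsewhere). Then $\yb-\bb{\As}$ is supported on $\Bs$, so $\|\yb-\bb{\As}\|_0\le|\Bs|$, and since $\nabla\ls(\bb{\As})$ is zero on $\As$ this step only moves coordinates in $\Bs\bs\As$ — which is why the RSM constant is indexed by $|\Bs\bs\As|$ in the second slot; the support of $\yb$ stays within $\As\cup\Bs$. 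Here I expect the main subtlety: matching the sparsity bookkeeping so that the hypothesis "$\ls$ is $\M{|\Bs|,|\Bs\bs\As|}$-RSM" is exactly what is needed, i.e. confirming $\|\xb-\yb\|_0 \le |\Bs|$ while the \emph{effective} change is on $|\Bs\bs\As|$ coordinates — one must be careful whether the nonzero pattern of $\nabla\ls(\bb{\As})_\Bs$ could exceed $|\Bs\bs\As|$ and check that it does not, since $\nabla\ls(\bb{\As})$ vanishes on $\As$. Plugging $\yb$ into the RSM inequality and simplifying the quadratic in the stepsize yields $\ls(\yb)\le\ls(\bb{\As}) - \|\nabla\ls(\bb{\As})_{\Bs}\|_2^2/(2\M{|\Bs|,|\Bs\bs\As|})$. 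Since $\yb$ is supported on $\As\cup\Bs$ we have $\ls(\bb{\As\cup\Bs})\le\ls(\yb)$, and subtracting gives $\Fdel{\Bs}{\As} = \ls(\bb{\As})-\ls(\bb{\As\cup\Bs}) \ge \|\nabla\ls(\bb{\As})_{\Bs}\|_2^2/(2\M{|\Bs|,|\Bs\bs\As|})$, which is the claim.
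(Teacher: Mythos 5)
Your proposal is correct and follows essentially the same route as the paper's proof: both sides come from the RSC/RSM quadratic bounds anchored at $\bb{\As}$, with the lower bound obtained by testing against a gradient step supported on $\Bs$ and the upper bound by optimizing the quadratic over vectors supported on $\As\cup\Bs$, then using the first-order optimality condition $\nabla\ls(\bb{\As})_{\As}=0$ to replace $\|\nabla\ls(\bb{\As})_{\As\cup\Bs}\|_2$ by $\|\nabla\ls(\bb{\As})_{\Bs}\|_2$. The only (cosmetic) difference is that the paper phrases everything in terms of the concave surrogate $\f(\xb)=\ls(0)-\ls(\xb)$ and its restricted constants $\mab$, $\tMab$, whereas you work directly with $\ls$; your sparsity bookkeeping for the RSM constant is the right concern and is resolved exactly as you suspect.
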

A special case of the lemma is implicitly used in \citep{elenberg2018restricted}. 
In \Cref{a_subsec:ell_0}, 
we provide a slightly stronger version of the lemma, 
which we use for proving the guarantee of \momp{}. 
By using the lemma, we can employ the technique used 
when proving \Cref{thm:mg} and obtain the following result:
\begin{thm}\label{thm:mo}
	Set $b_1,\dots,b_\mm$ as in \Cref{thm:mg}. 
	If $\ls$ is 
	$\m{\k+\ko}$-RSC and $\M{\k,\bmax}$-RSM, 
	then \momp{} outputs solution $\Ss$ such that  
	$\xb=\argmin_{\supp(\xb^\prime)\subseteq\Ss}\ls(\xb^\prime)$ satisfies 
	\begin{align}
	\ls(\xb)\le{}&
	\ls(\xb^*)
	+
	\prod_{i=1}^\mm
	\left(1 -
	\frac{\m{|\Ss_{i-1}\cup\Sso|}}{\M{|\Ss_{i}|,|\Bs_i|}}
	\frac{b_i}{\ko}
	\right)
	(\ls(0) - \ls(\xb^*))
	\\
	\le{}&
	\ls(\xb^*)
	+
	\exp\left( -
	\frac{\m{\k+\ko}}{\M{\k,\bmax}}
	\frac{\k}{\ko}
	\right)
	(\ls(0) - \ls(\xb^*))
	\\
	\le{}&
	\ls(\xb^*)
	+
	\exp\left( -
	\frac{1}{\kappa_{\k+\ko}}
	\frac{\k}{\ko}
	\right)
	(\ls(0) - \ls(\xb^*)).
	\end{align}
\end{thm}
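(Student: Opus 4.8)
The plan is to mirror the inductive argument behind Theorem~\ref{thm:mg}, but phrased in terms of the loss $\ls$ rather than the set function $\F$. Recall $\F(\Ss)=\ls(0)-\min_{\supp(\xb)\subseteq\Ss}\ls(\xb)$, so that $\ls(\bb{\Ss_i})=\ls(0)-\F(\Ss_i)$; proving the stated bound on $\ls(\xb)=\ls(\bb{\Ss_\mm})$ is therefore equivalent to a lower bound on $\F(\Ss_\mm)$ of the form $\F(\Ss_\mm)\ge\bigl(1-\prod_i(1-\frac{\m{|\Ss_{i-1}\cup\Sso|}}{\M{|\Ss_i|,|\Bs_i|}}\frac{b_i}{\ko})\bigr)\F(\Sso)$, since $\F(\Sso)=\ls(0)-\ls(\xb^*)$. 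So first I would set up the per-iteration progress inequality. Fix iteration $i$ and write $\As=\Ss_{i-1}$. By the greedy choice of $\Bs_i$ as the $b_i$ coordinates of largest $|\nabla\ls(\bb{\As})_j|^2$ among $j\in\Us$, and because $\Sso\setminus\As$ has at most $\ko$ elements of which at least $b_i$ can be accommodated (here $b_i\le\bmax\le\ko$), an averaging argument gives $\sum_{j\in\Bs_i}|\nabla\ls(\bb{\As})_j|^2\ge\frac{b_i}{\ko}\sum_{j\in\Sso\setminus\As}|\nabla\ls(\bb{\As})_j|^2\ge\frac{b_i}{\ko}\cdot\frac{1}{|\Sso\setminus\As|}\cdot(\cdots)$; more carefully, the top-$b_i$ sum is at least $\frac{b_i}{\ko}$ times the full sum over any $\ko$-subset, so in particular $\ge\frac{b_i}{\ko}\|\nabla\ls(\bb{\As})_{\Sso\setminus\As}\|_2^2$.

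Next I would apply Lemma~\ref{lem:fts} (or its strengthening in \Cref{a_subsec:ell_0}) twice. The lower half of the lemma, applied to the disjoint pair $(\As,\Bs_i)$, gives $\Fdel{\Bs_i}{\As}\ge\frac{\|\nabla\ls(\bb{\As})_{\Bs_i}\|_2^2}{2\M{|\Bs_i|,|\Bs_i\setminus\As|}}$; since $\As\cap\Bs_i=\emptyset$ we have $|\Bs_i\setminus\As|=|\Bs_i|$ and $|\As\cup\Bs_i|=|\Ss_i|$, so this is $\ge\frac{\|\nabla\ls(\bb{\As})_{\Bs_i}\|_2^2}{2\M{|\Bs_i|,|\Bs_i|}}$; and I would use the version with the RSM constant $\M{|\Ss_i|,|\Bs_i|}$ that the theorem statement advertises (the stronger lemma in the appendix presumably supplies exactly this). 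The upper half of the lemma, applied to the pair $(\As,\Sso\setminus\As)$, gives $\Fdel{\Sso\setminus\As}{\As}\le\frac{\|\nabla\ls(\bb{\As})_{\Sso\setminus\As}\|_2^2}{2\m{|\As\cup(\Sso\setminus\As)|}}=\frac{\|\nabla\ls(\bb{\As})_{\Sso\setminus\As}\|_2^2}{2\m{|\As\cup\Sso|}}$, i.e.\ $\|\nabla\ls(\bb{\As})_{\Sso\setminus\As}\|_2^2\ge 2\m{|\As\cup\Sso|}\,\Fdel{\Sso\setminus\As}{\As}$. Combining the three displayed inequalities yields the key recursion
\begin{align}
\Fdel{\Bs_i}{\Ss_{i-1}}
\ge
\frac{b_i}{\ko}\,\frac{\m{|\Ss_{i-1}\cup\Sso|}}{\M{|\Ss_i|,|\Bs_i|}}\,\Fdel{\Sso\setminus\Ss_{i-1}}{\Ss_{i-1}}
\ge
\frac{b_i}{\ko}\,\frac{\m{|\Ss_{i-1}\cup\Sso|}}{\M{|\Ss_i|,|\Bs_i|}}\,\bigl(\F(\Sso)-\F(\Ss_{i-1})\bigr),
\end{align}
where the last step uses monotonicity, $\Fdel{\Sso\setminus\Ss_{i-1}}{\Ss_{i-1}}=\F(\Ss_{i-1}\cup\Sso)-\F(\Ss_{i-1})\ge\F(\Sso)-\F(\Ss_{i-1})$.

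From here the argument is the standard greedy telescoping: writing $\delta_i\coloneqq\F(\Sso)-\F(\Ss_i)$ and $\rho_i\coloneqq\frac{b_i}{\ko}\frac{\m{|\Ss_{i-1}\cup\Sso|}}{\M{|\Ss_i|,|\Bs_i|}}$, the recursion reads $\delta_{i-1}-\delta_i=\Fdel{\Bs_i}{\Ss_{i-1}}\ge\rho_i\delta_{i-1}$, hence $\delta_i\le(1-\rho_i)\delta_{i-1}$, so $\delta_\mm\le\prod_{i=1}^\mm(1-\rho_i)\,\delta_0=\prod_{i=1}^\mm(1-\rho_i)\F(\Sso)$; translating back via $\F(\Ss_\mm)=\ls(0)-\ls(\xb)$ and $\F(\Sso)=\ls(0)-\ls(\xb^*)$ gives the first displayed inequality of the theorem. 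The second follows by monotonicity of the RSC/RSM constants in their subscripts: each $\m{|\Ss_{i-1}\cup\Sso|}\ge\m{\k+\ko}$ and each $\M{|\Ss_i|,|\Bs_i|}\le\M{\k,\bmax}$ (using $|\Ss_i|\le\k$, $|\Bs_i|\le\bmax$), so $\rho_i\ge\frac{b_i}{\ko}\frac{\m{\k+\ko}}{\M{\k,\bmax}}$, and $\prod_i(1-\rho_i)\le\exp(-\sum_i\rho_i)\le\exp(-\frac{\m{\k+\ko}}{\M{\k,\bmax}}\frac{\sum_i b_i}{\ko})=\exp(-\frac{\m{\k+\ko}}{\M{\k,\bmax}}\frac{\k}{\ko})$. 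The third inequality is immediate from $\m{\k+\ko}/\M{\k,\bmax}\ge\m{\k+\ko}/\M{\k+\ko}=1/\kappa_{\k+\ko}$ (using $\M{\k,\bmax}\le\M{\k+\ko}$). The one point that needs care — and the place I expect the real work to be — is matching the exact RSM constant $\M{|\Ss_i|,|\Bs_i|}$ claimed in the theorem rather than the $\M{|\Bs_i|,|\Bs_i\setminus\As|}$ that appears in Lemma~\ref{lem:fts} as stated; this is presumably why the appendix advertises a ``slightly stronger version of the lemma,'' and I would invoke exactly that strengthened bound at the step where I lower-bound $\Fdel{\Bs_i}{\Ss_{i-1}}$.
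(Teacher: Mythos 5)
Your proposal is correct and follows essentially the same route as the paper's proof in \Cref{a_subsec:multi_guarantee}: the greedy averaging inequality on the squared gradient entries (the paper's \eqref{a_eq:aveomp}), two applications of the strengthened \Cref{a_lem:fts} (lower bound for $\Fdel{\Bs_i}{\Ss_{i-1}}$ via the RSM constant, upper bound for $\Fdel{\Sso\bs\Ss_{i-1}}{\Ss_{i-1}}$ via the RSC constant), monotonicity, and the standard product telescoping (which the paper packages as \Cref{a_lem:Bi}). You also correctly identified that the set-indexed appendix version of the lemma, with $\tM{\Ss_{i-1},\Ss_i}\le\M{|\Ss_i|,|\Bs_i|}$, is exactly what supplies the RSM constant appearing in the theorem statement.
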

Note that, if $\k=\ko$, 
\Cref{thm:mo} gives a ($1-\exp(\kappa^{-1}_{2\k})$)-approximation guarantee, 
which improves the aforementioned guarantee of \mgreedy{}. 
Interestingly, 
the approximation ratio matches those of \omp{} 
and \greedy{} \citep{elenberg2018restricted}. 
Namely, 
the use of the multi-stage approach does not degrade 
the theoretical guarantee. 
If we let $b_1=\k=\ko$, 
\momp{} with only one round achieves 
a $\kappa^{-1}_{2\k}$-approximation guarantee, 
which improves the existing 
$\left(1-\exp\left(-{\kappa_{2\k}^{-4}}
\right)\right)$-approximation 
guarantee with $O(\log(\d))$ rounds, 
recently proved by \citet{qian2019fast}, 
in terms of both the approximation ratio and the computation complexity. 


\subsection{Experiments }\label{subsec:experiments}
\begin{figure*}[tb]
	\vspace{0pt}
	\centering
	\includegraphics[width=1.0\linewidth]{./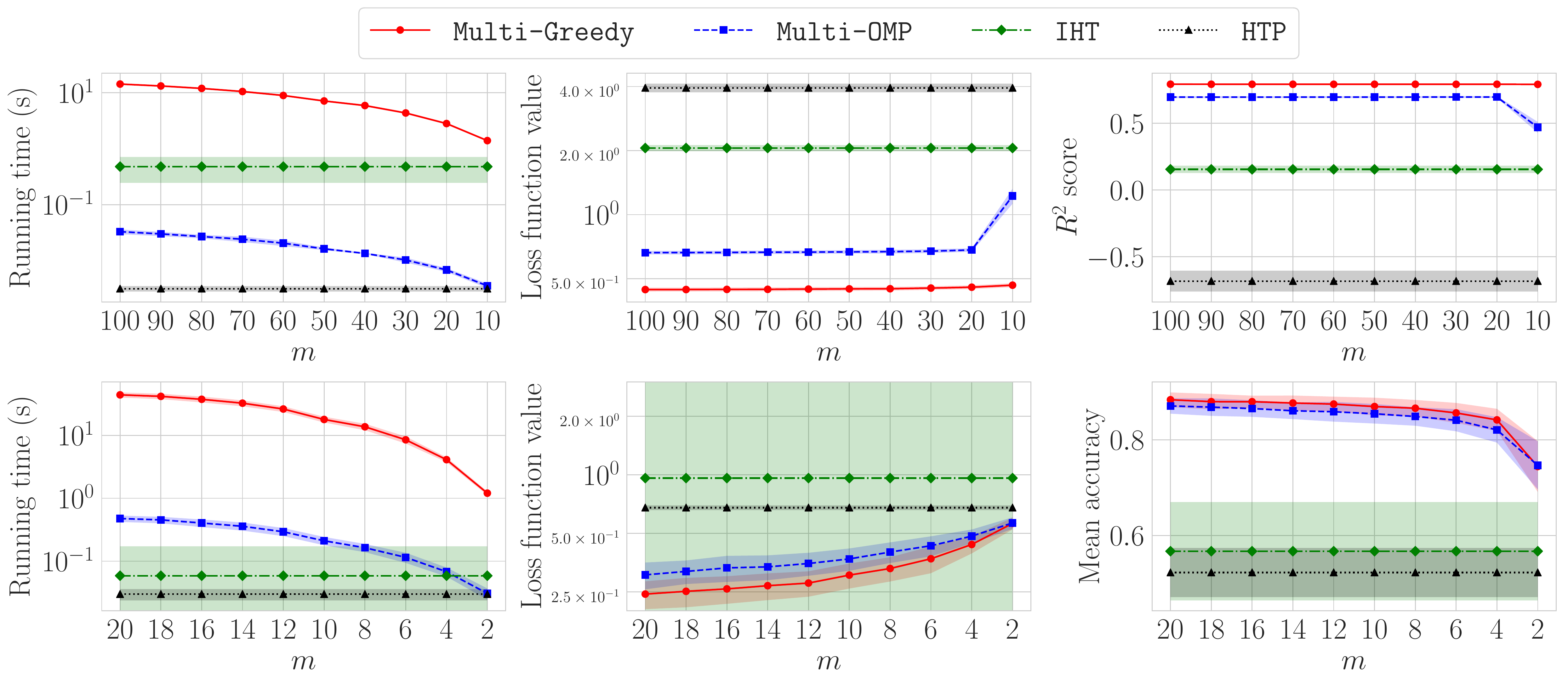}
	\caption{
		Results  of $\ell_0$-constrained Minimization with Various $m$ Values. 
		Top (bottom) figures present those of 
		regression (classification) instances. 
		Running times and loss function values are 
		shown with semi-log plots.  
		Each curve and error band indicate 
		the average and standard deviation, 
		respectively, 
		calculated over 
		$100$ instances. 
	}
	\label{fig:real_plot}
\end{figure*}
We evaluate 
the multi-stage algorithms 
via experiments with two kinds instances:  
$\ell_0$-constrained minimization 
and LP with a cardinality constraint. 
We use Python3 to implement the algorithms, 
and we conduct experiments 
on a 64-bit macOS 
machine
with 3.3GHz Intel Core i7 CPUs and 16 GB RAM. 
All the algorithms considered below
can be accelerated via randomization \citep{li2016stochastic,khanna2017scalable}, 
but 
to simplify the comparisons 
we here do not employ such techniques.

\subsubsection{$\ell_0$-constrained Minimization}\label{subsubsec:real}

We use two instances with the real-world dataset available at PMLB \citep{olson2017pmlb}. 
The first is a sparse regression instance 
with the square loss, 
$\ls(\xb)=\frac{1}{2n}\|\yb - \A\xb\|_2^2$, 
where $\A\in\R^{n\times\d}$ and $\yb\in\R^n$ 
are obtained from ``satellite\_image'' dataset. 
We use the 1st and 2nd order polynomial features; 
as a result, 
we have 
$d=666$ features and 
a sample of size $N=6435$.  
We 
set $\k=100$. 
The second is a sparse classification instance. 
We use the regularized logistic loss, 
$\ls(\xb)=\frac{1}{n}\sum_{i\in[n]}\log({1+\exp(-\yb_i (\A\xb)_i})) + \frac{\lambda}{2}\|\xb\|_2^2$, 
where $\A\in\R^{n\times\d}$ and $\yb\in\R^n$ 
are obtained from ``hill\_valley\_with\_noise'' dataset. 
The dataset has 
$d=100$ features and 
a sample of size $N=1212$. 
We let $\lambda=0.01$ and $\k=20$.  
For each instance, 
we randomly split the sample into training and test data 
of sizes $\ceil{N/2}$ and $\floor{N/2}$, respectively; 
we thus create $100$ random instances. 
We consider multi-stage algorithms with various numbers of iterations, 
$m=\k, 0.9\k, \dots, 0.1\k$ 
($m=\k$ corresponds to standard \greedy{}/\omp{}).  
We set  
$b_1,\dots, b_{\k-m\floor{\k/m}}$ at $\ceil{k/m}$ 
and 
the rest at $\floor{k/m}$. 
We use two baselines based on the projected gradient method: 
iterative hard thresholding (\iht{}) \citep{jain2014iterative} 
and hard thresholding pursuit (\htp{}) \citep{yuan2018gradient}. 
We continue their iterations until
the decrease in $\ls(\cdot)$ value 
becomes smaller than $10^{-5}$. 
We evaluate the algorithms with 
running times, loss function values, 
$R^2$ scores (for regression), 
and 
mean accuracy (for classification); 
the last two are defined by the corresponding 
scikit-learn score functions.

Figure~\ref{fig:real_plot} summarizes the results. 
We see that the multi-stage algorithms speed up as $m$ decreases; 
in particular, 
\momp{} becomes as fast as \htp{}. 
In the regression instances, 
multi-stage algorithms achieve better  
loss function values and $R^2$ scores than the baselines. 
Other than for \momp{} with $m=10$, 
the decrease in $m$ has negligible effects on 
loss function values and $R^2$ scores. 
In the classification instances, 
the loss function values of the multi-stage algorithms increase as $m$ decreases, 
but they are smaller on average than those of \iht{} and \htp{}. 
The multi-stage algorithms also achieve 
better mean accuracy than the baselines. 
To conclude, 
by using the multi-stage approach, 
\greedy{} and \omp{} can become faster while outperforming the baselines. 
When addressing large-scale instances in practice, it would be effective to try multi-stage algorithms 
with a small $m$ and increase it until an acceptable solution is obtained. 

As regards solution quality, 
the performance gap between the 
greedy-style algorithms (\mgreedy{} and \momp{}) 
and the baselines (\iht{} and \htp{}) 
can partially be explained in terms of 
the restricted condition number. 
For example, \iht{} requires $\k\ge \Omegarm(\kappa_{2\k+\k^*}^2)$ 
to achieve $\epsilon$-errors~\citep{jain2014iterative}, 
while \momp{} requires $\k\ge \Omegarm(\kappa_{\k+\k^*})$ 
as implied in \Cref{thm:mo}. 
This suggests that 
greedy-style algorithms can be more resistant 
to being ill-conditioned (or a large restricted condition number), 
which is often the case with 
real-world instances; 
hence the better performance of the greedy-style algorithms. 
\Cref{a_subsec:experiments_synthetic} 
presents further experiments with 
well- and ill-conditioned instances. 

\subsection{LP with a Cardinality Constraint}
We use synthetic optimal production planning instances. 
We let 
$\Pl=\{\xb\in\R^\d\relmid{}\A\xb\le\bvec, \zeros\le\xb\le\ones \}$.  
Each entry of $\A\in\R^{m\times \d}$ 
and $\cb\in\R^{\d}$ 
is drawn from the uniform distribution on $[0,1]$. 
We set 
$\d=50$, $m=100$, and $\bvec=0.5\k\times\ones$. 
We consider various sparsities $\k=5,10,\dots,50$; 
for each $\k$, we randomly generate $100$ instances as above. 
We consider \mgreedy{} with $m=2$ and $m=5$,  
denoted by \mgreedy{-}$2$ and \mgreedy{-}$5$, respectively. 
As baselines, 
we employ \greedy{} and \random, 
which chooses $\k$ elements from $\dset$ uniformly at random. 

\begin{figure}
	\centering
	\includegraphics[width=1.0\linewidth]{./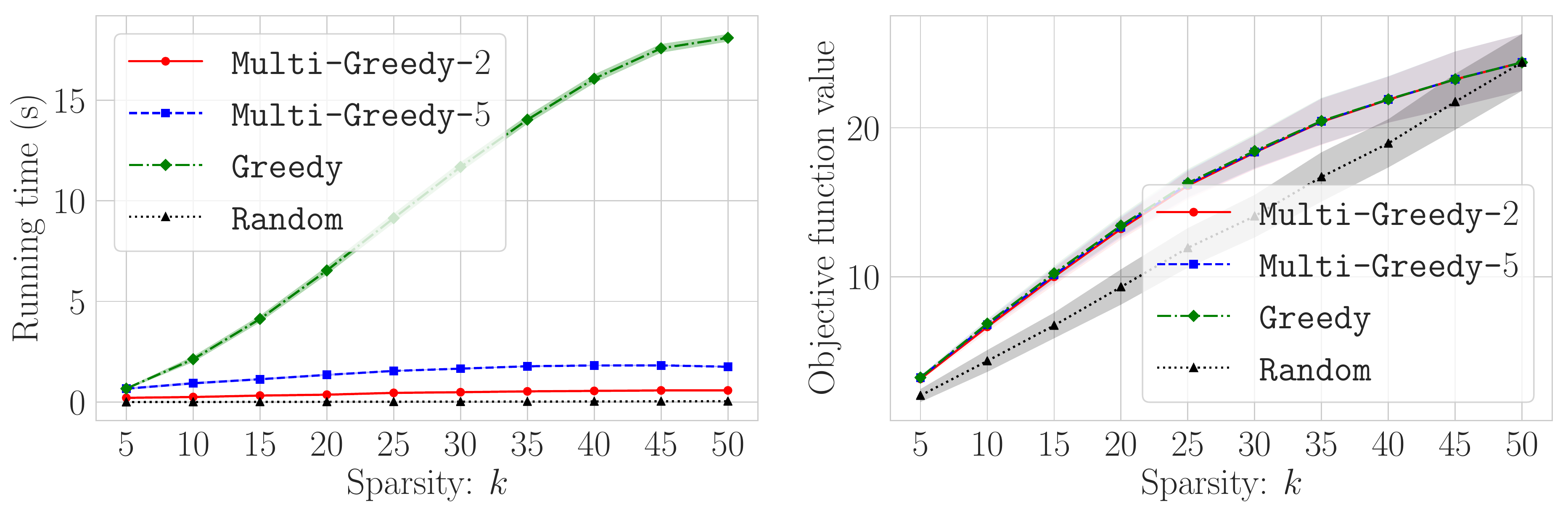}
	\caption{
		Results of LP with a Cardinality Constraint. 
		Each curve (error band) indicates 
		the average (standard deviation) 
		calculated over 
		$100$ instances. 
	}
	\label{fig:lp_plot}
\end{figure}
\Cref{fig:lp_plot} shows the results. 
We see that \mgreedy{} 
algorithms run far faster than 
\greedy{}, 
and they achieve almost the same 
objective values as those of \greedy{}. 
Namely, 
for optimal production planning instances, 
the multi-stage strategy can accelerate 
\greedy{} at a very slight sacrifice of solution quality.

\section{THEORETICAL PROPERTIES}\label{sec:theoretical}
We study theoretical properties of WMM: 
In \Cref{sec:fpt}
we show that WMM is  fixed-parameter tractable (FPT)  under certain conditions, 
and 
in \Cref{sec:hard}
we 
prove that no polynomial-time algorithms can improve 
the $(1-e^{-\br_{\Ss,\k}})$-approximation guarantee 
even if SBR and SPR are bounded by some constants.  

\subsection{Fixed-parameter Tractability}\label{sec:fpt}
Here we discuss the computation cost of solving WMM almost optimally.
If we are to find an optimal solution for WMM, 
a naive approach is exhaustive search; 
i.e., we examine $\F(\Ss)$ 
for all $\Ss\subseteq\dset$ of size $\k$. 
This, however, incurs $\Omegarm(\d^\k)$ computation cost, 
which becomes too large as the instance size, $\d$, increases. 
Taking this into account, 
the following question arises: 
Can we solve WMM (almost) optimally 
without requiring an $\Omegarm(\d^\k)$ computation cost? 
To answer this, 
we use the {parameterized complexity} framework \citep{cygan2015parametrized}. 
We regard a part of the input as a fixed parameter(s),  
which 
is denoted by ${\bf p}$ 
and 
does not include the instance size, $\d$. 
An algorithm is said to be FPT if it runs in $g({\bf p})\times\poly(\d)$ time,
where $g$ is a computable function of ${\bf p}$.
Note that, if $k$ is a fixed parameter, 
algorithms that require $\Omegarm(\d^k)$ time, 
including exhaustive search, are not FPT.
Here, 
regarding $\k$ as a part of the fixed parameters, 
we show that $\epsilon$-error solutions for WMM can be computed 
with a randomized FPT algorithm (\Cref{alg:fpt}), 
which was originally developed by \citet{skowron2017fpt} 
for a special case of 
monotone submodular maximization. 
\Cref{alg:fpt} performs
$\singlerun$,
a randomized variant of \greedy{},
$T$ times
and returns the best solution. 
We can show that it enjoys the following guarantee for WMM:   
\begin{algorithm}[tb]
	{
		\caption{Randomized FPT algorithm}
		\label{alg:fpt}
		\begin{algorithmic}[1]
			\State 
			\begin{tabular}{@{}p{20cm}}
				Execute \singlerun~$T$ times
				and return the best solution.
			\end{tabular}
			\Function{\singlerun}{}
			\State $\Ss_{0}\gets\emptyset$
			\For{$i=1,\dots,k$}
			\State
			\begin{tabular}{@{}p{20cm}}
				Choose $j\in\dset\bs\Ss_{i-1}$ randomly with probability 
				$\propto\Fdel{j}{\Ss_{i-1}}$.
			\end{tabular}
			\State $\Ss_{i}\gets\Ss_{i-1}\cup\{j\}$
			\EndFor
			\State \return~$\Ss_k$
			\EndFunction
		\end{algorithmic}
	}
\end{algorithm}
\begin{thm}\label{thm:fpt}
	Assume $\F$ to be ($\brkk, \pr_{\k,\d}$)-WM. 
	Let $\Sso$ be an optimal solution for problem~\eqref{prob:main_F} 
	and 
	$\tilde{\F}\coloneqq\F(\dset)-\F(\Sso)$.   
	For any $\epsilon>0$,
	if 
	\[
	T\ge\Tfpt,
	\] 
	then \Cref{alg:fpt} returns solution $\Ss$ satisfying
	$\F(\Ss)\ge\F(\Sso)-\epsilon$ with a probability of at least $1-\delta$.
\end{thm}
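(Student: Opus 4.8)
The plan is to adapt the analysis of \citet{skowron2017fpt} to the weakly modular setting, replacing submodularity by the two ratios $\brkk$ and $\pr_{\k,\d}$. Everything reduces to a single claim about one call to $\singlerun$: \emph{with probability at least $q^\k$, where $q\coloneqq\brkk\pr_{\k,\d}\cdot\frac{\epsilon}{\tilde{\F}+\epsilon}$, the set $\Ss_\k$ it returns satisfies $\F(\Ss_\k)\ge\F(\Sso)-\epsilon$.} Granting this, the theorem follows by standard amplification: the $T$ runs are independent, so the probability that none of them produces an $\epsilon$-error solution is at most $(1-q^\k)^T\le\exp(-q^\k T)\le\exp(-\log\delta^{-1})=\delta$ as soon as $T\ge q^{-\k}\log\delta^{-1}$, which is ensured by the hypothesis $T\ge\Tfpt$; since the algorithm returns the best run, it is $\epsilon$-optimal with probability at least $1-\delta$. (If $\brkk=0$ or $\pr_{\k,\d}=0$ the required $T$ is infinite and there is nothing to prove, so assume both are positive.)

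To prove the single-run claim, fix a run and set $g_i\coloneqq\F(\Sso)-\F(\Ss_{i-1})$, the gap before the $i$-th pick. If $g_i\le\epsilon$ for some $i$, monotonicity of $\F$ keeps the gap below $\epsilon$ thereafter and the run already succeeds; so consider a step with $g_i>\epsilon$ (note this forces $\Sso\setminus\Ss_{i-1}\neq\emptyset$, as otherwise $g_i\le0$). I would bound the probability of a ``good'' pick $j\in\Sso\setminus\Ss_{i-1}$ as a ratio of partial to total marginal gains. For the numerator, monotonicity gives $\Fdel{\Sso\setminus\Ss_{i-1}}{\Ss_{i-1}}\ge\F(\Sso)-\F(\Ss_{i-1})=g_i$, and invoking the SBR inequality with $L=\Ss_{i-1}$ and $\Ss=\Sso\setminus\Ss_{i-1}$ — where, since $|\Ss_{i-1}|\le\k$ and $|\Sso\setminus\Ss_{i-1}|\le\ko\le\k$, the relevant ratio $\br_{\Ss_{i-1},|\Sso\setminus\Ss_{i-1}|}$ is at least $\brkk$ — yields $\sum_{j\in\Sso\setminus\Ss_{i-1}}\Fdel{j}{\Ss_{i-1}}\ge\brkk g_i$. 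For the denominator, invoking the SPR inequality with $L=\Ss_{i-1}$ and $\Ss=\dset\setminus\Ss_{i-1}$ — where $|\Ss|\le\d$, so the relevant ratio is at least $\pr_{\k,\d}$ — gives $\sum_{j\in\dset\setminus\Ss_{i-1}}\Fdel{j}{\Ss_{i-1}}\le\pr_{\k,\d}^{-1}\bigl(\F(\dset)-\F(\Ss_{i-1})\bigr)=\pr_{\k,\d}^{-1}(\tilde{\F}+g_i)$. Hence the $i$-th pick lands in $\Sso\setminus\Ss_{i-1}$ with probability at least $\brkk\pr_{\k,\d}\cdot\frac{g_i}{\tilde{\F}+g_i}\ge q$, using that $x\mapsto x/(\tilde{\F}+x)$ is nondecreasing and $g_i>\epsilon$; if the denominator were zero the SBR bound would force $g_i\le0$, so this case does not occur.

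Finally, consider the event that at every step $i$ with $g_i>\epsilon$ the pick lands in $\Sso\setminus\Ss_{i-1}$; by the chain rule and the per-step bound its probability is at least $q^\k$. On this event the run succeeds: either the gap drops to $\le\epsilon$ at some step (done, as above), or every one of the $\k$ picks adds a fresh element of $\Sso$, in which case after at most $\ko\le\k$ picks $\Ss_i\supseteq\Sso$ and $g_{i+1}\le0$, contradicting $g_{i+1}>\epsilon$ — so the gap must in fact drop to $\le\epsilon$. I expect the only delicate point to be the bookkeeping in the per-step estimate: selecting the index sets $(U,s)$ in the SBR/SPR definitions so that the monotonicity relations $\br_{U,s}\ge\brkk$ and $\pr_{U,s}\ge\pr_{\k,\d}$ genuinely apply at every iteration, together with the boundary case of a degenerate sampling distribution; the amplification step is routine once the single-run bound is in place.
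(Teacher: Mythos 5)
Your proposal is correct and follows essentially the same route as the paper: the same per-step lemma bounding the probability of picking an element of $\Sso$ by $\brkk\pr_{\k,\d}\cdot\Fdel{\Sso}{\Ss_{i-1}}/\Fdel{\dset}{\Ss_{i-1}}$ via the SBR and SPR inequalities, the same two-case split on whether the gap exceeds $\epsilon$, and the same amplification over $T$ independent runs. Your additional bookkeeping (the zero-denominator case and the explicit chain-rule argument) only tightens points the paper treats tersely.
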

The key to proving \Cref{thm:fpt} is the fact that 
the probability of choosing $j\in\Sso$ in each iteration 
can be lower bounded thanks to 
the weak modularity. 
We present the proof in \Cref{a_sec:fpt} for details.

Since  $\F$ can be evaluated in $\poly(d)$ time 
as assumed in \Cref{subsec:background}, 
\Cref{alg:fpt} is FPT 
if we regard ${\mathbf{p}}\coloneqq (k,\brkk,\pr_{\k,\d}, \tilde{\F}, 
\epsilon,\delta)$ as fixed parameters. 
Note that, 
since $\tilde{\F}\le\F(\dset)$, 
a sufficiently large $T$ can be computed once 
we obtain lower bounds of SBR and SPR, 
which are available for various applications as in \Cref{sec:applications}. 

While \Cref{alg:fpt} is not so practical, 
\Cref{thm:fpt} is beneficial for studying the tractability of WMM instances. 
In particular, we can obtain an interesting corollary 
related to $\ell_0$-constrained minimization from the theorem. 
Let $\xb^*\coloneqq\argmin_{\|\xb\|_0\le\k}\ls(\xb)$ 
be an optimal solution. 
As shown by~\citet{shalev2010trading}, 
\greedy{} can find $\xb$ such that $\ls(\xb)\le\ls(\xb^*) + \epsilon$
if $\xb$ is allowed to have $\Omegarm(\kappa\log\epsilon^{-1})$ non-zeros; 
i.e., there is a trade-off between sparsity $\|\xb\|_0$ and accuracy $\epsilon$. 
In practice, however, 
$\xb$ is not always allowed to have sufficiently many non-zeros. 
For instance, 
when performing feature selection for medical analysis, 
the number of features used for predicting a patient's status 
is limited since to use many features 
requires the patient to undergo many medical tests, 
which is a considerable burden. 
Hence, 
to reveal whether we can solve $\ell_0$-constrained minimization 
almost optimally with limited sparsity $\k\ge\|\xb\|_0$ 
is an important research subject. 
The following corollary, 
which is 
obtained from \Cref{thm:fpt} and 
the lower bounds of SBR/SPR (\Cref{sec:applications}), 
implies that it is possible at the cost of FPT computation time; 
i.e, 
there is a time--accuracy trade-off.

\begin{cor}\label{cor:fpt} 
	Let 
	$\F(\Ss)=\ls(0) - \min_{\supp(\xb)\subseteq\Ss}\ls(\xb)$ 
	for any $\Ss\subseteq\dset$ 
	and assume $\ls$ to be 
	$\m{2k}$-RSC, 
	$\m{k+1}$-RSC, 
	$\M{k+1,1}$-RSM, 
	and 
	$\M{d}$-RSM. 
	Let  
	${\tilde\ls}\coloneqq \ls(\xb^*) - \min_{\xb\in\R^\dset}\ls(\xb)$. 
	If Algorithm~\ref{alg:fpt} runs with 
	\[
	T\ge
	\left\lceil\left(
	\frac{{\M{k+1,1}}}{{\m{2k}}}
	\cdot
	\frac{{\M{d}}}{{\m{k+1}}}
	\cdot
	\frac{{\tilde\ls}+\epsilon}{\epsilon}
	\right)^k
	\log\delta^{-1}
	\right\rceil
	\]	
	and outputs $\Ss$, 
	then 
	$\xb=\argmin_{\supp(\xb^\prime)\subseteq\Ss}\ls(\xb^\prime)$ 
	satisfies 
	$\ls(\xb)\le\ls(\xb^*)+\epsilon$ 
	with a probability of at least $1-\delta$.
\end{cor}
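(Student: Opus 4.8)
The plan is to derive Corollary~\ref{cor:fpt} directly from Theorem~\ref{thm:fpt} by instantiating the abstract WMM bound at the set function $\F(\Ss)=\ls(0)-\min_{\supp(\xb)\subseteq\Ss}\ls(\xb)$ and substituting the SBR/SPR lower bounds established for $\ell_0$-constrained minimization in Section~\ref{sec:applications}. First I would recall that for this $\F$, the discussion in Section~\ref{sec:applications} (and \Cref{a_subsec:ell_0}) gives $\br_{\k,\k}\ge \m{2\k}/\M{\k+1,1}$ and $\pr_{\k,\d}\ge \m{\k+1}/\M{\d}$; note that $\pr_{\k,\d}$ is the quantity appearing in the hypothesis of \Cref{thm:fpt}, since there $\s=\d$ and $\Us$ ranges over sets of size $\le\k$. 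Thus $\F$ is $(\brkk,\pr_{\k,\d})$-WM with these explicit lower bounds, so \Cref{thm:fpt} applies.

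Next I would translate the error guarantee. By definition $\F(\Sso)=\ls(0)-\min_{\supp(\xb)\subseteq\Sso}\ls(\xb)=\ls(0)-\ls(\xb^*)$, where $\xb^*=\argmin_{\|\xb\|_0\le\k}\ls(\xb)$ and $\Sso=\supp(\xb^*)$ is an optimal solution for problem~\eqref{prob:main_F}. Similarly, for the output set $\Ss$, $\F(\Ss)=\ls(0)-\ls(\xb)$ with $\xb=\argmin_{\supp(\xb^\prime)\subseteq\Ss}\ls(\xb^\prime)$. Hence the conclusion $\F(\Ss)\ge\F(\Sso)-\epsilon$ of \Cref{thm:fpt} is exactly $\ls(\xb)\le\ls(\xb^*)+\epsilon$, which is the desired conclusion of the corollary. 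It remains only to check that the quantity $\tilde\F=\F(\dset)-\F(\Sso)$ from \Cref{thm:fpt} is bounded by the $\tilde\ls$ in the corollary: indeed $\F(\dset)=\ls(0)-\min_{\xb\in\R^\dset}\ls(\xb)$, so $\tilde\F=\big(\ls(0)-\min_{\xb}\ls(\xb)\big)-\big(\ls(0)-\ls(\xb^*)\big)=\ls(\xb^*)-\min_{\xb\in\R^\dset}\ls(\xb)=\tilde\ls$.

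Finally I would assemble the iteration count. \Cref{thm:fpt} requires $T\ge\big\lceil(\,\bgk\cdot\tfrac{\tilde\F+\epsilon}{\epsilon})^k\log\delta^{-1}\big\rceil$, i.e. $T\ge\big\lceil(\tfrac{1}{\brkk\,\pr_{\k,\d}}\cdot\tfrac{\tilde\F+\epsilon}{\epsilon})^k\log\delta^{-1}\big\rceil$. Plugging in $\brkk\ge\m{2\k}/\M{\k+1,1}$ and $\pr_{\k,\d}\ge\m{\k+1}/\M{\d}$ gives $\tfrac{1}{\brkk\pr_{\k,\d}}\le \tfrac{\M{\k+1,1}}{\m{2\k}}\cdot\tfrac{\M{\d}}{\m{\k+1}}$, and using $\tilde\F=\tilde\ls$ we see that the threshold stated in the corollary dominates the one required by \Cref{thm:fpt}; since larger $T$ only helps, the probability-$(1-\delta)$ guarantee is preserved. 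I expect the only genuine obstacle to be bookkeeping: making sure the parameter indices on the RSC/RSM constants line up correctly — in particular that the SPR bound needed is the one with $\s=\d$ (giving $\M{\d}$ and $\m{\k+1}$ in the denominators) rather than an $\s$-restricted version, and that monotonicity of $\brus,\pruk$ in $\Us$ (noted in Section~\ref{subsec:background}) is what lets us replace the $\Ss$-dependent ratios by the worst-case $\k$-subscripted ones. Everything else is a direct substitution into \Cref{thm:fpt}.
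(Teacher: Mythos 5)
Your proposal is correct and follows exactly the route the paper intends: instantiate \Cref{thm:fpt} with the SBR/SPR lower bounds $\brkk\ge\m{2\k}/\M{\k+1,1}$ and $\pr_{\k,\d}\ge\m{\k+1}/\M{\d}$ from \Cref{a_prop:brpr}, identify $\tilde\F=\tilde\ls$ and $\F(\Ss)\ge\F(\Sso)-\epsilon$ with $\ls(\xb)\le\ls(\xb^*)+\epsilon$, and note that the stated $T$ dominates the theorem's threshold. The bookkeeping on the subscripts (in particular using $\s=\d$ for the SPR bound so that $\M{\d}$ and $\m{\k+1}$ appear) is handled correctly.
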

Namely, if we take ${\mathbf{p}}\coloneqq (k,\m{2k}, \m{k+1}, \M{k+1,1}, \M{d}, {\tilde \ls}, \epsilon,\delta)$ 
to be fixed parameters, 
$\epsilon$-error solutions 
can be computed by using \Cref{alg:fpt} with a high probability. 
Note that, 
unlike the above guarantee of \greedy{}, 
\Cref{cor:fpt} does not require $\|\xb\|_0$ to be sufficiently large. 

\subsection{Hardness Result}\label{sec:hard} 
We here prove the following hardness of improving the ($1-e^{-\br_{\Ss,\k}}$)-approximation guarantee for WMM: 
\begin{thm}\label{thm:hard}
	Even if $\F$ has SBR $\brkk=1$ and SPR $\prkk\ge1/2-o(1)$, 
	no algorithms that evaluate $\F$ only on polynomially many subsets can 
	achieve an approximation guarantee that exceeds 
	$1-e^{-1}=1-e^{-\brkk}$ for problem~\eqref{prob:main_F} in general.  
\end{thm}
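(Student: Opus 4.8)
The plan is to prove \Cref{thm:hard} by a value-oracle indistinguishability argument in the style of \citet{nemhauser1978best}. I would exhibit a distribution over monotone, normalized, submodular functions that additionally have SPR at least $1/2-o(1)$, and show that for a typical draw no algorithm issuing only $\poly(\d)$ value queries can return a feasible set whose value exceeds a $(1-e^{-1}+o(1))$ fraction of the optimum. The distribution is indexed by a ``planted'' optimum: for every $C\subseteq\dset$ with $|C|=\k$ there is a function $\F_C$, and there is one ``null'' function $\Fbar$, and the construction must satisfy three properties. (P1) Every $\F_C$ and $\Fbar$ is a weighted coverage function in which each universe element lies in at most two of the $\d$ groups; by the coverage bound behind $\prus\ge 1/\bo_\s$ of \Cref{sec:applications} this forces $\prkk\ge1/2$, and coverage functions are submodular so $\brkk=1$. (The $o(1)$ slack in the statement absorbs a vanishing fraction of triply-covered elements, or a tiny perturbation of the weights, that the construction may require.) (P2) $\F_C(C)=\max_{|\Ss|\le\k}\F_C(\Ss)$ is the (large) optimum of the planted instance, while $\max_{|\Ss|\le\k}\Fbar(\Ss)\le(1-e^{-1}+o(1))\F_C(C)$. (P3) $\F_C(\Ss)=\Fbar(\Ss)$ whenever $\Ss$ is ``uncorrelated'' with $C$ (meaning $|\Ss\cap C|$ does not substantially exceed its expectation $|\Ss|\,\k/\d$), and for uniformly random $C$ any fixed $\Ss$ is uncorrelated except with probability $\d^{-\omega(1)}$.

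For the construction I take $\d$ large and $\k=\k(\d)$ growing with $\d$ (say $\k=\lceil\sqrt{\d}\rceil$), so that $\binom{\d}{\k}$ is superpolynomial and hypergeometric tail bounds give the $\d^{-\omega(1)}$ estimate in (P3). The ``graph-like'' universe -- each element incident to at most two groups -- is exactly what pins the covering multiplicity at two and hence the SPR at $1/2$; concretely one pictures the $\d$ groups as vertices and most universe elements as weighted edges, with $\F(\Ss)$ equal to the weight of edges meeting $\Ss$. In $\F_C$ the $\k$ groups indexed by $C$ jointly meet the whole universe (so $C$ is a vertex cover), whereas in $\Fbar$ the groups are arranged, following the usual ``$\k$ nearly disjoint pieces versus $\k$ heavily overlapping pieces'' pattern, so that every $\k$-subset of groups misses an $e^{-1}$-fraction of the total weight; this produces the $1-(1-1/\k)^\k\to1-e^{-1}$ factor in (P2). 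The delicate point is making $\F_C$ literally coincide with $\Fbar$ on all uncorrelated $\Ss$: I would do this by letting a $C$-group act like a generic group except on the portion of the universe reachable only from $C$-groups, and placing that portion so that no query can touch it unless the query already contains a constant fraction of $C$.

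Given such a construction the argument is routine. Let $\Ss_0$ be the (fixed) feasible set output when the deterministic algorithm is run on $\Fbar$. Run the algorithm instead on $\F_C$ for uniformly random $C$: by the standard inductive coupling with the null run -- at each step the next query is the fixed set determined by the previous null answers, which by (P3) is answered identically by $\F_C$ except with probability $\d^{-\omega(1)}$ -- the two runs agree on all $q=\poly(\d)$ queries, and hence the output on $\F_C$ equals $\Ss_0$, except with probability $q\,\d^{-\omega(1)}$. Since moreover $\Ss_0$ is correlated with a uniformly random $C$ only with probability $\d^{-\omega(1)}$, with probability $1-o(1)$ over $C$ both good events hold, and then $\F_C(\Ss_0)=\Fbar(\Ss_0)\le\max_{|\Ss|\le\k}\Fbar(\Ss)\le(1-e^{-1}+o(1))\F_C(C)=(1-e^{-1}+o(1))\max_{|\Ss|\le\k}\F_C(\Ss)$. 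Thus no deterministic algorithm can beat $1-e^{-1}$ in expectation over $C$, and Yao's principle extends this to all randomized algorithms; combined with $\brkk=1$ and $\prkk\ge1/2-o(1)$ from (P1), this is the theorem.

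The main obstacle is securing (P1)--(P3) simultaneously. The classical ``random subsets'' coverage instance gives (P2) and (P3) easily but has worst-case covering multiplicity $\Thetarm(\log\d/\log\log\d)$, hence only $\prkk=\Thetarm(\log\log\d/\log\d)$; driving the multiplicity down to essentially two is what buys $\prkk\ge1/2-o(1)$, yet a genuinely multiplicity-two instance makes both the clean $1-e^{-1}$ gap and the exact agreement of $\F_C$ with $\Fbar$ on uncorrelated sets harder to arrange, and it is this balancing act that accounts for the $o(1)$ loss in the SPR bound. The remaining ingredients -- submodularity and $\brkk=1$, the hypergeometric concentration in (P3), the coupling argument, and the Yao reduction -- are all standard.
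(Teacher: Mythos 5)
Your high-level skeleton (a Nemhauser--Wolsey-style value-oracle indistinguishability argument around a hidden $k$-set, with the extra burden of certifying $\brkk=1$ and $\prkk\ge 1/2-o(1)$ for the hard family) is the same as the paper's. But the concrete instantiation you propose --- weighted coverage functions in which every universe element lies in at most two groups, so that $\prus\ge 1/\bo_\s=1/2$ comes for free --- cannot work, and the obstruction is structural rather than a matter of delicate balancing. A multiplicity-$\le 2$ coverage function is exactly a degree-$2$ multilinear polynomial: writing $a_j=\F(\{j\})$ and $b_{ij}=\F(\{i\})+\F(\{j\})-\F(\{i,j\})\ge 0$ for the doubly-covered weight, one has $\F(\Ss)=\sum_{j\in\Ss}a_j-\sum_{\{i,j\}\subseteq\Ss}b_{ij}$ for \emph{every} $\Ss$. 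Hence the entire function is reconstructed from the $\d+\binom{\d}{2}=\poly(\d)$ queries on singletons and pairs, so no superpolynomial query lower bound can exist for this class at all; an algorithm with unbounded post-query computation recovers the planted optimum exactly, and even a polynomial-time algorithm can beat $1-e^{-1}$ here, since maximizing such a function under a cardinality constraint is max-$k$-vertex-cover, which admits a $3/4$-approximation. Quantitatively, the same structure kills your gap (P2): nonnegativity of private weights forces $\sum_{i\ne j}b_{ij}\le a_j$, so the total pairwise-overlap weight is at most $\frac{1}{2}\sum_j a_j$, and an averaging count shows a typical $k$-subset $\Ss$ (with $\k=O(\sqrt{\d})$, as your hypergeometric hiding step requires) has internal overlap $B(\Ss)=O(\k^2/\d)\cdot\max_j a_j=o(\k\max_j a_j)$; picking the top $k$ singleton values therefore already achieves $(1-o(1))$ of the optimum. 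Allowing a vanishing fraction of triply-covered elements does not rescue this: the bound $\prus\ge1/\bo_\s$ degrades to $1/3$ the moment any element has multiplicity $3$, so you would have to abandon the very mechanism you invoke for the SPR bound.

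The paper avoids this trap by \emph{not} using a coverage function. It takes the symmetric function $\F(\Ss)=G_r^k(|\Ss\cap\Ms|,|\Ss|)$ built from $H^\ell(m,n)=\ell^\ell-\ell^{\ell-1}(\ell-m)(1-1/\ell)^{n-m}$ with $\ell=\k-r+1$, which is genuinely exponential in $n$ (so it is not determined by low-order queries), establishes your (P2)--(P3) analogues as Properties~2--5 exactly as in Nemhauser--Wolsey, and then does the real work of the theorem: a direct six-case computation showing $\prkk\ge\frac{1}{2}-\frac{1}{2}\cdot\frac{r-1}{2\k-r+1}$ by reducing each case to a three-variable optimization (Lemma~A.6), together with submodularity on all sets of size at most $2\k$ to get $\brkk=1$. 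The lesson is that the informational hardness requires high-order structure in $\F$, and that same structure forces the SPR bound to be proved by hand rather than inherited from a bounded covering multiplicity; to repair your proof you would need to replace the multiplicity-$2$ device with a direct SPR analysis of whatever hard family you construct, which is essentially the paper's argument.
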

Note that the significance of \Cref{thm:hard} comes from 
SPR $\prkk$ that can be bounded by a universal constant: 
When curvature $\curv$, 
which satisfies $\prkk\ge1-\curv$, 
is upper bounded by a universal constant smaller than $1$, 
then a strictly improved approximation ratio, 
$\frac{1}{\curv}(1-e^{-\curv\br_{\Ss,\k}})$,  
can be obtained thanks to \citep{bian2017guarantees}. 
Namely, \Cref{thm:hard} reveals a non-trivial theoretical gap 
between SPR $\prkk$ and curvature $\curv$. 

\begin{proof}[Proof sketch]
	We make a WM function that is hard to maximize approximately. 
	As with the proof of \citep{nemhauser1978best},  
	given unknown subset $\Ms$ of size $\k$, 
	we show that to achieve an approximation guarantee 
	that exceeds $1-e^{-\brkk}$ 
	is at least as hard as to find $\Ss$ such that $|\Ss\cap\Ms|>r$ and $|\Ss|\le \pk\coloneqq 2\k-r+1$, 
	where $r>0$ is any fixed integer; 
	note that this cannot be solved via polynomially many queries. 
	To show it, 
	we use $\F$ that satisfies the following conditions: 
	$\F(\Ss)$ value depends on $|\Ss|$ and $|\Ss\cap\Ms|$ for any $\Ss\subseteq\dset$ 
	and only on $|\Ss|$ if $|\Ss\cap\Ms|\le r$ or $|\Ss|>\pk$, 
	which, roughly speaking, 
	means that the information about $\F$ values is useless.  
	By using such function $\F$, we can obtain the hardness result. 
	The main difficulty remained in the above proof is to 
	show that $\F$ is WM. 
	In particular,  
	obtaining $\prkk\ge1/2-o(1)$ is the most challenging part.  
	To prove this, we first rewrite SPR as $\prkk=\min_{\Ls,\Ss\subseteq\dset}\left\{\frac{\Fdel{\Ss}{\Ls}}{\sum_{j\in \Ss}\Fdel{j}{\Ls}} \relmid{\Big} \Ls\cap \Ss=\emptyset, |\Ls|\le\k, |\Ss|\le\k \right\}$, 
	where we regard $0/0=1$. 
	Then, by carefully designing $\F$ and using the fact that $\F(\Ss)$ depends only on $|\Ss|$ and $|\Ss\cap\Ms|$, 
	we express $\prkk$ as the optimal value of 
	\begin{align}
	\minimize_{x,y,z\in\R}
	&
	\quad 
	\frac{z - (z - x)\left(1 - 1/\k \right)^{k(y-x)} }
	{x(1-z)+yz} 
	\\
	\subto
	&
	\quad 
	0\le x \le y \le \frac{\k}{\k-r+1}, 
	\quad 
	0 \le x \le z \le 1,
	\quad
	\text{and}
	\quad   
	y-z \le \frac{\d}{\k-r+1} - 1, 
	\end{align}
	which can be lower bounded by $\frac{1}{2} - \frac{1}{2}\cdot\frac{r-1}{2k-r+1}$. 
	By letting $\k$ increase with $\d$ and 
	setting $\d$ at a sufficiently large value, we complete the proof 
	(see, \Cref{a_subsec:hardness_proof} for the full proof).  
\end{proof}

Given solution $\Ss$ of \greedy{}, 
we always have $\br_{\Ss,\k}\ge\brkk$. 
Therefore, \Cref{thm:hard} 
implies that, 
even if $\prkk$ ($\le\pr_{\Ss,\k}$) is lower bounded 
by a value that can be arbitrarily close to $1/2$, 
no polynomial-time algorithms 
can improve the $(1-e^{-\br_{\Ss,\k}})$-approximation guarantee in general. 

We remark that it may be possible to improve 
the approximation ratio for some easier subclasses of WMM; 
for example, if $\br_{\d}$ ($\le\brkk$) and $\pr_{\d}$ ($\le\prkk$) 
are bounded, 
we may be able to obtain 
a better ratio than $1-e^{-\br_{\d}}$ 
by using $\pr_{\d}$. 
We discuss this topic in \Cref{a_subsec:discussion}. 
We also remark that \Cref{thm:hard} does not contradict 
the FPT result, \Cref{thm:fpt}, for the following reason: 
\Cref{thm:hard} is proved 
by using sparsity $k$ that increases with $d$, 
and we cannot regard such a $k$ as a fixed parameter.

\section{CONCLUSION}\label{sec:conclusion}
We studied WMM, a class of non-submodular maximization 
that can model various practical problems. 
We proved guarantees of multi-stage algorithms, 
which generalize and improve some existing results,  
and confirmed their effectiveness experimentally. 
We then proved 
the fixed-parameter tractability of WMM, 
which yields the time--accuracy trade-off for $\ell_0$-constrained minimization as a byproduct, and 
the hardness of improving the ($1-e^{-\br_{\Ss,\k}}$)-approximation guarantee.
Recent studies
\citep{khanna2017scalable,qian2018approximation} 
have provided various techniques for accelerating 
greedy algorithms, 
and 
greedy-style methods for many different 
settings have also been studied~\citep{bogunovic2018robust,fujii2018fast}. 
It will be interesting future work 
to study how to 
incorporate the multi-stage approach 
into those methods for further acceleration. 

{
	\bibliography{mybib}
	\bibliographystyle{apalike}
}

\appendix
\setcounter{equation}{0}
\setcounter{thm}{0}
\setcounter{prop}{0}
\setcounter{lem}{0}
\setcounter{algorithm}{0}
\renewcommand{\theequation}{A.\arabic{equation}}

\theoremstyle{theorem}
\newtheorem{alem}{Lemma}
\setcounter{alem}{0}
\renewcommand{\thealem}{A.\arabic{alem}}
\newtheorem{aprop}{Proposition}
\setcounter{aprop}{0}
\renewcommand{\theaprop}{A.\arabic{aprop}}

\clearpage

\onecolumn
\begin{center}
	{\fontsize{18pt}{0pt}\selectfont \bf Appendices}
\end{center}

In \Cref{a_sec:applications}, 
we derive the lower bounds of SBR and SPR for each application 
presented in \Cref{sec:applications}, 
and we also provide example instances where $\curv=1$ holds 
even if SBR and SPR are bounded. 
In \Cref{a_sec:multi}, we prove the guarantees of the multi-stage algorithms, 
and  
we also present additional experimental results with well- and ill-conditioned 
synthetic $\ell_0$-constrained minimization instances. 
In \Cref{a_sec:fpt}, we prove the guarantee of the FPT algorithm. 
In \Cref{a_sec:hardness}, we present the proof of the hardness result.

\section{Applications}\label{a_sec:applications}
We show that SBR and SPR are lower bounded 
for each application presented in~\Cref{sec:applications}. 
We also present example instances 
such that $\curv=1$ holds even if SBR and SPR are lower bounded; 
regarding $\ell_0$-constrained minimization, 
we show that {\it inverse curvature} $\icurv\in[0,1]$~\citep{bogunovic2018robust} can also become equal to $1$. 
Note that curvature $\curv$ and inverse curvature $\icurv$ of 
$\F$ are defined as the smallest scalars that satisfy
\begin{align}
{\Fdel{j}{\Ss\bs\{j\}\cup\Ms}}
\ge
(1-\curv)
{\Fdel{j}{\Ss\bs\{j\}}} 
& & 
\text{and} 
& & 
{\Fdel{j}{\Ss\bs\{j\}}}
\ge
(1-\icurv)
{\Fdel{j}{\Ss\bs\{j\}\cup\Ms}},  
\end{align} 
respectively, 
for any 
$\Ss,\Ms\subseteq\dset$ and $j\in\Ss\bs\Ms$. 
Function $\F$ is submodular (supermodular) 
iff
$\icurv=0$ ($\curv=0$).  
As shown in~\citep{bogunovic2018robust}, 
for any $\Us\subseteq\dset$ and $\s\in\Z_{>0}$, 
we have 
\begin{align}
\brus\ge 1-\icurv 
& & 
\text{and}
& &
\prus\ge 1-\curv.
\end{align}
Namely, 
while the bounded curvature (inverse curvature) implies bounded SPR (SBR), 
the opposite is not always true.

\subsection{$\ell_0$-constrained Minimization}\label{a_subsec:ell_0}
\paragraph{Lower Bounds of SBR and SPR}
We first introduce some definitions required in the following discussion. 
Given $\Omegarm\subseteq\R^\dset\times\R^\dset$,
we say $\ls$ is $\m{\Omegarm}$-RSC 
and $\M{\Omegarm}$-RSM 
if it satisfies 
\begin{equation}
\frac{\m{\Omegarm}}{2}\|\yb-\xb\|_2^2
\le \ls(\yb) - \ls(\xb)
-
\iprod{\nabla l(\xb)}{\yb-\xb}
\le
\frac{\M{\Omegarm}}{2}\|\yb-\xb\|_2^2 
\end{equation}
for all $(\xb,\yb)\in\Omegarm$. 
For convenience,
we define $\f(\xb)\coloneqq \ls(0)-\ls(\xb)$. 
Note that we have 
\[
\F(\Ss)
=\ls(0) - \min_{\supp(\xb)\subseteq\Ss}\ls(\xb)
=\max_{\supp(\xb)\subseteq\Ss}\f(\xb)
\]
for any $\Ss\subseteq\dset$.
If $\ls$ is
$\m{\Omegarm}$-RSC and $\M{\Omegarm}$-RSM,
then $\f$ is 
$\m{\Omegarm}$-restricted strong concave ($\m{\Omegarm}$-RSC)
and $\M{\Omegarm}$-restricted smooth ($\M{\Omegarm}$-RSM) as follows:
\begin{equation}
-\frac{\m{\Omegarm}}{2}\|\yb-\xb\|_2^2
\ge
\f(\yb) - \f(\xb) -
\iprod{\nabla \f(\xb)}{\yb-\xb}
\ge
-\frac{\M{\Omegarm}}{2}\|\yb-\xb\|_2^2 \label{a_def:rscrsm}
\end{equation}
for any $(\xb,\yb)\in\Omegarm$.
We employ the following definitions for convenience:
\begin{itemize}
	\item
	If~\eqref{a_def:rscrsm} holds with
	\[
	\Omegarm=\Omegarm_{\s_1, \s_2}\coloneqq
	\{(\xb,\yb) \relmid{} \|\xb\|_0\le \s_1, \|\yb\|_0\le \s_1, 
	\text{ and }
	\|\xb-\yb\|_0\le \s_2 \},
	\]
	we say $\f$ is
	$\km{\s_1}{\s_2}$-RSC and $\kM{\s_1}{\s_2}$-RSM.
	For simplicity, we define $\m{\s}\coloneqq\km{\s}{\s}$ and $\M{\s}\coloneqq\kM{\s}{\s}$.
	\item
	Given $\As,\Bs\subseteq \dset$,
	if~\eqref{a_def:rscrsm} holds with
	\[
	\Omegarm=\Omegarm_{\As,\Bs}\coloneqq
	\{(\xb,\yb) \relmid{} \supp(\xb)\subseteq \As, \supp(\yb)\subseteq \Bs \},
	\]
	we say $\f$ is
	$\m{\As,\Bs}$-RSC and $\M{\As,\Bs}$-RSM.
	\item
	Given $\As\subseteq\Bs\subseteq \dset$,
	if~\eqref{a_def:rscrsm} holds with
	\[
	\Omegarm=\tOmegarm_{\As,\Bs}\coloneqq
	\{(\xb,\yb) \relmid{} \supp(\xb)\subseteq \As, \supp(\yb)\subseteq \Bs, 
	\text{ and } 
	\supp(\yb-\xb)\subseteq\Bs\bs\As \},
	\]
	we say $\f$ is $\tm{\As,\Bs}$-RSC and $\tM{\As,\Bs}$-RSM.
\end{itemize}
Given any $\Omegarm^\prime$ and $\Omegarm$ 
satisfying $\Omegarm^\prime\subseteq\Omegarm$,
we can set $\m{\Omegarm^\prime}$ and $\M{\Omegarm^\prime}$
so that we have
$\m{\Omegarm^\prime}\ge\m{\Omegarm}$ and
$\M{\Omegarm^\prime}\le\M{\Omegarm}$, respectively.
In particular,
we often use the following inequalities:
\begin{itemize}
	\item
	For any $0\le \s_1^\prime\le \s_1$ and $0\le \s_2^\prime\le \s_2$ we have
	$\km{\s_1}{\s_2}\le\km{\s_1^\prime}{\s_2^\prime}$ and
	$\kM{\s_1}{\s_2}\ge\kM{\s_1^\prime}{\s_2^\prime}$.
	\item
	For any $\As,\Bs\subseteq\dset$, we have
	$\m{|\As\cup\Bs|}\le\m{\As,\Bs}$
	and
	$\M{|\As\cup\Bs|}\ge\M{\As,\Bs}$.
	\item
	For any $\As\subseteq\Bs\subseteq\dset$,
	we have,
	$\km{|\Bs|}{|\Bs\bs\As|}\le\tm{\As,\Bs}$ and
	$\kM{|\Bs|}{|\Bs\bs\As|}\ge\tM{\As,\Bs}$.
\end{itemize}
%
The following lemma  
is the key to obtaining the lower bounds of SBR and SPR, 
and we will also use it when proving the guarantee of \momp{}.   
A special case of the lemma is implicitly used in~\citep{elenberg2018restricted}, 
but we here state and prove it clearly for completeness and convenience. 
\begin{alem}
	\label{a_lem:fts}
	For any $\As\subseteq \dset$, 
	let $\bb{\As}\coloneqq \argmax_{\supp(\xb)\subseteq \As}\f(\xb)$. 
	For any disjoint $\As,\Bs\subseteq\dset$, 
	if $\f$ is 
	$\mab$-RSC and $\tMab$-RSM, 
	we have
	\[\frac{1}{2\tMab}\|\nabla f(\bb{\As})_{\Bs} \|^2_2\le\Fdel{\Bs}{\As}\le\frac{1}{2\mab}\|\nabla f(\bb{\As})_{\Bs} \|^2_2.\]
\end{alem}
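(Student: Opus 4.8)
The plan is to pass from the set function to the differentiable function $\f$ via $\F(\Ss)=\max_{\supp(\xb)\subseteq\Ss}\f(\xb)$, so that $\Fdel{\Bs}{\As}=\f(\bb{\As\cup\Bs})-\f(\bb{\As})$, and then trap this difference between two one‑parameter quadratic estimates: the restricted strong concavity \eqref{a_def:rscrsm} yields the upper bound and the restricted smoothness \eqref{a_def:rscrsm} yields the lower bound. The single structural fact I record at the start is the first‑order optimality condition at $\bb{\As}$: since $\bb{\As}$ maximizes the (restricted strongly concave, hence on $\{\xb:\supp(\xb)\subseteq\As\}$ strongly concave) function $\f$ over vectors supported on $\As$, this maximizer is unique and satisfies $\nabla\f(\bb{\As})_j=0$ for every $j\in\As$; equivalently $\nabla\f(\bb{\As})$ is supported outside $\As$. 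I also use $\nabla\f=-\nabla\ls$, so $\|\nabla\f(\bb{\As})_{\Bs}\|_2=\|\nabla\ls(\bb{\As})_{\Bs}\|_2$, matching the statement.

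For the upper bound I would apply the RSC inequality over $\Omegarm_{\As,\As\cup\Bs}$ (the admissible set for the $\mab$-RSC hypothesis, with $\mab=\m{\As,\As\cup\Bs}$) to the pair $\xb=\bb{\As}$, $\yb=\bb{\As\cup\Bs}$, both supported on $\As\cup\Bs$. Because $\nabla\f(\bb{\As})_{\As}=0$ and $\supp(\bb{\As})\subseteq\As$ is disjoint from $\Bs$, the cross term $\iprod{\nabla\f(\bb{\As})}{\bb{\As\cup\Bs}-\bb{\As}}$ collapses to $\iprod{\nabla\f(\bb{\As})_{\Bs}}{\bb{\As\cup\Bs}_{\Bs}}$, and discarding the non‑$\Bs$ coordinates in $\|\bb{\As\cup\Bs}-\bb{\As}\|_2^2$ only weakens the estimate. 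Maximizing the resulting concave quadratic in the free variable $\bb{\As\cup\Bs}_{\Bs}\in\R^{\Bs}$ gives $\Fdel{\Bs}{\As}=\f(\bb{\As\cup\Bs})-\f(\bb{\As})\le\frac{1}{2\mab}\|\nabla\f(\bb{\As})_{\Bs}\|_2^2$.

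For the lower bound I would not compare $\f$ directly at $\bb{\As\cup\Bs}$ but instead exhibit an explicit competitor: for $t>0$ set $\zb\coloneqq\bb{\As}+t\,\nabla\f(\bb{\As})_{\Bs}$. Then $\supp(\zb)\subseteq\As\cup\Bs$ and $\supp(\zb-\bb{\As})\subseteq\Bs=(\As\cup\Bs)\bs\As$, so $(\bb{\As},\zb)\in\tOmegarm_{\As,\As\cup\Bs}$ — this is precisely why the \emph{tilde} smoothness constant $\tMab=\tM{\As,\As\cup\Bs}$ is the one that applies. Applying the RSM half of \eqref{a_def:rscrsm} to this pair, and using that $\zb-\bb{\As}$ is supported on $\Bs$ so that the cross term equals $t\|\nabla\f(\bb{\As})_{\Bs}\|_2^2$, the optimality of $\bb{\As\cup\Bs}$ gives $\Fdel{\Bs}{\As}\ge\f(\zb)-\f(\bb{\As})\ge t\|\nabla\f(\bb{\As})_{\Bs}\|_2^2-\frac{\tMab}{2}t^2\|\nabla\f(\bb{\As})_{\Bs}\|_2^2$; taking $t=1/\tMab$ yields $\Fdel{\Bs}{\As}\ge\frac{1}{2\tMab}\|\nabla\f(\bb{\As})_{\Bs}\|_2^2$.

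The one‑variable quadratic optimizations are routine; the substantive point — and the place I expect to have to be careful — is the bookkeeping of which admissible set, and hence which RSC/RSM constant, each step invokes: the asymmetry between using $\Omegarm_{\As,\As\cup\Bs}$ (plain RSC) for the upper bound and $\tOmegarm_{\As,\As\cup\Bs}$ (tilde RSM) for the lower bound, together with the optimality identity $\nabla\f(\bb{\As})_{\As}=0$ that makes every cross term reduce to a pure multiple of $\|\nabla\f(\bb{\As})_{\Bs}\|_2^2$.
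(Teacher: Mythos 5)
Your proposal is correct and follows essentially the same route as the paper's proof: both directions use the RSC/RSM quadratic bounds together with the optimality of $\bb{\As\cup\Bs}$, the first-order condition $\nabla\f(\bb{\As})_{\As}=0$, and the same optimal test point $\wb_{\Bs}=\frac{1}{\tMab}\nabla\f(\bb{\As})_{\Bs}$ for the lower bound. The only cosmetic difference is in the upper bound, where you invoke $\nabla\f(\bb{\As})_{\As}=0$ up front and drop the $\As$-coordinates of the quadratic penalty before maximizing over $\R^{\Bs}$, whereas the paper maximizes over all $\wb$ supported on $\As\cup\Bs$ and applies the optimality condition at the end — the two computations are equivalent.
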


\begin{proof}
	We show the first inequality.
	Since $\bb{\As\cup\Bs}$ is the maximizer of $\f$ over
	$\{\xb\in\R^\dset\relmid{}\supp(\xb)\subseteq\As\cup\Bs\}$,
	we have
	$\f(\bb{\As\cup\Bs})\ge \f(\wb+\bb{\As})$
	for any $\supp(\wb)\subseteq \Bs$.
	Therefore, from inequality~\eqref{a_def:rscrsm}, we obtain
	\begin{align}
	\Fdel{\Bs}{\As}
	=
	\f(\bb{\As\cup\Bs}) - \f(\bb{\As})
	\ge{}
	\f(\wb+\bb{\As}) - \f(\bb{\As}) 
	\ge{} 
	\iprod{\nabla \f(\bb{\As})}{\wb} -
	\frac{\tMab }{2}\| \wb \|_2^2.
	\end{align}
	Setting $\wb_\Bs=\frac{1}{\tMab}\nabla \f(\bb{\As})_{\Bs}$ and $\wb_{\dset\bs\Bs}=0$,
	we obtain the first inequality:
	\begin{align}
	\Fdel{\Bs}{\As}
	\ge{}&
	\frac{1}{2\tMab}\|\nabla f(\bb{\As})_{\Bs} \|^2_2.
	\end{align}
	
	We then prove the second inequality.
	Thanks to inequality~\eqref{a_def:rscrsm},
	we have
	\begin{align}
	\Fdel{\Bs}{\As}
	={} 
	\f(\bb{\As\cup\Bs}) - \f(\bb{\As})	
	\le{} 
	\iprod{\nabla \f(\bb{\As})}{\bb{\As\cup\Bs}-\bb{\As}}
	-\frac{\mab}{2}\|\bb{\As\cup\Bs}-\bb{\As}  \|^2_2.
	\end{align}
	Let $\wb\in\R^\dset$ be a vector such that 
	$\supp(\wb)\subseteq \As\cup\Bs$. 
	We consider replacing $\bb{\As\cup\Bs}$ in RHS
	with
	$\wb+\bb{\As}$ 
	and maximizing RHS w.r.t. $\wb$;
	we thus obtain an upper bound of $\Fdel{\Bs}{\As}$ as follows:
	\begin{align}
	\Fdel{\Bs}{\As}
	\le{}&
	\max_{\supp(\wb)\subseteq \As\cup\Bs}\
	\iprod{\nabla \f(\bb{\As})}{\wb}
	-\frac{\mab}{2}\|\wb \|^2_2.
	\end{align}
	The maximum is attained with
	$\wb_{\As\cup\Bs}=\frac{1}{\mab}\nabla \f(\bb{\As})_{\As\cup\Bs}$,
	and so we obtain
	\[
	\Fdel{\Bs}{\As} \le \frac{1}{2\mab}\|\nabla \f(\bb{\As})_{\As\cup\Bs}\|^2_2
	=
	\frac{1}{2\mab}\|\nabla \f(\bb{\As})_{\Bs}\|^2_2,
	\]
	where the last equality comes from the first-order optimality condition
	(or the KKT condition with the linear independence constraint qualification)
	at $\bb{\As}$:
	$\nabla\f(\bb{\As})_\As=0$.
\end{proof}

By using this lemma, we can show that SBR and SPR can be lower bounded 
by ratios of RSC and RSM constants. 
The lower bound of SBR is adopted from~\citep{elenberg2018restricted}, 
and that of SPR improves the existing one presented in~\citep{bogunovic2018robust}. 
\begin{aprop}
	\label{a_prop:brpr}
	For any $\Us\subseteq\dset$ and
	$\s\in\Z_{>0}$, 
	SBR $\brus$ and SPR $\prus$ of 
	$\F(\Ss) = \ls(0) - \min_{\supp(\xb)\subseteq\Ss}\ls(\xb)$ 
	($\forall \Ss\subseteq\dset$)
	are bounded
	with RSC and RSM constants of $\ls$
	as follows:
	\begin{align}
	\brus
	\ge\frac{\m{|\Us|+\s}}{\kM{|\Us|+1}{1}}
	\ge\frac{\m{|\Us|+\s}}{\M{|\Us|+\s}}
	=
	\frac{1}{\kappa_{|\Us|+\s}} 
	& & 
	\text{and}
	& &
	\prus
	\ge\frac{\m{|\Us|+1}}{\kM{|\Us|+\s}{\s}}
	\ge\frac{\m{|\Us|+\s}}{\M{|\Us|+\s}}
	=
	\frac{1}{\kappa_{|\Us|+\s}}.
	\end{align}
\end{aprop}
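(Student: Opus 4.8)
The plan is to obtain both bounds from \Cref{a_lem:fts} by invoking it twice: once with $\Bs$ a singleton $\{j\}$ (to control each marginal $\Fdel{j}{\Ls}$) and once with $\Bs=\Ss$ (to control $\Fdel{\Ss}{\Ls}$), always over the \emph{same} base set $\As=\Ls$, so that the gradient $g\coloneqq\nabla\f(\bb{\Ls})$ is common to every application. Fix disjoint $\Ls,\Ss\subseteq\dset$ with $\Ls\subseteq\Us$ and $|\Ss|\le\s$; since $\brus$ and $\prus$ are by definition the extremal constants over all such pairs, it suffices to sandwich $\sum_{j\in\Ss}\Fdel{j}{\Ls}$ by multiples of $\Fdel{\Ss}{\Ls}$. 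The key observation that makes the singleton estimates combine is that $\sum_{j\in\Ss}|g_j|^2=\|g_\Ss\|_2^2$, so a sum of per-coordinate bounds collapses into one restricted-norm quantity.

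For the SBR bound I would proceed as follows. Applying the lower inequality of \Cref{a_lem:fts} to each $j\in\Ss$ gives $\Fdel{j}{\Ls}\ge\frac{1}{2\tM{\Ls,\Ls\cup\{j\}}}|g_j|^2$; using $\tM{\Ls,\Ls\cup\{j\}}\le\kM{|\Ls|+1}{1}\le\kM{|\Us|+1}{1}$ (the monotonicity facts listed just before \Cref{a_lem:fts}) and summing over $j\in\Ss$ yields $\sum_{j\in\Ss}\Fdel{j}{\Ls}\ge\frac{1}{2\kM{|\Us|+1}{1}}\|g_\Ss\|_2^2$. On the other side, the upper inequality of \Cref{a_lem:fts} with $\Bs=\Ss$ gives $\Fdel{\Ss}{\Ls}\le\frac{1}{2\m{\Ls,\Ls\cup\Ss}}\|g_\Ss\|_2^2\le\frac{1}{2\m{|\Us|+\s}}\|g_\Ss\|_2^2$, i.e.\ $\|g_\Ss\|_2^2\ge2\m{|\Us|+\s}\Fdel{\Ss}{\Ls}$. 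Chaining the two, $\sum_{j\in\Ss}\Fdel{j}{\Ls}\ge\frac{\m{|\Us|+\s}}{\kM{|\Us|+1}{1}}\Fdel{\Ss}{\Ls}$, which is the first claimed lower bound on $\brus$; the estimate $\kM{|\Us|+1}{1}\le\M{|\Us|+\s}$ then gives the coarser $1/\kappa_{|\Us|+\s}$.

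The SPR bound is the mirror image, with the roles of the singleton and the full set interchanged. From the lower inequality of \Cref{a_lem:fts} with $\Bs=\Ss$, together with $\tM{\Ls,\Ls\cup\Ss}\le\kM{|\Ls|+|\Ss|}{|\Ss|}\le\kM{|\Us|+\s}{\s}$ (here $|\Ss\bs\Ls|=|\Ss|$ since the sets are disjoint), I get $\Fdel{\Ss}{\Ls}\ge\frac{1}{2\kM{|\Us|+\s}{\s}}\|g_\Ss\|_2^2$. Applying the upper inequality of \Cref{a_lem:fts} to each singleton gives $\Fdel{j}{\Ls}\le\frac{1}{2\m{\Ls,\Ls\cup\{j\}}}|g_j|^2\le\frac{1}{2\m{|\Us|+1}}|g_j|^2$, and summing yields $\sum_{j\in\Ss}\Fdel{j}{\Ls}\le\frac{1}{2\m{|\Us|+1}}\|g_\Ss\|_2^2$. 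Combining the two, $\Fdel{\Ss}{\Ls}\ge\frac{\m{|\Us|+1}}{\kM{|\Us|+\s}{\s}}\sum_{j\in\Ss}\Fdel{j}{\Ls}$, which is the claimed bound on $\prus$; the estimate $\kM{|\Us|+\s}{\s}\le\M{|\Us|+\s}$ then gives $1/\kappa_{|\Us|+\s}$. (The degenerate case $\Fdel{\Ss}{\Ls}=0$ forces $\|g_\Ss\|_2^2=0$ and hence every $\Fdel{j}{\Ls}=0$, so both ratios are handled by the convention $0/0=1$.)

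I do not expect a genuine obstacle here: the entire nontrivial content sits in \Cref{a_lem:fts}, and what remains is careful bookkeeping of restricted constants. The one point to watch is to keep the \emph{support-restricted} ("tilde") smoothness constants $\tM{\Ls,\Ls\cup\cdot}$ coming out of the lemma and relax them only at the end, because it is exactly the asymmetry between the singleton case (sizes $|\Us|+1$ and $1$) and the $\Ss$ case (sizes $|\Us|+\s$ and $\s$) that produces the sharper constants $\kM{|\Us|+1}{1}$ and $\kM{|\Us|+\s}{\s}$ appearing in the statement, and hence, for the SPR half, the advertised improvement over the $\mu/\nu$ bound of \citet{bogunovic2018robust}.
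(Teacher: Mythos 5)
Your proof is correct and takes essentially the same route as the paper: the SPR bound is obtained exactly as in the paper's proof, by applying Lemma~\ref{a_lem:fts} once with $\Bs=\Ss$ and once per singleton over the common base $\Ls$, collapsing the per-coordinate gradient terms into $\|\nabla\f(\bb{\Ls})_{\Ss}\|_2^2$, and only then relaxing the restricted constants to $\m{|\Us|+1}$ and $\kM{|\Us|+\s}{\s}$. The only difference is that you also spell out the SBR half (the mirror-image argument), which the paper delegates to a citation of \citet{elenberg2018restricted}; your version of that argument is likewise correct.
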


\begin{proof}
	We refer readers to~\citep{elenberg2018restricted}
	for the proof of the lower bound of $\brus$.
	Here, we show how to obtain the lower bound of $\prus$. 
	From the definition of SPR, 
	we have 
	\begin{align}
	\prus\coloneqq
	\min_{
		{\small \begin{array}{l}
			\Ls,\Ss:\Ls\cap \Ss=\emptyset, \\
			\Ls\subseteq\Us, |\Ss|\le \s
			\end{array}}
	}
	\frac{\Fdel{\Ss}{\Ls}}{\sum_{j\in \Ss}\Fdel{j}{\Ls}},
	\end{align}
	where we regard $0/0=1$. Therefore, we obtain 
	\begin{align}
	\prus
	\ge{}&
	\min_{
		{\small \begin{array}{l}
			\Ls,\Ss:\Ls\cap \Ss=\emptyset, \\
			\Ls\subseteq \Us, |\Ss|\le \s
			\end{array}}
	}
	\frac{\|\nabla f(\bb{\Ls})_{\Ss} \|^2_2}{2\tM{\Ls,\Ls\cup\Ss}}
	\left(
	\sum_{j\in \Ss}
	\frac{|\nabla f(\bb{\Ls})_{j} |^2}{2\m{\Ls,\Ls\cup\{j\}}}
	\right)^{-1}
	& &\cuz{Lemma~\ref{a_lem:fts}}
	\\
	\ge{}&
	\min_{
		{\small \begin{array}{l}
			\Ls,\Ss:\Ls\cap \Ss=\emptyset, \\
			\Ls\subseteq \Us, |\Ss|\le \s
			\end{array}}
	}
	\frac{\m{|\Us|+1}}{\tM{\Ls,\Ls\cup\Ss}}
	\cdot
	\frac{\|\nabla f(\bb{\Ls})_{\Ss} \|^2_2}{\sum_{j\in \Ss}|\nabla f(\bb{\Ls})_{j} |^2}
	& &\cuz{$\m{\Ls,\Ls\cup\{j\}}\ge\m{|\Us|+1}$}
	\\
	\ge{}&
	\min_{
		{\small \begin{array}{l}
			\Ls,\Ss:\Ls\cap \Ss=\emptyset, \\
			\Ls\subseteq \Us, |\Ss|\le \s
			\end{array}}
	}
	\frac{\m{|\Us|+1}}{\tM{\Ls,\Ls\cup\Ss}}
	& &\cuz{$\|\nabla\f(\bb{\Ls})_\Ss\|_2^2=\sum_{j\in \Ss}|\nabla\f(\bb{\Ls})_j|^2$}
	\\
	\ge{}&
	\frac{\m{|\Us|+1}}{\kM{|\Us|+\s}{\s}}.
	& &\cuz{$\tM{\Ls,\Ls\cup\Ss}\le\kM{|\Us|+\s}{\s}$}
	\end{align}
	The proof is completed with $\kM{|\Us|+\s}{\s}\le\M{|\Us|+\s}$ and $\m{|\Us|+1}\ge\m{|\Us|+\s}$.
\end{proof}

\paragraph{Example with Unbounded Curvature} 
We show that 
there is an 
$\ell_0$-constrained minimization 
instance that satisfies the following conditions: 
SBR and SPR of 
$\F(\Ss)
=\ls(0) - \min_{\supp(\xb)\subseteq\Ss}\ls(\xb)$ 
are bounded by a constant, 
while its curvature $\curv$ and inverse curvature $\icurv$ 
are unbounded (i.e., $\curv=\icurv=1$). 
We define 
\begin{align}
\B\coloneqq
\begin{bmatrix}
1 & 1 \\
0 & 1
\end{bmatrix},
& & 
\ab_1 \coloneqq 
\begin{bmatrix}
0 \\ 1
\end{bmatrix},
& & 
\text{and}
& & 
\ab_2 \coloneqq 
\begin{bmatrix}
1 \\ 1
\end{bmatrix}.
\end{align}
Note that we have
\begin{align}
\min_{x_1,x_2\in\R} 
\left\|
\B
\begin{bmatrix}
x_1\\x_2
\end{bmatrix}
-\ab_1
\right\|_2^2
=0,
& & 
\min_{x_1\in\R} 
\left\|
\B
\begin{bmatrix}
x_1\\0
\end{bmatrix}
-\ab_1
\right\|_2^2
=1, 
& & &
\min_{x_2\in\R} 
\left\|
\B
\begin{bmatrix}
0\\x_2
\end{bmatrix}
-\ab_1
\right\|_2^2
=1/2, 
\\
\min_{x_3,x_4\in\R} 
\left\|
\B
\begin{bmatrix}
x_3\\x_4
\end{bmatrix}
-\ab_2
\right\|_2^2
=0,
& & 
\min_{x_3\in\R} 
\left\|
\B
\begin{bmatrix}
x_3\\0
\end{bmatrix}
-\ab_2
\right\|_2^2
=1, 
& & &
\min_{x_4\in\R} 
\left\|
\B
\begin{bmatrix}
0\\x_4
\end{bmatrix}
-\ab_2
\right\|_2^2
=0. 
\end{align}
We define the loss function as 
$\ls(\xb)\coloneqq\|\A\xb-\yb\|_2^2$, 
where $\A\in\R^{\dset\times\dset}$ 
is a block-diagonal matrix 
and $\yb\in\R^\dset$ 
is a vector defined as 
\begin{align}
\A
\coloneqq
\begin{bmatrix}
\B &    &   &  &\\
& \B &   &  &\\
&    & 1 &  &\\
&    &   & \ddots & \\
&    &   &        & 1
\end{bmatrix} 
& & 
\text{and}
& & 
\yb\coloneqq
\begin{bmatrix}
\ab_1 \\
\ab_2 \\
0 \\
\vdots \\
0
\end{bmatrix},
\end{align}
respectively. 
We let $\F(\Ss)= \ls(0) - \min_{\supp(\xb)\subseteq\Ss}\ls(\xb)$ 
for any $\Ss\subseteq\dset$. 
Then we have 
\begin{align}
\Fdel{\{1\}}{\{2\}} = 1/2,
& & 
\F(\{1\}) = 0, 
& &
\Fdel{\{3\}}{\{4\}} = 0, 
& & 
\text{and}
& & 
\F(\{3\}) = 1.
\end{align}
Since $\curv,\icurv\in[0,1]$ 
must satisfy 
\begin{align}
\F(\{1\}) \ge (1-\icurv)\Fdel{\{1\}}{\{2\}}
& & 
\text{and}
& &
\Fdel{\{3\}}{\{4\}} \ge (1-\curv)\F(\{3\}),
\end{align} 
we have $\curv=\icurv=1$. 
On the other hand, the condition number, 
$\kappa$, 
of $\ls$ 
is bounded from above by 
the ratio of 
the largest and smallest eigenvalues of $\A^\top\A$, 
which 
are equal to 
$\frac{3+\sqrt{5}}{2}$ and 
$\frac{3-\sqrt{5}}{2}$, 
respectively; 
hence $\kappa\le\frac{3+\sqrt{5}}{3-\sqrt{5}}$. 
Therefore, 
thanks to Proposition~\ref{a_prop:brpr}, 
we have
$\brus\ge\frac{3-\sqrt{5}}{3+\sqrt{5}}$ 
and 
$\prus\ge\frac{3-\sqrt{5}}{3+\sqrt{5}}$ 
for any $\Us$ and $\s$. 

\subsection{LP with a Cardinality Constraint}
\paragraph{Lower Bounds of SBR and SPR}
As in \Cref{sec:applications}, 
SPR is lower bounded as $\prus\ge1/\s$. 
Furthermore, as shown in~\citep{bian2017guarantees}, 
SBR is lower bounded by some $\br_0>0$ 
under the non-degeneracy assumption: 
For any $\Ss\subseteq\dset$, 
any basic feasible solution of the corresponding LP in the standard form 
is non-degenerate.  

\paragraph{Example with Unbounded Curvature}
An example with $\curv=1$ is provided in \Cref{sec:applications}. 
Note that the example instance satisfies the non-degeneracy assumption. 


\subsection{Coverage Maximization}\label{a_subsec:coverage}
\paragraph{Lower Bounds of SBR and SPR}
Recall that the coverage function is defined as 
$\F(\Ss)\coloneqq\sum_{v\in\ISs} w_v$, 
where 
$w_v\ge0$ ($v\in V$), $\Iset_j\subseteq V$ ($j\in\dset$), 
and 
$\ISs\coloneqq \bigcup_{j\in\Ss} \Iset_j$ for any $\Ss\subseteq\dset$. 
Since the function is submodular, 
we have $\brus=1$ for any $\Us$ and $\s$. 
As assumed in \Cref{sec:applications}, 
any collection of up to $\s$ groups 
covers any $v\in V$ at most $\bo_\s$ times; 
i.e., 
$\bo_\s\coloneqq\max_{v\in V, |\Ss|\le\s}|\{j\in\Ss \relmid{} v\in\Iset_j \}| $.
Therefore, 
\begin{align}
\frac{\Fdel{\Ss}{\Ls}}{\sum_{j\in \Ss}\Fdel{j}{\Ls}}
=
\frac{\sum_{v\in\ISLs\bs\ILs} w_v}{\sum_{j\in \Ss}\sum_{v\in\ILjs\bs\ILs} w_v}
=
\frac{\sum_{v\in\ISLs\bs\ILs} w_v}{\sum_{v\in\ISLs\bs\ILs} w_v|\{j\in\Ss \relmid{} v\in\Iset_j \}|}
\ge
\frac{1}{\bo_\s} 
\end{align} 
holds for any disjoint $\Ls,\Ss\subseteq\dset$ such that $|\Ss|\le\s$,   
which implies $\prus\ge1/\bo_\s$ for any $\Us$ and $\s$.

\paragraph{Example with Unbounded Curvature}
We  provide an example of a coverage function with 
bounded SPR $\prus$ and unbounded curvature $\curv=1$. 
Let 
$V=\{v_1,v_2,v_3\}$ and $w_v=1$ ($v\in V$);  
i.e., $\F(\Ss)=|\ISs|$.  
We let $\d=3$ and define 
$\Iset_1 = \{v_1,v_2\}$,   
$\Iset_2 = \{v_2,v_3\}$, and  
$\Iset_3 = \{v_1,v_3\}$.  
Since each $v\in V$ is covered by at most two groups, 
we have $\bo_\s=2$ for any $\s$, 
which implies $\prus\ge1/2$ for any $\Us$ and $\s$.  
On the other hand, we have 
$\Fdel{\{1\}}{\{2,3\}} = 0$ and $\F(\{1\})=2$, 
which leads to $\curv=1$ since $\curv$ must satisfy 
$\Fdel{j}{\Ss}\ge (1-\curv)\F(j)$ for any $\Ss\subseteq\dset$ and $j\notin\Ss$.

\clearpage

\section{Multi-stage Algorithms}\label{a_sec:multi}

We prove the theoretical guarantees of the multi-stage algorithms 
in \Cref{a_subsec:multi_guarantee}, 
and 
we present experimental results with well- and ill-conditioned synthetic $\ell_0$-constrained instances 
in \Cref{a_subsec:experiments_synthetic}. 

\begin{algorithm}[htb]
	\caption{Multi-stage algorithm}
	\label{a_alg:multi}
	\begin{algorithmic}[1]
		\State $\Us\gets \dset$, $\Ss\gets\emptyset$
		\For{$i=1,\dots,\mm$}
		\State $\Bs_i\gets\argmax_{\Bs\subseteq \Us:|\Bs|\le b_i} 
		\Gs(\Bs)$
		\State $\Ss\gets\Ss\cup\Bs_i$
		\State $\Us\gets \Us\bs\Bs_i$
		\EndFor
		\State \Return $\Ss$
	\end{algorithmic}
\end{algorithm}

\subsection{Theoretical Guarantees}\label{a_subsec:multi_guarantee}
Note that the surrogate functions, $\Gs$, considered below are monotone, 
which means we have $|\Bs_i|=b_i$ in each $i$-th iteration. 
Let $\Ss_i=\Bs_1\cup\cdots\cup\Bs_i$ 
for $i\in[\mm]$ and $\Ss_0=\emptyset$. 
We take $\Sso$ and $\xb^*$, 
which satisfy $\ko=|\Sso|=\|\xb^*\|_0$, 
to be target solutions for WMM 
and $\ell_0$-constrained minimization, respectively.  
As is usual with the proof of greedy-style algorithms, 
we obtain approximation guarantees from 
a lower bound of the marginal gain in each iteration 
as in the following lemma:  
\begin{alem}\label{a_lem:Bi}
	Given any  
	$\app_1,\dots,\app_m$ 
	such that $\app_i\in[0,1]$ ($i\in[m]$), 
	if we can find $\Bs_i\subseteq\dset$ such that 
	$b_i=|\Bs_i|\le\ko$ and 
	\begin{align}\label{a_eq:Bi}
	\Fdel{\Bs_i}{\Ss_{i-1}} 
	\ge 
	\app_i \frac{b_i}{\ko}
	(\F(\Sso) - \F(\Ss_{i-1})) 
	\end{align} 
	in each $i$-th iteration ($i\in[\mm]$), 
	then the following inequality holds:   
	\[
	\F(\Ss_{\mm}) 
	\ge
	\left(
	1 - \prod_{i=1}^{\mm} \left( 1 - \app_i \frac{b_i}{\ko}\right)  
	\right) 
	\F(\Sso) 
	\ge 
	\left(1-\exp\left( -\frac{1}{\ko}\sum_{i=1}^\mm\app_ib_i\right)  \right)
	\F(\Sso). 
	\]
\end{alem}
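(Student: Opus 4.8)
The plan is to track the residual gap $\delta_i \coloneqq \F(\Sso) - \F(\Ss_i)$ and show that it contracts by a controlled factor at every iteration. First I would note that $\Fdel{\Bs_i}{\Ss_{i-1}} = \F(\Ss_i) - \F(\Ss_{i-1}) = \delta_{i-1} - \delta_i$, so the hypothesis~\eqref{a_eq:Bi} reads $\delta_{i-1} - \delta_i \ge \app_i \tfrac{b_i}{\ko}\,\delta_{i-1}$, which rearranges to the one-step recursion $\delta_i \le \bigl(1 - \app_i \tfrac{b_i}{\ko}\bigr)\delta_{i-1}$. Since $\F$ is normalized, $\delta_0 = \F(\Sso) - \F(\emptyset) = \F(\Sso)$.

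Next I would iterate this recursion. Because $\app_i \in [0,1]$ and $b_i \le \ko$, each factor $1 - \app_i \tfrac{b_i}{\ko}$ lies in $[0,1]$, and in particular is nonnegative, so the one-step inequalities may be multiplied together without reversing direction, giving
\[
\delta_\mm \;\le\; \delta_0 \prod_{i=1}^{\mm}\Bigl(1 - \app_i \tfrac{b_i}{\ko}\Bigr) \;=\; \F(\Sso)\prod_{i=1}^{\mm}\Bigl(1 - \app_i \tfrac{b_i}{\ko}\Bigr).
\]
Rewriting $\F(\Ss_\mm) = \F(\Sso) - \delta_\mm$ then yields the first claimed inequality. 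For the second one, I would bound each factor via $1 - x \le e^{-x}$, so that $\prod_{i=1}^{\mm}\bigl(1 - \app_i b_i/\ko\bigr) \le \exp\bigl(-\tfrac{1}{\ko}\sum_{i=1}^{\mm}\app_i b_i\bigr)$, and substitute into the previous bound.

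I do not anticipate a genuine obstacle here; this is a routine telescoping argument, essentially the standard greedy analysis adapted to block updates. The only two points that deserve an explicit word of justification are that the contraction factors are nonnegative (which is what makes the chained multiplication of the one-step inequalities legitimate, regardless of the signs of the intermediate $\delta_i$) and that $\delta_0 = \F(\Sso)$ (which uses normalization $\F(\emptyset) = 0$); both are immediate from the standing assumptions on $\F$ and on $\app_i, b_i$.
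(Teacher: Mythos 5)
Your proposal is correct and follows essentially the same route as the paper: the paper proves the first inequality by induction on $i$, which is exactly your telescoped recursion $\delta_i \le (1-\app_i b_i/\ko)\,\delta_{i-1}$ written additively, and both arguments rest on the same observations (nonnegativity of the contraction factors and $\F(\emptyset)=0$). The only cosmetic difference is in the exponential bound: the paper first applies AM--GM to get $\prod_i(1-\app_i b_i/\ko)\le\bigl(1-\tfrac{1}{\mm}\sum_i\app_i b_i/\ko\bigr)^{\mm}$ and then exponentiates, whereas you apply $1-x\le e^{-x}$ factor by factor, which is slightly more direct but mathematically equivalent.
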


\begin{proof}
	We first prove that 
	\begin{align}\label{a_eq:prod}
	\F(\Ss_{i}) 
	\ge
	\left(
	1 - \prod_{\ip=1}^{i} \left( 1 - \app_\ip \frac{b_\ip}{\ko}\right)  
	\right) 
	\F(\Sso).
	\end{align} 
	holds for $i=1,\dots,\mm$ by induction. 
	If $i=1$, 
	the inequality holds due to~\eqref{a_eq:Bi}. 
	Assume that we have 
	\begin{align}\label{a_eq:assump}
	\F(\Ss_{i-1}) 
	\ge
	\left(
	1 - \prod_{\ip=1}^{i-1} \left( 1 - \app_\ip \frac{b_\ip}{\ko}\right)  
	\right) 
	\F(\Sso).
	\end{align}	
	Then we obtain 
	\begin{align}
	\F(\Ss_{i})
	&\ge 
	\app_i \frac{b_i}{\ko}
	(\F(\Sso) - \F(\Ss_{i-1}))
	+
	\F(\Ss_{i-1})
	& & \cuz{\eqref{a_eq:Bi}}
	\\
	&\ge
	\app_i \frac{b_i}{\ko}\F(\Sso)
	+
	\left( 1 - \app_i \frac{b_i}{\ko} \right) 
	\left(
	1 - \prod_{\ip=1}^{i-1} \left( 1 - \app_\ip \frac{b_\ip}{\ko}\right)  
	\right) 
	\F(\Sso)
	& & \cuz{\eqref{a_eq:assump}}
	\\
	& = 
	\left(
	1 - \prod_{\ip=1}^{i} \left( 1 - \app_\ip \frac{b_\ip}{\ko}\right)  
	\right) 
	\F(\Sso).
	\end{align} 
	Therefore, 
	the above inequality holds for any $i\in[\mm]$ 
	by induction. 
	By setting $i=\mm$, 
	we obtain the first inequality in Lemma~\ref{a_lem:Bi}. 
	We then prove the second inequality.  
	Since 
	$\app_i \frac{b_i}{\ko}\in[0,1]$ ($i\in[\mm]$), 
	the arithmetic mean of $1-\app_1 \frac{b_1}{\ko},\dots,1-\app_\mm \frac{b_\mm}{\ko}$ 
	is always lower bounded by 
	their geometric mean thanks to AM--GM. 
	Therefore, we have
	\[
	\prod_{i=1}^\mm
	\left(1-\app_i \frac{b_i}{\ko}\right)
	\le
	\left(1-\frac{1}{\mm}\sum_{i=1}^{\mm}\app_i \frac{b_i}{\ko}\right)^\mm
	\le
	\exp\left( -\sum_{i=1}^{\mm}\app_i \frac{b_i}{\ko} \right).  
	\]
	By plugging this into the first inequality, we obtain the desired result. 
\end{proof}

\subsubsection{Multi-Greedy}

Thanks to Lemma~\ref{a_lem:Bi}, 
we can prove the guarantees of \mgreedy{} as follows: 

\begin{thm}\label{a_thm:mg}
	Let $\bmax$ be an integer satisfying $1\le\bmax\le\ko$. 
	Set $b_1,\dots,b_m$ so as to satisfy  
	$b_i\in[\bmax]$ for $i\in[\mm]$ and $\sum_{i\in[\mm]}b_i=\k$.  
	If 
	$\Ss_\mm$ is the solution 
	obtained with \mgreedy{} 
	and 
	$\F$ is $(\br_{\Ss_{\mm},\ko}, \pr_{\Ss_{\mm},\bmax})$-WM, 
	we have
	\begin{align}
	\F(\Ss_\mm)
	\ge 
	\left(1- 
	\prod_{i=1}^\mm
	\left(1 -
	\br_{\Ss_{i-1},\ko}
	\pr_{\Ss_{i-1},b_i}
	\frac{b_i}{\ko}
	\right)  \right)
	\F(\Sso)
	\ge 
	\left(1-\exp\left( -
	\br_{\Ss_{\mm},\ko}
	\pr_{\Ss_{\mm},\bmax}
	\frac{\k}{\ko}
	\right)  \right)
	\F(\Sso). 
	\end{align}
\end{thm}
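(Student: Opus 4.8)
The whole argument reduces, via \Cref{a_lem:Bi}, to a per-iteration marginal-gain estimate, so the plan is to prove that for each $i\in[\mm]$ the set $\Bs_i$ chosen by \mgreedy{} satisfies
\[
\Fdel{\Bs_i}{\Ss_{i-1}}
\ \ge\
\br_{\Ss_{i-1},\ko}\,\pr_{\Ss_{i-1},b_i}\,\frac{b_i}{\ko}\,\bigl(\F(\Sso)-\F(\Ss_{i-1})\bigr).
\]
Once this holds I would invoke \Cref{a_lem:Bi} with $\app_i\coloneqq\br_{\Ss_{i-1},\ko}\,\pr_{\Ss_{i-1},b_i}$: this lies in $[0,1]$ (SBR and SPR are each at most $1$) and $b_i\le\bmax\le\ko$, while $|\Bs_i|=b_i$ since the surrogate $\Gsim$ has nonnegative marginals; the lemma then gives the product bound in the statement verbatim. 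The cleaner exponential bound follows because $\Ss_{i-1}\subseteq\Ss_\mm$ and $b_i\le\bmax$ give $\br_{\Ss_{i-1},\ko}\ge\br_{\Ss_\mm,\ko}$ and $\pr_{\Ss_{i-1},b_i}\ge\pr_{\Ss_\mm,\bmax}$, so $\sum_i\app_i b_i\ge\br_{\Ss_\mm,\ko}\pr_{\Ss_\mm,\bmax}\sum_i b_i=\br_{\Ss_\mm,\ko}\pr_{\Ss_\mm,\bmax}\k$, and hence $1-\exp(-\tfrac1{\ko}\sum_i\app_i b_i)\ge 1-\exp(-\br_{\Ss_\mm,\ko}\pr_{\Ss_\mm,\bmax}\tfrac{\k}{\ko})$.

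To establish the marginal-gain estimate I would write $S\coloneqq\Sso\setminus\Ss_{i-1}$ and let $\Us^{(i)}\coloneqq\dset\setminus\Ss_{i-1}$ be the ground set still available at iteration $i$, so that $S\subseteq\Us^{(i)}$ and $|S|\le\ko$. First, monotonicity gives $\F(\Sso)-\F(\Ss_{i-1})\le\F(\Sso\cup\Ss_{i-1})-\F(\Ss_{i-1})=\Fdel{S}{\Ss_{i-1}}$. Next, the SBR inequality applied with $\Us=\Ls=\Ss_{i-1}$, $\Ss=S$, $\s=\ko$ — valid since $S$ and $\Ss_{i-1}$ are disjoint and $|S|\le\ko$ — gives $\br_{\Ss_{i-1},\ko}\,\Fdel{S}{\Ss_{i-1}}\le\sum_{j\in S}\Fdel{j}{\Ss_{i-1}}$.

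Then I would relate the right-hand side to the \mgreedy{} choice by a top-$b_i$ averaging argument. Take $T\subseteq S$ consisting of $\min\{b_i,|S|\}$ elements of $S$ with the largest (nonnegative) marginal values $\Fdel{j}{\Ss_{i-1}}$. If $|S|\le b_i$ then $T=S$ and $\sum_{j\in T}\Fdel{j}{\Ss_{i-1}}\ge\frac{b_i}{\ko}\sum_{j\in S}\Fdel{j}{\Ss_{i-1}}$ because $b_i\le\ko$; otherwise $|T|=b_i$ and the average of the chosen marginals is at least the average over all of $S$, so $\sum_{j\in T}\Fdel{j}{\Ss_{i-1}}\ge\frac{b_i}{|S|}\sum_{j\in S}\Fdel{j}{\Ss_{i-1}}\ge\frac{b_i}{\ko}\sum_{j\in S}\Fdel{j}{\Ss_{i-1}}$. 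Since $T\subseteq\Us^{(i)}$ with $|T|\le b_i$, it is feasible for the argmax defining $\Bs_i$, so $\sum_{j\in\Bs_i}\Fdel{j}{\Ss_{i-1}}=\Gsim(\Bs_i)\ge\Gsim(T)=\sum_{j\in T}\Fdel{j}{\Ss_{i-1}}$. Finally, the SPR inequality with $\Us=\Ls=\Ss_{i-1}$, $\Ss=\Bs_i$, $\s=b_i$ — valid since $\Bs_i\subseteq\Us^{(i)}$ is disjoint from $\Ss_{i-1}$ and $|\Bs_i|\le b_i$ — gives $\sum_{j\in\Bs_i}\Fdel{j}{\Ss_{i-1}}\le\pr_{\Ss_{i-1},b_i}^{-1}\Fdel{\Bs_i}{\Ss_{i-1}}$. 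Chaining these four inequalities produces exactly the desired estimate.

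I expect the only nontrivial point to be securing the factor $b_i/\ko$ in the averaging step: a naive disjoint partition of $\Sso$ into blocks of size $b_i$ would only yield the weaker factor $1/\lceil\ko/b_i\rceil$, so it is essential to compare a top-$b_i$ partial sum against the full average (equivalently, a fractional covering of $\Sso$). The remaining work is routine — careful bookkeeping that every candidate set used at step $i$ lies in $\Us^{(i)}=\dset\setminus\Ss_{i-1}$, and straightforward substitution into \Cref{a_lem:Bi} for the product-to-exponential step.
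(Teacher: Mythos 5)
Your proposal is correct and follows essentially the same route as the paper: reduce to the per-iteration bound $\Fdel{\Bs_i}{\Ss_{i-1}}\ge\br_{\Ss_{i-1},\ko}\pr_{\Ss_{i-1},b_i}\frac{b_i}{\ko}(\F(\Sso)-\F(\Ss_{i-1}))$ via the chain SPR $\to$ top-$b_i$ averaging against $\Sso\setminus\Ss_{i-1}$ $\to$ SBR $\to$ monotonicity, then feed $\app_i=\br_{\Ss_{i-1},\ko}\pr_{\Ss_{i-1},b_i}$ into \Cref{a_lem:Bi} and use monotonicity of SBR/SPR in $(\Us,\s)$ for the exponential form. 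The only difference is that you spell out the averaging inequality (the paper's \eqref{a_eq:avegr}) in more detail than the paper, which simply cites ``$b_i\le\ko$ and the greedy rule''; your top-$b_i$ justification is exactly the right one.
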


\begin{proof}
	To prove the theorem, 
	it suffices that 
	$\Bs_i$ chosen by \mgreedy{} 
	in each iteration 
	satisfies~\eqref{a_eq:Bi} 
	with $\app_i=\br_{\Ss_{i-1},\ko}
	\pr_{\Ss_{i-1},b_i}$; 
	then we can obtain the theorem by using Lemma~\ref{a_lem:Bi} 
	and 
	$\br_{\Ss_{i-1},\ko}
	\pr_{\Ss_{i-1},b_i}
	\ge 
	\br_{\Ss_{\mm},\ko}
	\pr_{\Ss_{\mm},\bmax}$ ($i\in[\mm]$). 
	Note that \mgreedy{} uses 
	$\Gsim(\Bs)=\sum_{j\in\Bs}\Fdel{j}{\Ss_{i-1}}$ 
	as a surrogate function in each $i$-th iteration. 
	From $b_i \le \ko=|\Sso|$ and the greedy rule, 
	we have
	\begin{align}\label{a_eq:avegr}
	\frac{1}{b_i}\sum_{j\in\Bs_i}\Fdel{j}{\Ss_{i-1}}
	\ge 
	\frac{1}{\ko}\sum_{j\in\Ss^*\bs\Ss_{i-1}}\Fdel{j}{\Ss_{i-1}}.
	\end{align}
	Therefore, 
	we obtain
	\begin{align}
	&\Fdel{\Bs_i}{\Ss_{i-1}}
	\\
	\ge{}&
	\pr_{\Ss_{i-1},b_i}\sum_{j\in\Bs_i}\Fdel{j}{\Ss_{i-1}}
	& & \cuz{definition of $\pr_{\Ss_{i-1},b_i}$}
	\\
	\ge{}&
	\pr_{\Ss_{i-1},b_i}
	\frac{b_i}{\ko}
	\sum_{j\in\Sso\bs\Ss_{i-1}}\Fdel{j}{\Ss_{i-1}}
	& & \cuz{\eqref{a_eq:avegr}}
	\\
	\ge{}&
	\br_{\Ss_{i-1},|\Sso\bs\Ss_{i-1}|}
	\pr_{\Ss_{i-1},b_i}
	\frac{b_i}{\ko}
	\Fdel{\Sso\bs\Ss_{i-1}}{\Ss_{i-1}}
	& & \cuz{definition of $\br_{\Ss_{i-1},|\Sso\bs\Ss_{i-1}|}$}
	\\
	\ge{}&
	\br_{\Ss_{i-1},\ko}
	\pr_{\Ss_{i-1},b_i}
	\frac{b_i}{\ko}
	\Fdel{\Sso}{\Ss_{i-1}}
	& & \cuz{$\br_{\Ss_{i-1},|\Sso\bs\Ss_{i-1}|}\ge\br_{\Ss_{i-1},\ko}$ and $\Fdel{\Sso\bs\Ss_{i-1}}{\Ss_{i-1}}=\Fdel{\Sso}{\Ss_{i-1}}$}  
	\\
	\ge{}&
	\br_{\Ss_{i-1},\ko}
	\pr_{\Ss_{i-1},b_i}
	\frac{b_i}{\ko}
	(\F(\Sso) - \F(\Ss_{i-1})).
	& & \cuz{monotonicity}
	\end{align}
	Thus the proof is completed. 
\end{proof}

\subsubsection{Multi-OMP}
We then prove the following guarantee of \momp{}. 

\begin{thm}\label{a_thm:mo}
	Suppose that 
	$\F$ is defined as 
	$\F(\Ss)=\ls(0) - \min_{\supp(\xb^\prime)\subseteq\Ss}\ls(\xb^\prime)$ 	($\forall\Ss\subseteq\dset$)
	and 
	$b_1,\dots,b_\mm$ are set 
	as in Theorem~\ref{a_thm:mg}. 
	Assume that 
	$\ls$ is $\m{\k+\ko}$-RSC and $\M{\k,\bmax}$-RSM. 
	If $\Ss_m$ is a solution 
	obtained with \momp{}, 
	then we have  
	\begin{align}
	\F(\Ss_\mm)
	&\ge 
	\left(1-
	\prod_{i=1}^\mm
	\left(1 -
	\frac{\m{\Ss_{i-1},\Ss_{i-1}\cup\Sso}}{\tM{\Ss_{i-1},\Ss_i}}
	\frac{b_i}{\ko}
	\right)  \right)
	\F(\Sso)
	\\
	&\ge 
	\left(1-\exp\left( -\frac{1}{\ko}\sum_{i=1}^\mm
	\frac{\m{\Ss_{i-1},\Ss_{i-1}\cup\Sso}}{\tM{\Ss_{i-1},\Ss_i}}
	b_i\right)  \right)
	\F(\Sso)
	\\
	&\ge 
	\left(1-\exp\left( -
	\frac{\m{\Ss_{\mm},\Ss_{\mm}\cup\Sso}}{\max_{i\in[\mm]}\tM{\Ss_{i-1},\Ss_i}}
	\frac{\k}{\ko}
	\right)  \right)
	\F(\Sso) 
	\\
	&\ge 
	\left(1-\exp\left( -
	\frac{\m{\k+\ko}}{\M{\k,\bmax}}
	\frac{\k}{\ko}
	\right)  \right)
	\F(\Sso). 
	\end{align}
	Consequently, 
	solution $\xb=\argmin_{\supp(\xb^\prime)\subseteq\Ss_\mm}\ls(\xb^\prime)$ satisfies 
	\begin{align}
	\ls(\xb)
	&\le{}
	\ls(\xb^*)
	+
	\prod_{i=1}^\mm
	\left(1 -
	\frac{\m{\Ss_{i-1},\Ss_{i-1}\cup\Sso}}{\tM{\Ss_{i-1},\Ss_i}}
	\frac{b_i}{\ko}
	\right)
	(\ls(0) - \ls(\xb^*))
	\\
	&\le{}
	\ls(\xb^*)
	+
	\exp\left( -
	\frac{\m{\k+\ko}}{\M{\k,\bmax}}
	\frac{\k}{\ko}
	\right)
	(\ls(0) - \ls(\xb^*))
	\\
	&\le{}
	\ls(\xb^*)
	+
	\exp\left( -
	\frac{1}{\kappa_{\k+\ko}}
	\frac{\k}{\ko}
	\right)
	(\ls(0) - \ls(\xb^*)).
	\end{align}
\end{thm}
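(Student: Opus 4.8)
The plan is to deduce \Cref{a_thm:mo} from the abstract multi-stage estimate \Cref{a_lem:Bi} by checking its per-iteration hypothesis~\eqref{a_eq:Bi} with the choice $\app_i\coloneqq\m{\Ss_{i-1},\Ss_{i-1}\cup\Sso}\big/\tM{\Ss_{i-1},\Ss_i}$. Throughout, write $\f(\xb)=\ls(0)-\ls(\xb)$, so that $\F(\Ss)=\max_{\supp(\xb)\subseteq\Ss}\f(\xb)$ and $\bb{\Ss_{i-1}}$ maximizes $\f$ over $\{\xb:\supp(\xb)\subseteq\Ss_{i-1}\}$; for the $i$-th iteration put $a_j\coloneqq|\nabla\ls(\bb{\Ss_{i-1}})_j|^2=|\nabla\f(\bb{\Ss_{i-1}})_j|^2$, so the surrogate maximized by \momp{} is the modular function $\Gsim(\Bs)=\sum_{j\in\Bs}a_j$ and $\Bs_i$ consists of the $b_i$ indices of $\dset\bs\Ss_{i-1}$ with the largest $a_j$ (the choice over $\dset\bs\Ss_{i-1}$ versus all of $\dset$ is immaterial, since $\nabla\f(\bb{\Ss_{i-1}})$ vanishes on $\Ss_{i-1}$).

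Next I would establish the per-iteration progress bound. Since $b_i\le\bmax\le\ko=|\Sso|$ and $\Sso\bs\Ss_{i-1}\subseteq\dset\bs\Ss_{i-1}$, the greedy rule gives, by the same averaging argument as in~\eqref{a_eq:avegr},
\[
\frac{1}{b_i}\sum_{j\in\Bs_i}a_j\ \ge\ \frac{1}{\ko}\sum_{j\in\Sso\bs\Ss_{i-1}}a_j .
\]
I would then invoke the strengthened \Cref{a_lem:fts} twice: with $\As=\Ss_{i-1}$, $\Bs=\Bs_i$ (so $\As\cup\Bs=\Ss_i$) its lower bound gives $\Fdel{\Bs_i}{\Ss_{i-1}}\ge\frac{1}{2\tM{\Ss_{i-1},\Ss_i}}\sum_{j\in\Bs_i}a_j$; with $\As=\Ss_{i-1}$, $\Bs=\Sso\bs\Ss_{i-1}$ (so $\As\cup\Bs=\Ss_{i-1}\cup\Sso$) its upper bound, combined with monotonicity of $\F$, gives $\F(\Sso)-\F(\Ss_{i-1})\le\Fdel{\Sso\bs\Ss_{i-1}}{\Ss_{i-1}}\le\frac{1}{2\m{\Ss_{i-1},\Ss_{i-1}\cup\Sso}}\sum_{j\in\Sso\bs\Ss_{i-1}}a_j$. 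Chaining these through the averaging inequality yields~\eqref{a_eq:Bi} with the stated $\app_i$; the admissibility $\app_i b_i/\ko\in[0,1]$, which is all that the proof of \Cref{a_lem:Bi} uses, follows from $b_i\le\ko$ and the routine domination of the RSC constant by the RSM constant on the relevant domains (and if it were ever to fail, \eqref{a_eq:Bi} would already force $\F(\Ss_i)\ge\F(\Sso)$, making the bound trivial).

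With \eqref{a_eq:Bi} in hand, \Cref{a_lem:Bi} gives the first two displayed bounds on $\F(\Ss_\mm)$ directly. The third follows by replacing, inside the exponent, each $\m{\Ss_{i-1},\Ss_{i-1}\cup\Sso}$ by $\m{\Ss_\mm,\Ss_\mm\cup\Sso}$ (enlarging the support set from $\Ss_{i-1}$ to $\Ss_\mm$ only decreases the RSC constant), each $\tM{\Ss_{i-1},\Ss_i}$ by $\max_{i}\tM{\Ss_{i-1},\Ss_i}$, and using $\sum_i b_i=\k$; the fourth follows from the coarser estimates $\m{\Ss_\mm,\Ss_\mm\cup\Sso}\ge\m{|\Ss_\mm\cup\Sso|}\ge\m{\k+\ko}$ and $\tM{\Ss_{i-1},\Ss_i}\le\kM{|\Ss_i|}{b_i}\le\M{\k,\bmax}\le\M{\k+\ko}$, so that the exponent is at least $\k\big/(\ko\,\kappa_{\k+\ko})$. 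Finally, writing $P\coloneqq\prod_{i=1}^\mm\bigl(1-\app_i b_i/\ko\bigr)$, noting $\F(\Ss_\mm)=\ls(0)-\ls(\xb)$ with $\xb=\argmin_{\supp(\xb^\prime)\subseteq\Ss_\mm}\ls(\xb^\prime)$ and $\F(\Sso)=\ls(0)-\ls(\xb^*)$ (taking $\Sso=\supp(\xb^*)$, which is optimal for~\eqref{prob:main_F}), the inequality $\F(\Ss_\mm)\ge(1-P)\F(\Sso)$ rearranges into $\ls(\xb)\le\ls(\xb^*)+P\,(\ls(0)-\ls(\xb^*))$, and substituting the three successive lower bounds on the exponent yields the three-line $\ls$-bound.

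The step I expect to be most delicate is not any single inequality but the bookkeeping among the several flavours of restricted constants — the set-indexed $\m{\As,\Bs}$, $\M{\As,\Bs}$, the ``tilde'' constants $\tm{\As,\Bs}$, $\tM{\As,\Bs}$ on the one-sided domains $\tOmegarm_{\As,\Bs}$, and the size-indexed $\km{\s_1}{\s_2}$, $\kM{\s_1}{\s_2}$ — together with the domain-inclusion monotonicities needed so that the per-iteration ratios $\m{\Ss_{i-1},\Ss_{i-1}\cup\Sso}/\tM{\Ss_{i-1},\Ss_i}$ telescope cleanly into $\m{\k+\ko}/\M{\k,\bmax}$. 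It is precisely because \Cref{a_lem:fts} is applied on the sharp domains $\tOmegarm_{\Ss_{i-1},\Ss_i}$ (for the per-iteration decrease of $\ls$) and $\Omegarm_{\Ss_{i-1},\Ss_{i-1}\cup\Sso}$ (for the optimality gap), rather than on their coarser size-indexed relaxations, that the final ratio comes out strong enough to match the guarantee of standard \omp{}.
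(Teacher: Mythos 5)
Your proposal is correct and follows essentially the same route as the paper's proof: it verifies the per-iteration condition~\eqref{a_eq:Bi} with $\app_i=\m{\Ss_{i-1},\Ss_{i-1}\cup\Sso}/\tM{\Ss_{i-1},\Ss_i}$ by combining the greedy averaging inequality~\eqref{a_eq:aveomp} (valid because $\nabla\f(\bb{\Ss_{i-1}})$ vanishes on $\Ss_{i-1}$) with the two sides of Lemma~\ref{a_lem:fts} and monotonicity, then concludes via Lemma~\ref{a_lem:Bi} and the domain-inclusion monotonicities of the restricted constants. The only addition beyond the paper's argument is your explicit check that $\app_i b_i/\ko\in[0,1]$, which the paper leaves implicit.
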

\begin{proof}
	As in Section~\ref{a_subsec:ell_0}, we define 
	$\f(\xb)\coloneqq\ls(0)-\ls(\xb)$ 
	and 
	$\bb{\Ss}\coloneqq\argmax_{\supp(\xb)\subseteq\Ss}\f(\xb)$ 
	for any given $\Ss\subseteq\dset$. 
	Analogous with the proof of Theorem~\ref{a_thm:mg}, 
	we prove \eqref{a_eq:Bi}, 
	where $\app_i=\frac{\m{\Ss_{i-1},\Ss_{i-1}\cup\Sso}}{\tM{\Ss_{i-1},\Ss_i}}$. 
	Note that 
	\momp{} uses 
	$
	\Gsim(\Bs)
	=
	\sum_{j\in\Bs}|\nabla\ls(\bb{\Ss_{i-1}})_j|^2
	=
	\sum_{j\in\Bs}|\nabla\f(\bb{\Ss_{i-1}})_j|^2
	=
	\|\nabla\f(\bb{\Ss_{i-1}})_\Bs\|_2^2
	$ 
	as a surrogate function. 
	Therefore, 
	by using 
	$b_i \le \ko=|\Sso|$, 
	$\nabla\f(\bb{\Ss_{i-1}})_{\Ss_{i-1}}=0$, 
	and 
	the greedy rule, 
	we obtain  
	\begin{align}\label{a_eq:aveomp}
	\frac{1}{b_i}
	\|\nabla\f(\bb{\Ss_{i-1}})_{\Bs_i}\|_2^2
	\ge
	\frac{1}{\ko}
	\|\nabla\f(\bb{\Ss_{i-1}})_{\Sso\bs\Ss_{i-1}}\|_2^2.
	\end{align}
	By using this inequality and Lemma~\ref{a_lem:fts}, 
	we obtain 
	\begin{align}
	\Fdel{\Bs_i}{\Ss_{i-1}}
	&\ge
	\frac{1}{2\tM{\Ss_{i-1},\Ss_i}}
	\|\nabla\f(\bb{\Ss_{i-1}})_{\Bs_i}\|_2^2
	& & \cuz{Lemma~\ref{a_lem:fts}}
	\\
	&\ge
	\frac{1}{2\tM{\Ss_{i-1},\Ss_i}}
	\cdot
	\frac{b_i}{\ko}
	\|\nabla\f(\bb{\Ss_{i-1}})_{\Sso\bs\Ss_{i-1}}\|_2^2
	& & \cuz{\eqref{a_eq:aveomp}}
	\\
	&\ge
	\frac{\m{\Ss_{i-1},\Ss_{i-1}\cup\Sso}}{\tM{\Ss_{i-1},\Ss_i}}
	\cdot
	\frac{b_i}{\ko}
	\Fdel{\Sso\bs\Ss_{i-1}}{\Ss_{i-1}}
	& & \cuz{Lemma~\ref{a_lem:fts}}
	\\
	&\ge
	\frac{\m{\Ss_{i-1},\Ss_{i-1}\cup\Sso}}{\tM{\Ss_{i-1},\Ss_i}}
	\cdot
	\frac{b_i}{\ko}
	(\F(\Sso)-\F(\Ss_{i-1}).
	& & \cuz{monotonicity}
	\end{align}
	Thus the theorem holds thanks to Lemma~\ref{a_lem:Bi}.   
\end{proof}

\subsection{Experiments with Synthetic $\ell_0$-constrained Minimization Instances}\label{a_subsec:experiments_synthetic}	

We here evaluate the multi-stage algorithms 
with synthetic $\ell_0$-constrained minimization instances.


\begin{figure}
	\begin{tabular}{p{.3\textwidth}p{.3\textwidth}p{.3\textwidth}}
		\centering
		\includegraphics[width=1.0\linewidth]{./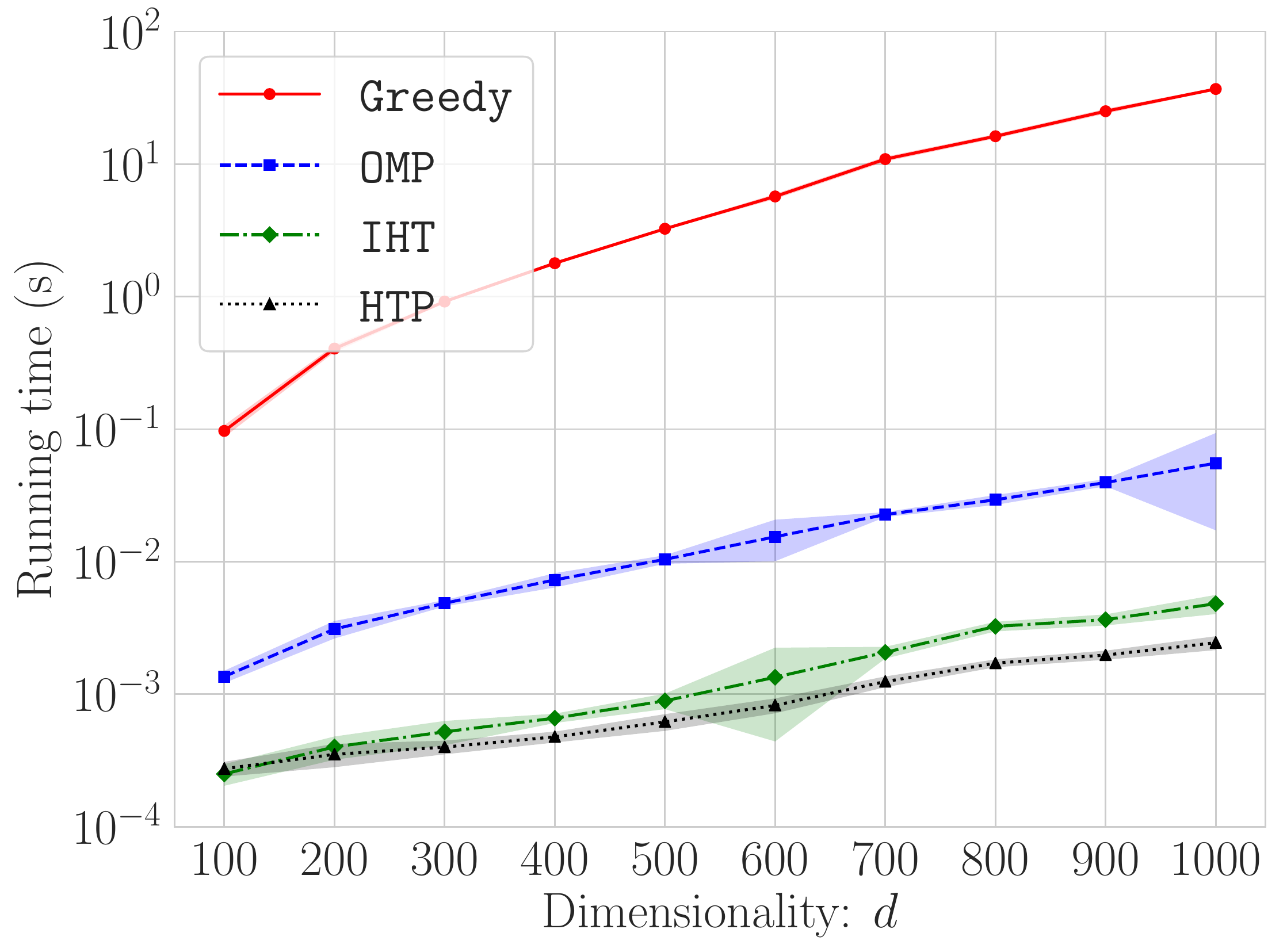}
		\subcaption{$m=\k$, Well-conditioned}
		\label{fig:well_time_1}
		&
		\centering
		\includegraphics[width=1.0\linewidth]{./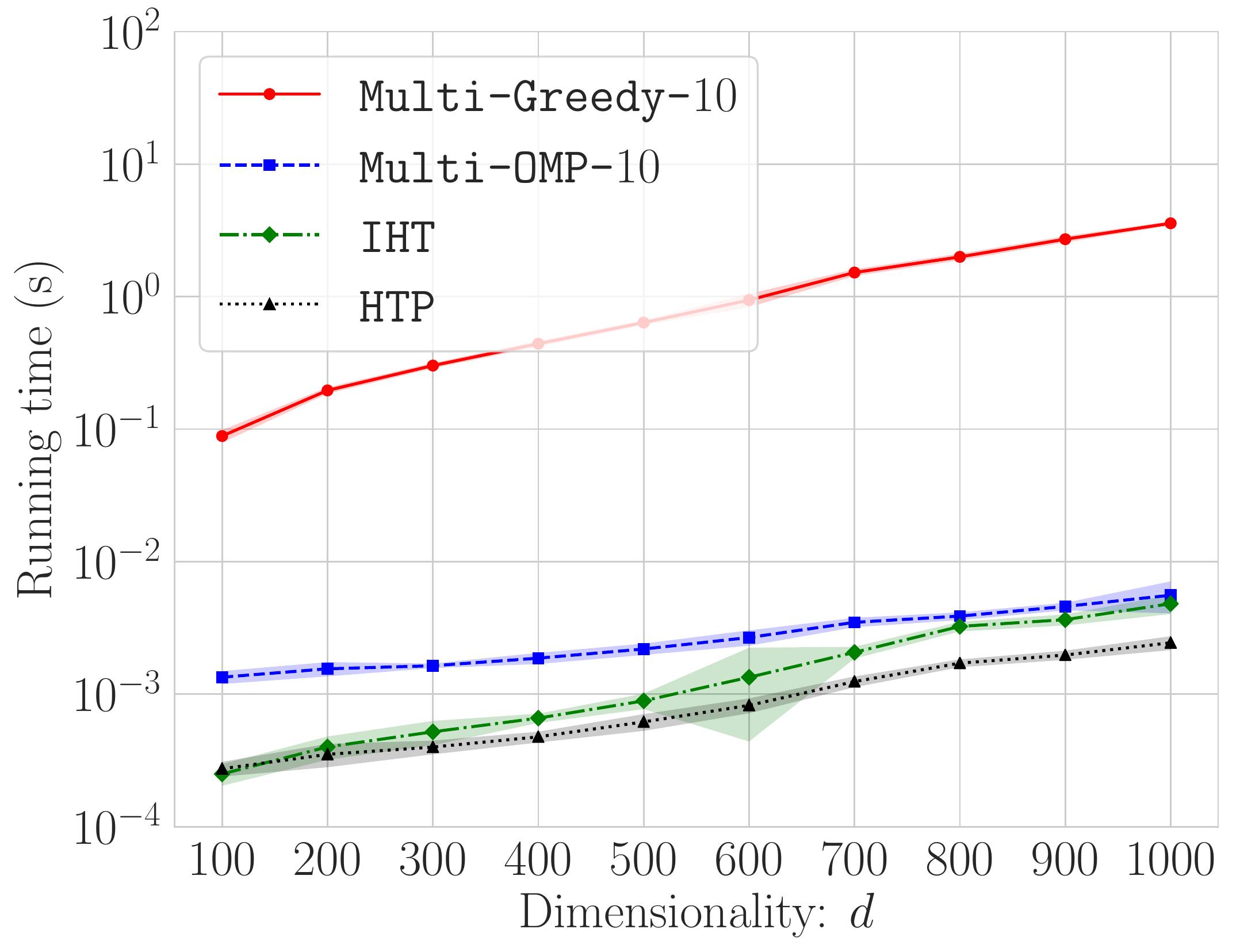}
		\subcaption{$m=10$, Well-conditioned}
		\label{fig:well_time_2}
		&
		\centering
		\includegraphics[width=1.0\linewidth]{./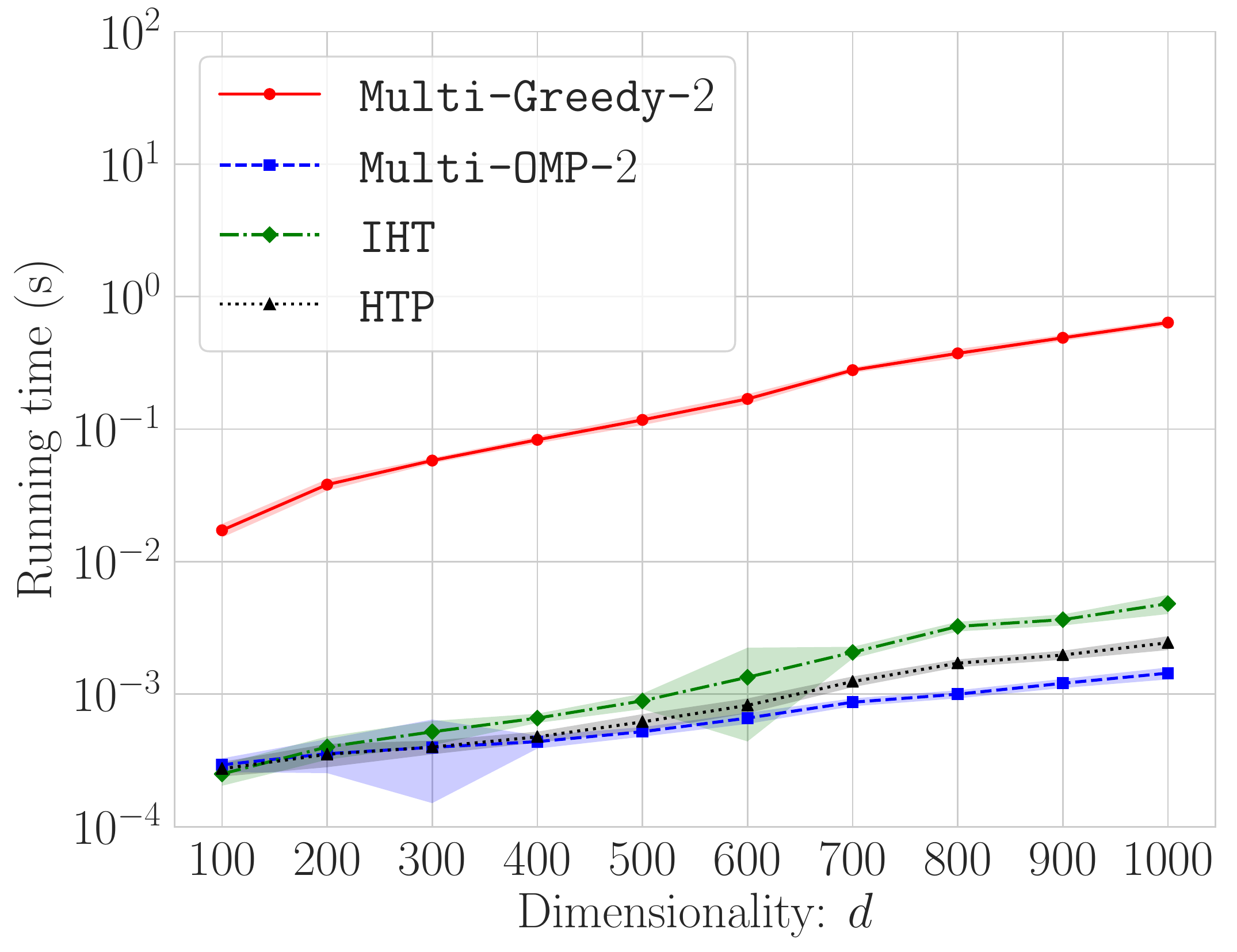}
		\subcaption{$m=2$, Well-conditioned}
		\label{fig:well_time_3}
	\end{tabular}
	\begin{tabular}{p{.3\textwidth}p{.3\textwidth}p{.3\textwidth}}
		\centering
		\includegraphics[width=1.0\linewidth]{./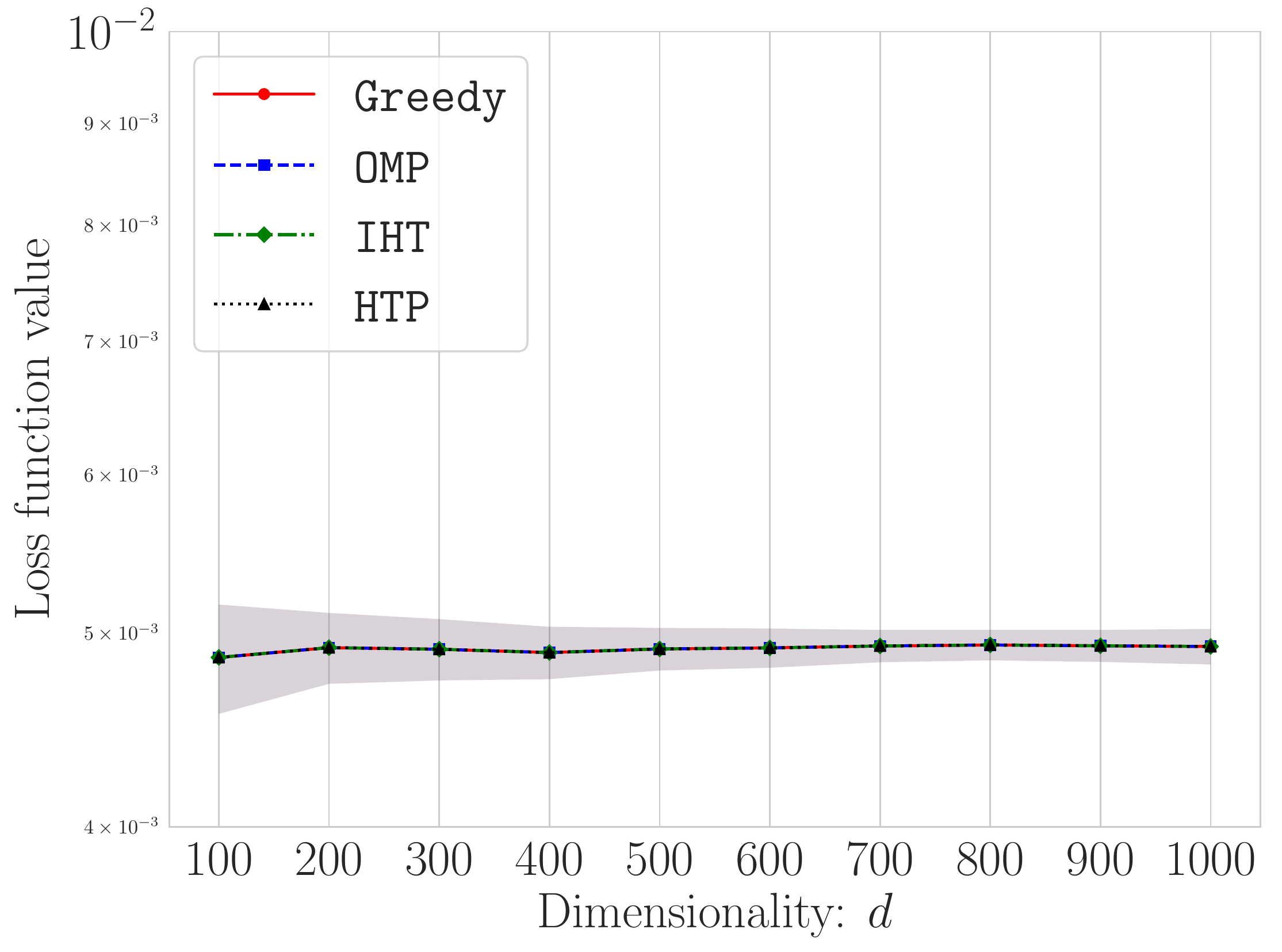}
		\subcaption{$m=\k$, Well-conditioned}
		\label{fig:well_loss_1}
		&
		\centering
		\includegraphics[width=1.0\linewidth]{./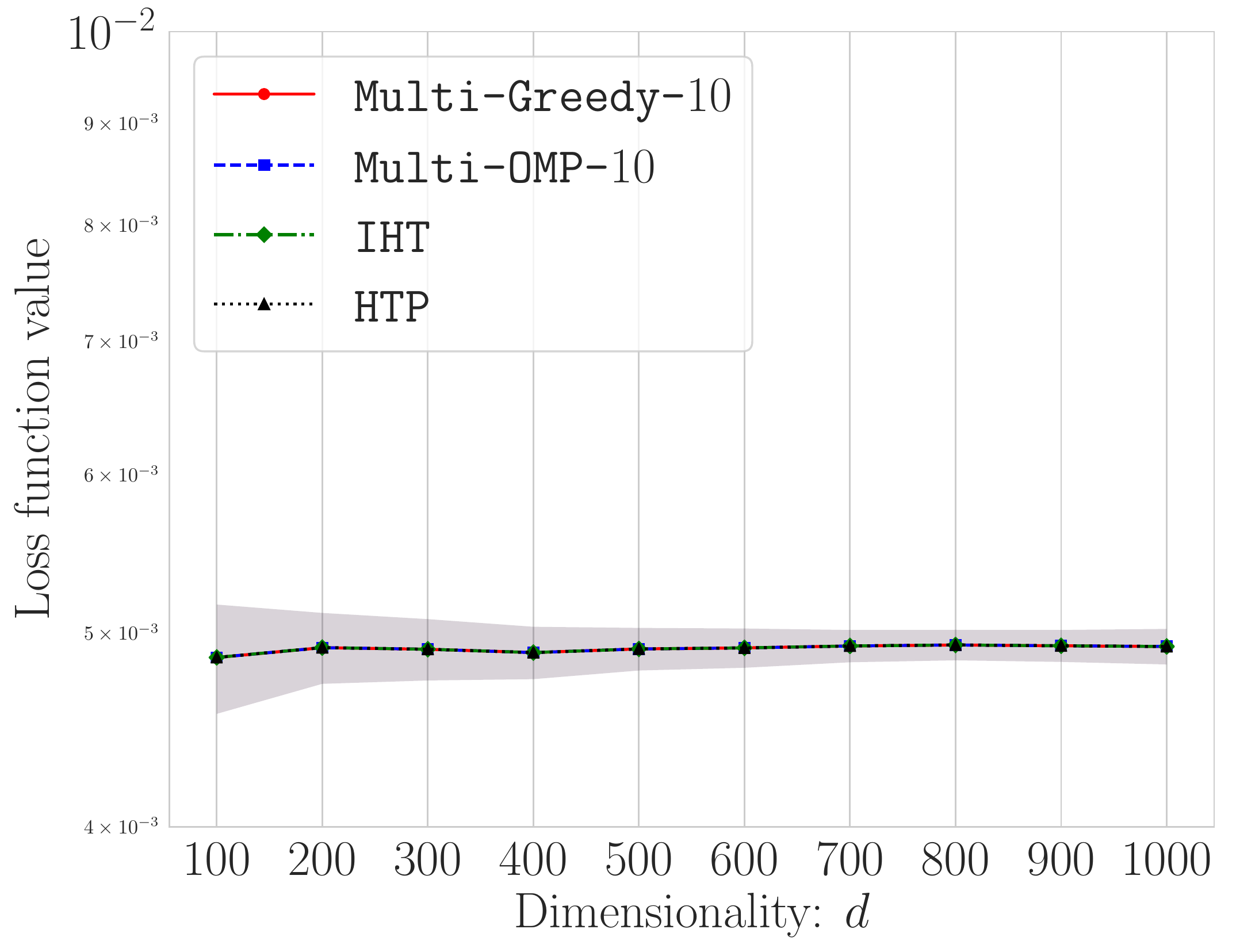}
		\subcaption{$m=10$, Well-conditioned}
		\label{fig:well_loss_2}
		&
		\centering
		\includegraphics[width=1.0\linewidth]{./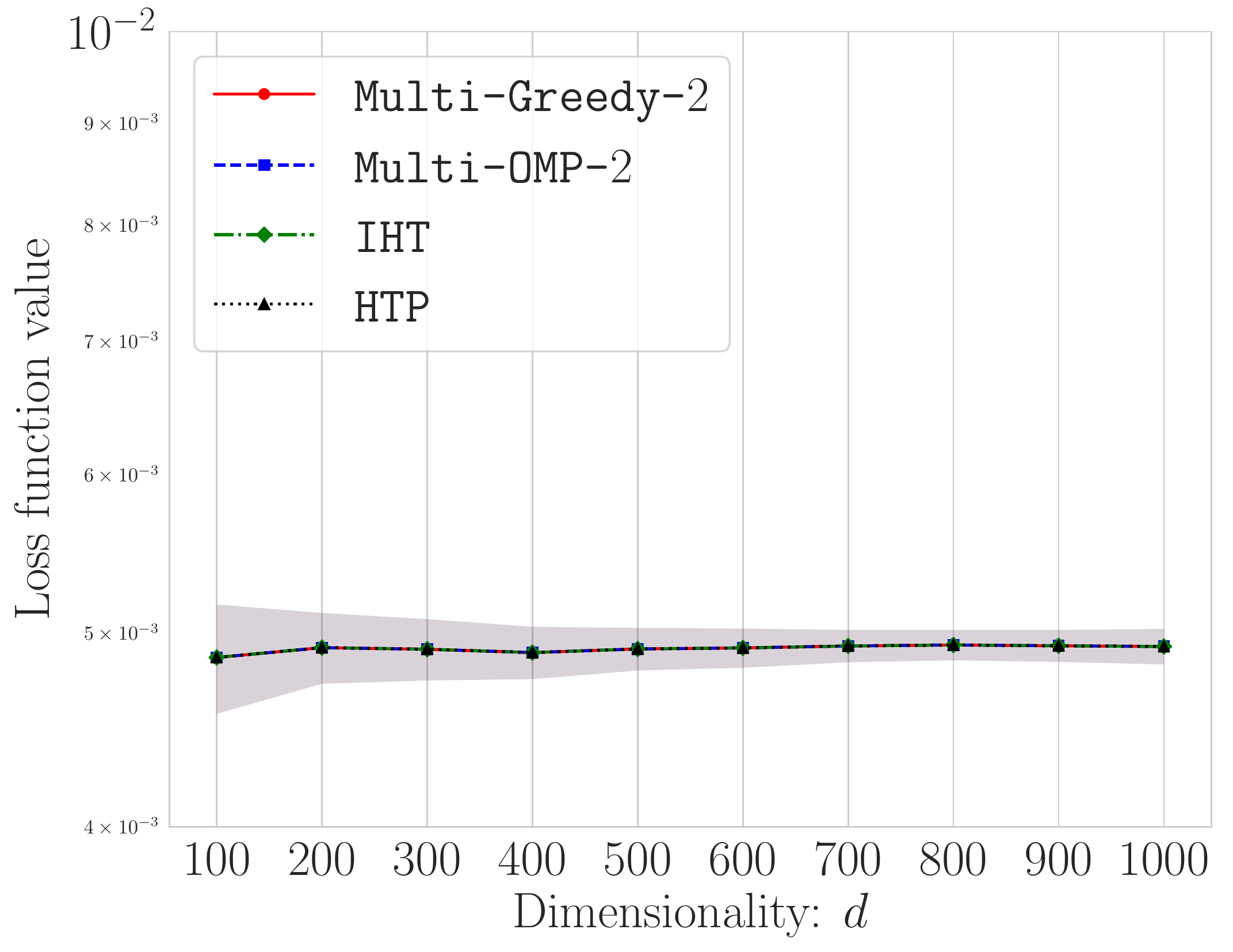}
		\subcaption{$m=2$, Well-conditioned}
		\label{fig:well_loss_3}
	\end{tabular}
	\begin{tabular}{p{.3\textwidth}p{.3\textwidth}p{.3\textwidth}}
		\centering
		\includegraphics[width=1.0\linewidth]{./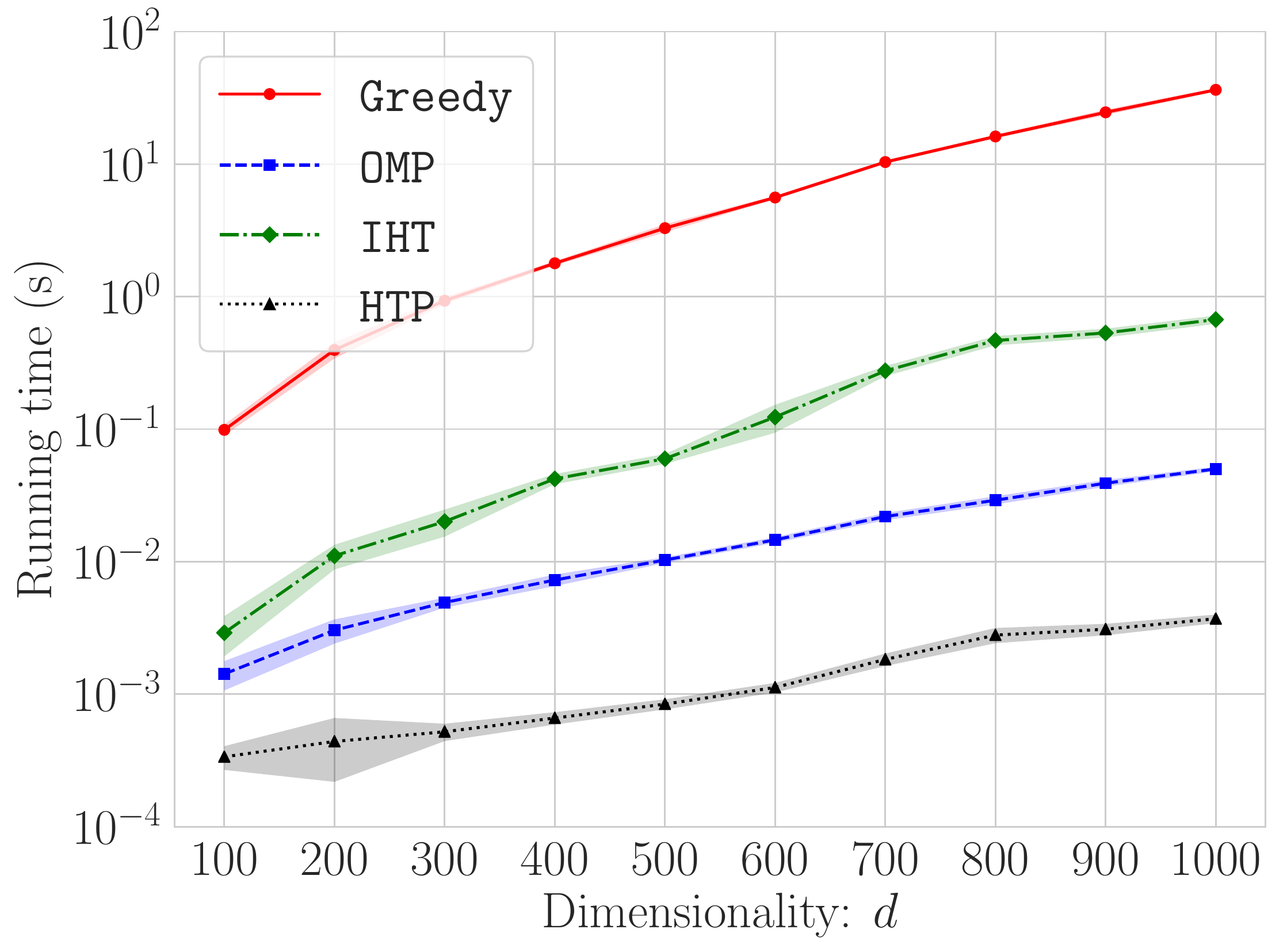}
		\subcaption{$m=\k$, Ill-conditioned}
		\label{fig:ill_time_1}
		&
		\centering
		\includegraphics[width=1.0\linewidth]{./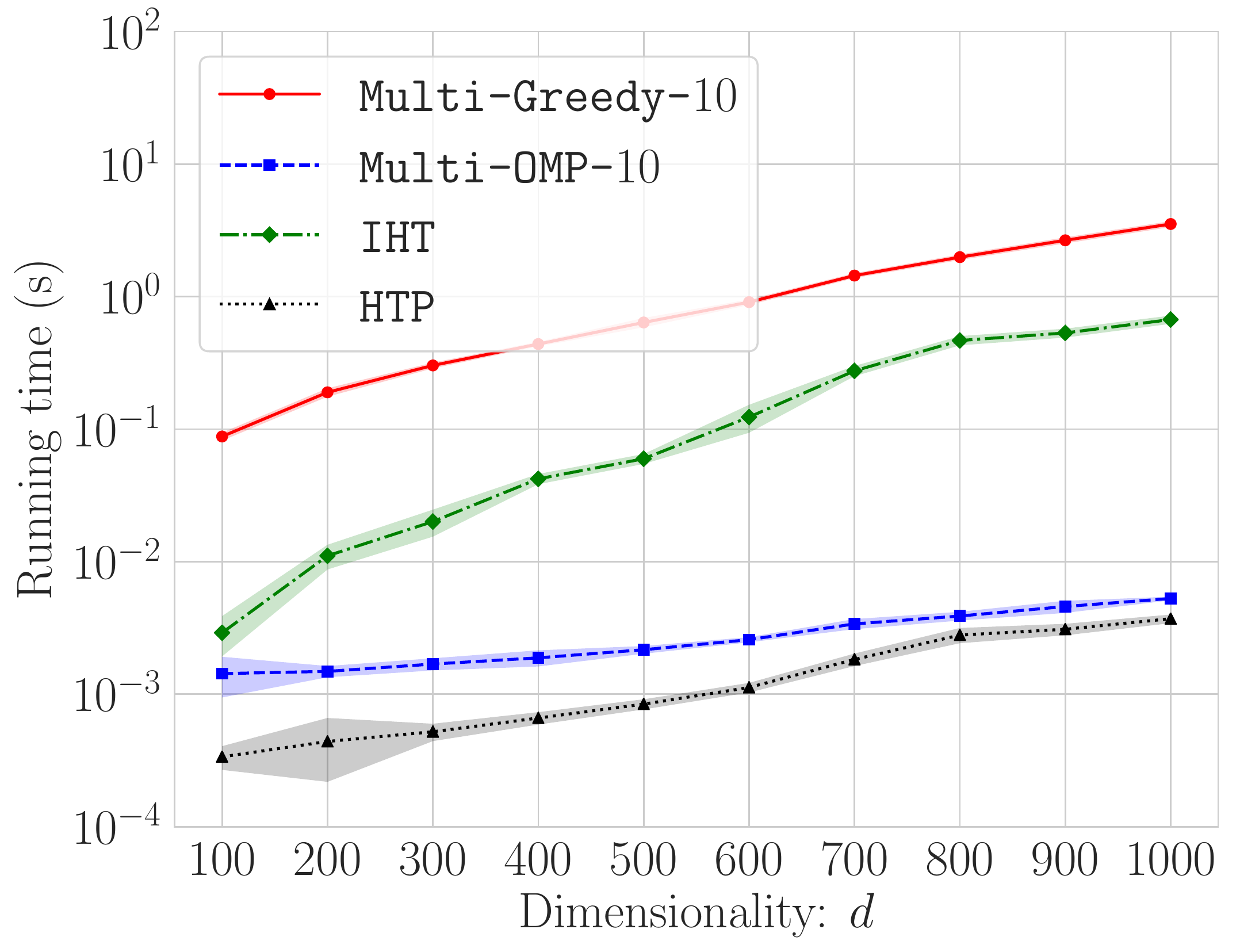}
		\subcaption{$m=10$, Ill-conditioned}
		\label{fig:ill_time_2}
		&
		\centering
		\includegraphics[width=1.0\linewidth]{./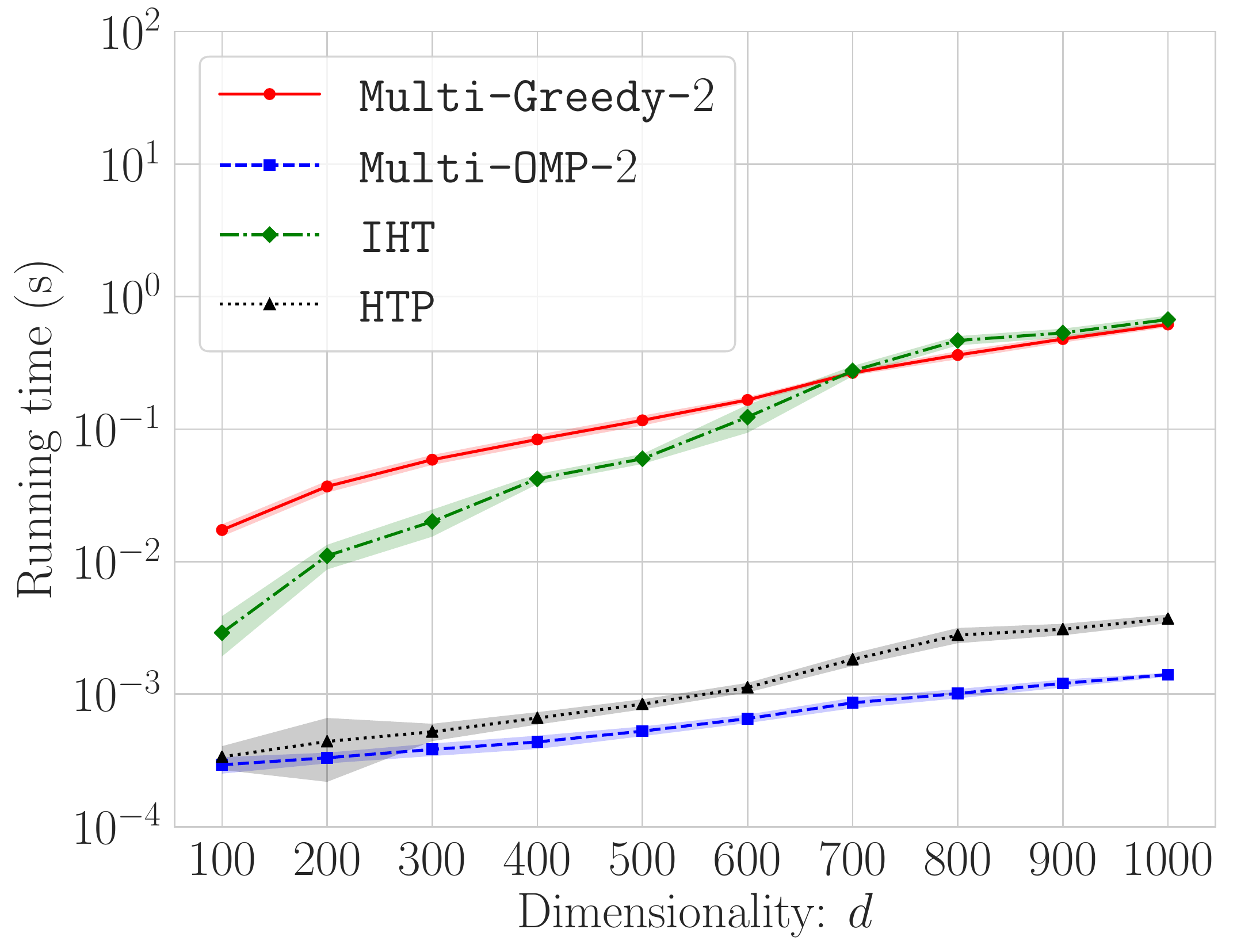}
		\subcaption{$m=2$, Ill-conditioned}
		\label{fig:ill_time_3}
	\end{tabular}
	\begin{tabular}{p{.3\textwidth}p{.3\textwidth}p{.3\textwidth}}
		\centering
		\includegraphics[width=1.0\linewidth]{./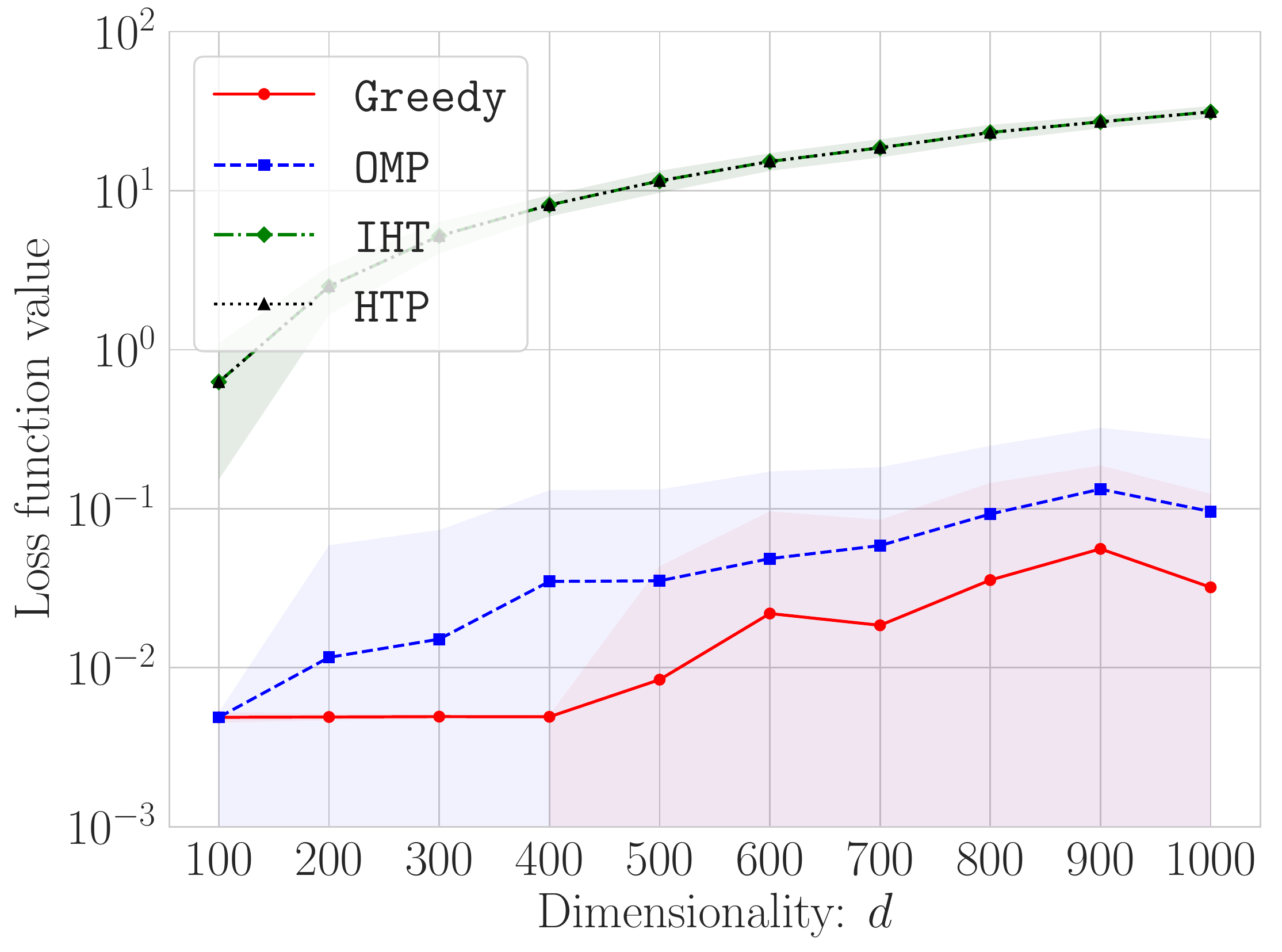}
		\subcaption{$m=\k$, Ill-conditioned}
		\label{fig:ill_loss_1}
		&
		\centering
		\includegraphics[width=1.0\linewidth]{./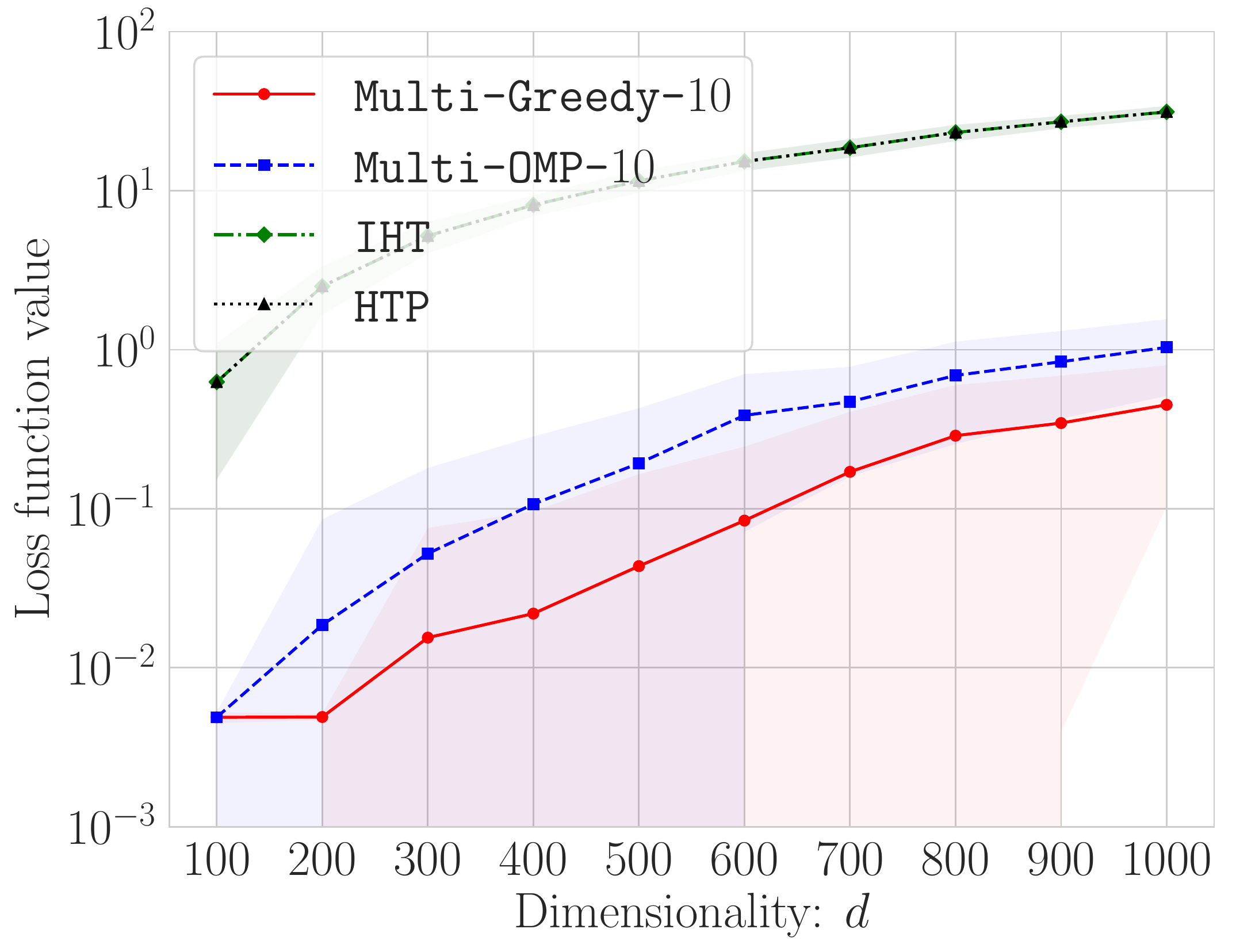}
		\subcaption{$m=10$, Ill-conditioned}
		\label{fig:ill_loss_2}
		&
		\centering
		\includegraphics[width=1.0\linewidth]{./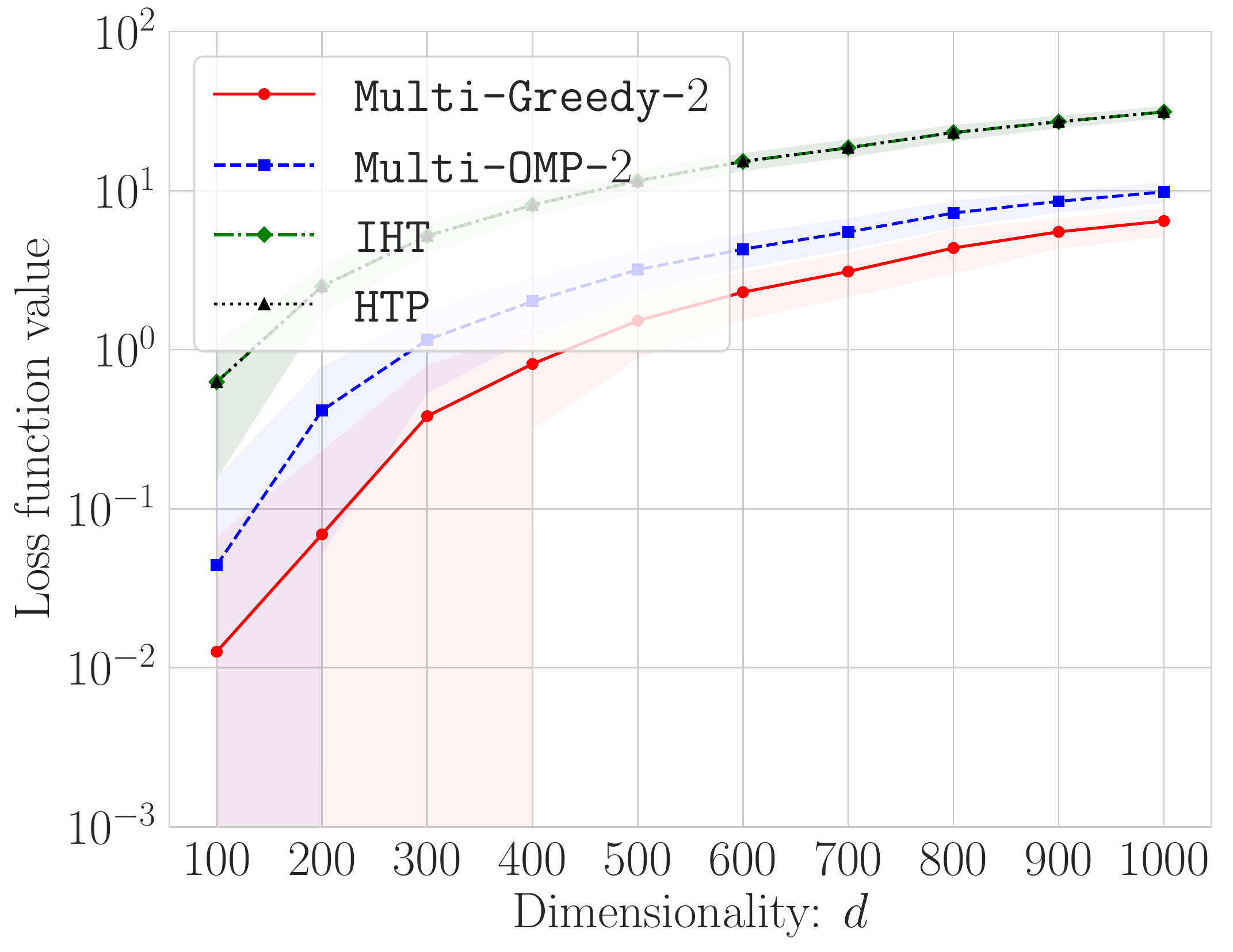}
		\subcaption{$m=2$, Ill-conditioned}
		\label{fig:ill_loss_3}
	\end{tabular}
	\caption{
		Semi-log Plots of Running Times and 
		Loss Function Values for Well- and Ill-conditioned Instances.  
		Figures (a)--(f) and (g)--(l) 
		correspond to well- and 
		ill-conditioned instances, respectively. 
		The left, middle, and right figures 
		show the results with 
		$m=\k, 10,$ and $2$, 
		respectively. 
		Each curve and error band indicate the 
		mean and standard deviation, 
		respectively, calculated over $100$ random instances. 
	}
	\label{fig:syn}
\end{figure}

\paragraph{Settings}
We consider well- and ill-conditioned sparse regression instances. 
Given
design matrix $\A\in\R^{n \times \d}$
and vector
$\yb\in\R^{n}$,
we use the square loss function:
$\ls(\xb)\coloneqq\frac{1}{2n}\|\yb-\A\xb\|_2^2$.
%
We randomly generate well- and ill-conditioned instances as follows:  
We set the first $\k$ entries
of the true sparse solution, $\xtrue$, at 1 and the others at 0. 
In the well-conditioned case, 
we draw each entry of
$\A$ from the standard normal distribution, denoted by $\unif$. 
In the ill-conditioned case, 
we draw each row of $\A$ from a correlated $d$-dimensional normal distribution, whose mean and correlation coefficient are set at $0$ and $0.3$, respectively. 
Then, for both well- and ill-conditioned instances, 
we set $\yb=\A\xtrue + 0.1\ub$, 
where each entry of $\ub\in\R^{n}$ is drawn from $\unif$.
We consider various dimensionalities: $\d=100,200,\dots,1000$. 
We let 
$\k=0.1\d$ and $n=\lfloor10\k\log\d\rfloor$. 
For each $d$ value, we generate $100$ random instances as above. 
We apply the multi-stage algorithms with $m=\k, 10,$ and $2$ iterations 
to the instances, 
where $m=\k$ corresponds to the standard \greedy{}/\omp{}. 
We use the projected-gradient-based methods (\iht{} and \iht{}) as baselines. 
We evaluate these methods in terms of running times and 
loss function values. 

\paragraph{Results}
\Cref{fig:syn} summarizes the results. 
We see that the multi-stage algorithms speed up as $m$ decreases; 
they can become as fast as \iht{}/\htp. 
In the well-conditioned case, 
\momp{-$2$} is the fastest, 
and 
all the methods achieve the same loss function 
values, 
implying that the well-conditioned instances 
are so easy as to be solved almost optimally by all the methods. 
In the ill-conditioned case, 
as mentioned in 
\Cref{subsubsec:real}, 
greedy-style methods achieve 
better loss function values than 
the projected-gradient-based methods. 
We see that the parameter, $m$, 
of multi-stage algorithms  
controls the trade-off 
between the running times and loss function values. 
%
To conclude, 
multi-stage algorithms with appropriate $m$ values 
can outperform projected-gradient-based 
methods both in running time and solution quality, 
particularly when the instances are ill-conditioned.

\clearpage

\section{Fixed-parameter Tractability}\label{a_sec:fpt}
In this section, we prove the guarantee of the randomized FPT algorithm 
for WMM. 
This result is an extension of~\citep{skowron2017fpt}, 
which developed the randomized FPT algorithm for a subclass of 
monotone submodular maximization. 
\begin{algorithm}[htb]
	\caption{Randomized FPT algorithm}
	\label{a_alg:fpt}
	\begin{algorithmic}[1]
		\State Execute \singlerun~$T$ times
		and return the best solution.
		\Function{\singlerun}{}
		\State $\Ss_{0}\gets\emptyset$
		\For{$i=1,\dots,k$}
		\State Choose $j\in\dset\bs\Ss_{i-1}$ randomly
		with probability proportional to $\Fdel{j}{\Ss_{i-1}}$.
		\State $\Ss_{i}\gets\Ss_{i-1}\cup\{j\}$.
		\EndFor
		\State \return~$\Ss_k$
		\EndFunction
	\end{algorithmic}
\end{algorithm}

Let $\Sso$ be an optimal solution; i.e., $\Sso\in\argmax_{\Ss:|\Ss|\le\k}\F(\Ss)$.  
We first prove a key lemma, which provides a lower bound of the probability that
$j\in\Sso$ is chosen in each iteration of \singlerun. 
\begin{alem}\label{a_lem:prob_bound}
	For $i\in[k]$,
	let $\Ss_{i-1}$ be the partial solution
	that is constructed in the loops of \singlerun.
	Then
	the probability $p\in[0,1]$ that newly chosen $j\in\dset\bs\Ss_{i-1}$ is included
	in $\Ss^*$
	is bounded from below as follows:
	\[
	p\ge
	\gbk\cdot
	\frac{\Fdel{\Ss^*}{\Ss_{i-1}}}{\Fdel{\dset}{\Ss_{i-1}}}.
	\]
\end{alem}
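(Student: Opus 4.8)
The plan is to write $p$ out exactly and then bound its numerator from below via the submodularity ratio and its denominator from above via the supermodularity ratio. By the sampling rule of \singlerun, at step $i$ an element is drawn from $\dset\bs\Ss_{i-1}$ with probability $\Fdel{j}{\Ss_{i-1}}/Z_i$, where $Z_i\coloneqq\sum_{l\in\dset\bs\Ss_{i-1}}\Fdel{l}{\Ss_{i-1}}$ is the normalization. (If $\F(\Ss_{i-1})=\F(\dset)$ there is nothing to prove, since then $\Ss_{i-1}$ already attains the optimum and the asserted inequality holds under the convention $0/0=1$; so I assume $\F(\Ss_{i-1})<\F(\dset)$, which makes the distribution well-defined.) Summing the atomic probabilities over $\Ss^*$ gives
\[
p=\frac{\sum_{j\in\Ss^*\bs\Ss_{i-1}}\Fdel{j}{\Ss_{i-1}}}{\sum_{l\in\dset\bs\Ss_{i-1}}\Fdel{l}{\Ss_{i-1}}}.
\]

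For the numerator I would apply the definition of SBR with $\Ls=\Ss_{i-1}$ and $\Ss=\Ss^*\bs\Ss_{i-1}$: these are disjoint, $\Ls\subseteq\Ss_{i-1}$, and $|\Ss^*\bs\Ss_{i-1}|\le|\Ss^*|\le\k$, so $\sum_{j\in\Ss^*\bs\Ss_{i-1}}\Fdel{j}{\Ss_{i-1}}\ge\br_{\Ss_{i-1},\k}\,\Fdel{\Ss^*\bs\Ss_{i-1}}{\Ss_{i-1}}$, and $\Fdel{\Ss^*\bs\Ss_{i-1}}{\Ss_{i-1}}=\F(\Ss^*\cup\Ss_{i-1})-\F(\Ss_{i-1})=\Fdel{\Ss^*}{\Ss_{i-1}}$. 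Since $|\Ss_{i-1}|=i-1\le\k$ and $\brkk=\min_{|\Us|\le\k}\br_{\Us,\k}$, this yields $\sum_{j\in\Ss^*\bs\Ss_{i-1}}\Fdel{j}{\Ss_{i-1}}\ge\brkk\,\Fdel{\Ss^*}{\Ss_{i-1}}$. Symmetrically, for the denominator I would use the definition of SPR with $\Ls=\Ss_{i-1}$ and $\Ss=\dset\bs\Ss_{i-1}$, which has size at most $\d$: $\sum_{l\in\dset\bs\Ss_{i-1}}\Fdel{l}{\Ss_{i-1}}\le\pr_{\Ss_{i-1},\d}^{-1}\,\Fdel{\dset\bs\Ss_{i-1}}{\Ss_{i-1}}=\pr_{\Ss_{i-1},\d}^{-1}\,\Fdel{\dset}{\Ss_{i-1}}\le\pr_{\k,\d}^{-1}\,\Fdel{\dset}{\Ss_{i-1}}$, the last step using $\pr_{\k,\d}=\min_{|\Us|\le\k}\pr_{\Us,\d}$ together with $|\Ss_{i-1}|\le\k$.

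Substituting these two estimates into the expression for $p$ gives
\[
p\ \ge\ \frac{\brkk\,\Fdel{\Ss^*}{\Ss_{i-1}}}{\pr_{\k,\d}^{-1}\,\Fdel{\dset}{\Ss_{i-1}}}\ =\ \gbk\cdot\frac{\Fdel{\Ss^*}{\Ss_{i-1}}}{\Fdel{\dset}{\Ss_{i-1}}},
\]
which is the claim. I do not anticipate a substantial obstacle: the argument is essentially a one-line computation wrapped around the two defining inequalities of SBR and SPR. The only points requiring attention are bookkeeping — verifying the disjointness/containment/size hypotheses so that the intended instances of the two ratios apply, in particular that $|\Ss_{i-1}|\le\k$ at every step so that $\br_{\Ss_{i-1},\k}$ and $\pr_{\Ss_{i-1},\d}$ may be replaced by the uniform quantities $\brkk$ and $\pr_{\k,\d}$ — and pinning down the trivial/degenerate case where $\F(\Ss_{i-1})=\F(\dset)$.
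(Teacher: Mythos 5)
Your proposal is correct and follows essentially the same route as the paper's proof: express $p$ as the ratio of marginal-gain sums, lower-bound the numerator by the SBR inequality applied to $\Ls=\Ss_{i-1}$, $\Ss=\Sso\bs\Ss_{i-1}$, upper-bound the denominator by the SPR inequality applied to $\Ls=\Ss_{i-1}$, $\Ss=\dset\bs\Ss_{i-1}$, and then relax to the uniform constants $\brkk$ and $\pr_{\k,\d}$ using $|\Ss_{i-1}|\le\k$. The only cosmetic difference is that the paper passes through the tighter intermediate ratios $\br_{\Ss_{i-1},|\Sso\bs\Ss_{i-1}|}$ and $\pr_{\Ss_{i-1},|\dset\bs\Ss_{i-1}|}$ before relaxing, while you go directly to $\br_{\Ss_{i-1},\k}$ and $\pr_{\Ss_{i-1},\d}$; the final bound is identical.
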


\begin{proof}
	The proof is obtained directly from the definitions of 
	SBR and SPR as follows:
	\begin{align}
	p=\frac{\sum_{j\in\Ss^*\bs\Ss_{i-1}}\Fdel{j}{\Ss_{i-1}}}{\sum_{j\in\dset\bs\Ss_{i-1}}\Fdel{j}{\Ss_{i-1}}}
	\ge
	{\br_{\Ss_{i-1},|\Sso\bs\Ss_{i-1}|}}{\pr_{\Ss_{i-1},|\dset\bs\Ss_{i-1}|}}
	\cdot
	\frac{\Fdel{\Ss^*}{\Ss_{i-1}}}{\Fdel{\dset}{\Ss_{i-1}}}
	\ge
	\gbk\cdot
	\frac{\Fdel{\Ss^*}{\Ss_{i-1}}}{\Fdel{\dset}{\Ss_{i-1}}}.
	\end{align}
\end{proof}
Using this lemma we obtain the theorem as follows: 

\begin{thm}\label{a_thm:fpt}
	Assume $\F$ to be ($\brkk, \pr_{\k,\d}$)-WM. 
	Let $\Sso$ be an optimal solution for problem~\eqref{prob:main_F} 
	and 
	$\tilde{\F}\coloneqq\F(\dset)-\F(\Sso)$.   
	For any $\epsilon>0$,
	if 
	$
	T\ge\Tfpt
	$, 
	then \Cref{alg:fpt} returns solution $\Ss$ satisfying
	$\F(\Ss)\ge\F(\Sso)-\epsilon$ with a probability of at least $1-\delta$.
\end{thm}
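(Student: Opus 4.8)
The plan is to show that a single execution of \singlerun{} returns an $\epsilon$-error solution with probability at least $q^{\k}$, where $q\coloneqq\brkk\,\pr_{\k,\d}\cdot\frac{\epsilon}{\tilde{\F}+\epsilon}$, and then to boost this over the $T$ independent runs. I would first establish a per-step estimate. Fix an iteration $i$ and condition on the partial solution $\Ss_{i-1}$ built so far. If $\Fdel{\Sso}{\Ss_{i-1}}\le\epsilon$, then monotonicity gives $\F(\Ss_{i-1})\ge\F(\Ss_{i-1}\cup\Sso)-\epsilon\ge\F(\Sso)-\epsilon$, so $\Ss_{i-1}$ --- hence every later $\Ss_j$ --- is already an $\epsilon$-error solution. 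Otherwise $\Fdel{\Sso}{\Ss_{i-1}}>\epsilon$, and \Cref{a_lem:prob_bound} bounds the probability that the element picked at step $i$ lies in $\Sso$ below by $\brkk\,\pr_{\k,\d}\cdot\frac{\Fdel{\Sso}{\Ss_{i-1}}}{\Fdel{\dset}{\Ss_{i-1}}}$. Here the crucial observation is
\[
\Fdel{\dset}{\Ss_{i-1}}=\bigl(\F(\dset)-\F(\Ss_{i-1}\cup\Sso)\bigr)+\Fdel{\Sso}{\Ss_{i-1}}\le\tilde{\F}+\Fdel{\Sso}{\Ss_{i-1}},
\]
again by monotonicity; since $t\mapsto t/(\tilde{\F}+t)$ is increasing and $\Fdel{\Sso}{\Ss_{i-1}}>\epsilon$, this probability is at least $q$. (If $\Fdel{\dset}{\Ss_{i-1}}=0$ the sampling step is degenerate, but then the first case applies and nothing needs to be said.)

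Next I would chain the $\k$ steps. Let $E_i$ be the event that at step $i$ either $\Fdel{\Sso}{\Ss_{i-1}}\le\epsilon$ or the chosen element lies in $\Sso$; the previous paragraph shows $\Pr[E_i\mid\Ss_{i-1}]\ge q$ for every reachable $\Ss_{i-1}$, so the chain rule gives $\Pr\!\left[\bigcap_{i=1}^{\k}E_i\right]\ge q^{\k}$. On the event $\bigcap_i E_i$ one has $\F(\Ss_{\k})\ge\F(\Sso)-\epsilon$: if the first alternative of some $E_i$ ever occurs the conclusion is immediate from the displayed argument; otherwise every added element lies in $\Sso$, and since $\k$ distinct elements are added (each drawn from $\dset\setminus\Ss_{i-1}$) we get $\Ss_{\k}=\Sso$, where we use that an optimal $\Sso$ may be taken of size $\k$ by monotonicity. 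Thus \singlerun{} succeeds with probability at least $q^{\k}$.

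Finally, since the $T$ runs are independent and \Cref{a_alg:fpt} returns the best among them, the failure probability is at most $(1-q^{\k})^{T}\le e^{-q^{\k}T}$, which is $\le\delta$ once $T\ge q^{-\k}\log\delta^{-1}=\left(\frac{1}{\brkk\,\pr_{\k,\d}}\cdot\frac{\tilde{\F}+\epsilon}{\epsilon}\right)^{\k}\log\delta^{-1}$, precisely the stated threshold. I expect the only subtle point to be the per-step estimate: recognising that the right comparison is $\Fdel{\Sso}{\Ss_{i-1}}$ against $\tilde{\F}+\Fdel{\Sso}{\Ss_{i-1}}$ rather than against $\F(\dset)$ or $\F(\Sso)$ is exactly what makes the run count scale with $\tilde{\F}=\F(\dset)-\F(\Sso)$; the rest is a routine amplification.
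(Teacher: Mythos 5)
Your proof is correct and follows essentially the same route as the paper's: the same per-step bound via \Cref{a_lem:prob_bound}, the same comparison of $\Fdel{\Sso}{\Ss_{i-1}}$ against $\tilde{\F}+\Fdel{\Sso}{\Ss_{i-1}}$ yielding the $\frac{\epsilon}{\tilde{\F}+\epsilon}$ factor, the same $q^{\k}$ success probability for a single run, and the same $(1-q^{\k})^T\le e^{-q^{\k}T}$ amplification. The only cosmetic difference is that you case-split on $\Fdel{\Sso}{\Ss_{i-1}}\le\epsilon$ rather than on $\F(\Ss_{i-1})\ge\F(\Sso)-\epsilon$, and you are somewhat more explicit about the chain-rule conditioning.
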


\begin{proof}
	We consider a single invocation of \singlerun.
	In each $i$-th iteration ($i\in[k]$),
	one of the following two conditions occurs:
	\begin{align}
	\F(\Ss_{i-1}) \ge \F(\Sso) - \epsilon,
	\label{a_eq:one} \\
	\F(\Ss_{i-1}) < \F(\Sso) - \epsilon.
	\label{a_eq:two}
	\end{align}
	Once~\eqref{a_eq:one} occurs
	for some $i\in[k]$,
	then we have
	$\F(\Ss_\k)\ge\F(\Ss_{i-1})\ge\F(\Sso)-\epsilon$
	thanks to the monotonicity of $\F$.
	If~\eqref{a_eq:two} occurs,
	we have 
	\[
	\frac{\Fdel{\Ss^*}{\Ss_{i-1}}}{\Fdel{\dset}{\Ss_{i-1}}}
	\ge
	\frac{\F(\Sso)-\F(\Ss_{i-1})}{\F(\dset)-\F(\Ss_{i-1})}
	=
	\frac{\F(\Sso)-\F(\Ss_{i-1})}{\tilde{\F}
		+ \F(\Sso) - \F(\Ss_{i-1})}
	> \frac{\epsilon}{\tilde{\F}+\epsilon}.
	\]
	Hence, newly chosen $j\in\dset\bs\Ss_{i-1}$
	is included in $\Ss^*$ with probability $p\ge\gbk\cdot
	\frac{\epsilon}{\tilde{\F}+\epsilon}$ thanks to Lemma~\ref{a_lem:prob_bound};
	if this occurs $k$ times,
	we have
	$\F(\Ss_\k)=\F(\Sso)\ge\F(\Sso)-\epsilon$.
	Consequently,
	\singlerun\
	returns $\Ss_\k$ that satisfies
	$\F(\Ss_\k)\ge\F(\Sso)-\epsilon$
	with a probability of
	at least $q\coloneqq\Big(\gbk\cdot
	\frac{\epsilon}{\tilde{\F}+\epsilon}\Big)^k$.
	Therefore, by setting
	$
	T
	\ge
	\Big\lceil\frac{\log\delta^{-1}}{q}\Big\rceil
	=
	\Big\lceil\big(\bgk\cdot
	\frac{\tilde{\F}+\epsilon}{\epsilon}\big)^k\log\delta^{-1}\Big\rceil	
	$,
	we can see that Algorithm~\ref{a_alg:fpt} finds a solution $\Ss$
	such that
	$\F(\Ss)\ge\F(\Sso)-\epsilon$
	with a probability of at least
	\[
	1-(1-q)^T
	\ge1-(1-q)^{-\frac{\log\delta}{q}}
	\ge 1-e^{\log \delta}=1-\delta.
	\]
	Thus, the proof is completed. 
\end{proof}

\clearpage

\section{Hardness of Improving  Approximation Ratio}\label{a_sec:hardness}
We first prove the hardness result (\Cref{thm:hard}) in \Cref{a_subsec:hardness_proof}, 
and we then discuss the hardness for some easier subclasses of WMM 
in \Cref{a_subsec:discussion}.

\subsection{Proof of~\Cref{thm:hard}}\label{a_subsec:hardness_proof}
We prove the hardness result.
As with the proof of~\citep{nemhauser1978best}, 
we design objective function $\F$ appropriately 
and show that the problem of achieving
an approximation guarantee 
that exceeds $1-e^{-\brkk}$ 
is at least as hard as 
another problem that cannot be solved in polynomial time;  
roughly speaking, 
given 
an unknown subset $\Ms$ of size $k$, 
we consider seeking $\Ss\subseteq\dset$
such that
$|\Ss\cap\Ms|\ge r+1$ and $|\Ss|\le \pk\coloneqq2k-r+1$,
where $r\le \k$ is any positive integer.

We explain how to design $\F$. 
Fix the unknown subset $\Ms\subseteq\dset$ of size $k$. 
For any $\Ss\subseteq\dset$, 
we define the function value, $\F(\Ss)$, 
so that it depends only on 
$\ns\coloneqq|\Ss|$,
$\ms\coloneqq|\Ss\cap\Ms|$,
$r$, 
and $k$. 
We denote such a function by $\Gr(\ms,\ns)$,
and
we let $\F(\Ss)\coloneqq\Gr(|\Ss\cap\Ms|,|\Ss|)=\Gr(\ms,\ns)$.
For any integers 
$m\in[0,k]$ and $n\in[0,d]$ 
such that $m \le n$, 
we define the value of 
$\Gr(m,n)$ so as to satisfy the following properties: 

\begin{description}
	\item[Property 1:]
	$\F(\cdot)=\Gr(\cdot,\cdot)$
	is monotone, 
	and its 
	SBR $\brkk$
	and 
	SPR $\prkk$ 
	satisfy 
	$\brkk=1$ and 
	$\prkk\ge\prlb=
	\frac{1}{2} - \frac{1}{2}\cdot\frac{r-1}{2k-r+1}$,
	respectively.
	\item[Property 2:]
	For any $m\in[0,r]$ and $n\in[0,\d]$,
	the value of $\Gr(m,n)$ is independent of $m$;
	i.e.,
	$\Gr(0,n)=\Gr(1,n)=\cdots=\Gr(r,n)$.
	\item[Property 3:]
	$\max_{m,n:0 \le m \le n\le k}\Gr(m,n)=\Gr(k,k)=k(k-r+1)^{k-r}$.
	\item[Property 4:]
	For any $n>\pk=2k-r+1$ and $m\in[0,k]$,
	we have
	$\Gr(m,n)=k(k-r+1)^{k-r}$.
	\item[Property 5:]
	$
	\frac{\Gr(0,k)}{\Gr(k,k)}
	=
	\frac{\Gr(1,k)}{\Gr(k,k)}
	=
	\dots
	=
	\frac{\Gr(m,k)}{\Gr(k,k)}
	=1-\left(\frac{k-r+1}{k}\right)
	\left(\frac{k-r}{k-r+1}\right)^{k-r+1}
	\eqqcolon\alpha^{r-1}_k
	$.
\end{description}
As in~\citep[Lemma 4.1]{nemhauser1978best}, 
given monotone set function $\F(\Ss)=\Gr(\ms,\ns)$ 
that satisfies Properties~2--5, 
to achieve an approximation guarantee that 
exceeds $\alpha^{r-1}_k$ 
is at least as hard as the following problem:
\begin{quote}
	For the unknown subset $\Ms\subseteq\dset$ of size $k$, 
	find $\Ss\subseteq\dset$ 
	that satisfies 
	$|\Ss\cap\Ms|\ge r+1$ and $|\Ss|\le\pk$ 
	by using the following feedback:  
	Once $\Ss$ is proposed, 
	we are informed whether or not 
	$\Ss$ satisfies 
	$|\Ss\cap\Ms|\ge r+1$ and $|\Ss|\le\pk$. 
\end{quote}
Intuitively, 
this can be proved as follows. 
From Properties~2, 3 and 5, 
if we are to achieve 
an approximation guarantee that exceeds $\alpha_k^{r-1}$, 
we need at least to find $\Ss$ such that $\ms\ge r+1$ and $\ns\le\k$ ($\le\pk$), 
while the information about $\Gr$ values is worthless as long as 
$\ms\le r$ and/or $\ns>\pk$ due to Properties~2 and~4. 
This fact connects the original maximization problem to the above problem. 
Since $\Ms$ is unknown and 
no clue can be obtained 
by examining $\Ss$ 
if it violates $|\Ss\cap\Ms|\ge r+1$ 
and/or
$|\Ss|\le\pk$, 
the above problem cannot be solved via polynomially many queries. 
More precisely, the following proposition holds 
({see, the proof of \citep[Theorem 4.2]{nemhauser1978best}}): 
\begin{aprop}\label{a_prop:nem}
	Consider the maximization problem of form 
	$\max_{\Ss:|\Ss|\le\k} \F(\Ss)$, 
	where $\F(\Ss)=\Gr(\ms,\ns)$ has monotonicity 
	and Properties 2--5.  
	For this problem, 
	to achieve an approximation guarantee that exceeds $\alpha_k^{r-1}$ requires us to evaluate $\F$
	at least $\Omegarm(\d^{r+1}/k^{2r+2})$ times. 
\end{aprop}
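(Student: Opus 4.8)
The plan is to prove this by a value‑oracle adversary argument in the style of \citep{nemhauser1978best}. I would fix a deterministic algorithm $\mathcal{A}$ that, given a value oracle for $\F(\Ss)=\Gr(\ms,\ns)$, makes $Q$ queries and then outputs a feasible set $\Sh$ with $|\Sh|\le\k$ (randomized algorithms being reduced to this below by fixing the internal coins and averaging). I would run $\mathcal{A}$ against the ``oblivious'' oracle that answers every query $\Ss$ by $\phi(|\Ss|)$, where $\phi(n)\coloneqq\Gr(0,n)$ for $n\le\pk$ and $\phi(n)\coloneqq\Gr(k,k)$ for $n>\pk$. The first point to record is that each answer depends only on the \emph{cardinality} of the queried set, which $\mathcal{A}$ itself controls; hence the entire transcript --- the queried sets $\Ss_1,\dots,\Ss_Q$ together with the output $\Sh$ --- is one fixed object, independent of the hidden set $\Ms$.

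Next I would draw $\Ms$ uniformly from $\binom{\dset}{\k}$ and call it \emph{good} if $|\Ss_i\cap\Ms|\le r$ for every query $\Ss_i$ with $|\Ss_i|\le\pk$ and also $|\Sh\cap\Ms|\le r$. I would then verify that a good $\Ms$ yields a genuine hard instance: for $\F(\Ss)=\Gr(|\Ss\cap\Ms|,|\Ss|)$ every oblivious answer $\phi(|\Ss_i|)$ coincides with $\F(\Ss_i)$ --- by Property~2 when $|\Ss_i|\le\pk$ (since $|\Ss_i\cap\Ms|\le r$) and by Property~4 when $|\Ss_i|>\pk$ --- so $\mathcal{A}$ really outputs $\Sh$ on this instance; and extending $\Sh$ by elements outside $\Ms$ to a set of size $\k$ (possible once $\d\ge2\k$) and combining monotonicity with Property~2 gives $\F(\Sh)\le\Gr(r,\k)$, which by Properties~3 and~5 equals $\alpha^{r-1}_k\,\Gr(k,k)=\alpha^{r-1}_k\,\F(\Sso)$. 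Thus whenever some $\Ms$ is good, $\mathcal{A}$ fails to exceed the $\alpha^{r-1}_k$‑approximation on the corresponding instance; an infeasible output $|\Sh|>\k$ fails trivially and may be excluded from the start.

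Then I would bound the chance that no $\Ms$ is good. For a fixed $\Ss$ with $s\coloneqq|\Ss|\le\pk\le2\k$, a union bound over the $(r+1)$‑subsets of $\Ss$ gives
\[
\Pr_{\Ms}[\,|\Ss\cap\Ms|\ge r+1\,]\;\le\;\binom{s}{r+1}\frac{\binom{\d-r-1}{\k-r-1}}{\binom{\d}{\k}}\;\le\;\binom{2\k}{r+1}\Bigl(\frac{\k}{\d}\Bigr)^{r+1}\;\le\;\frac{1}{(r+1)!}\Bigl(\frac{2\k^{2}}{\d}\Bigr)^{r+1},
\]
using that $\binom{\d-r-1}{\k-r-1}/\binom{\d}{\k}=\prod_{i=0}^{r}\frac{\k-i}{\d-i}\le(\k/\d)^{r+1}$. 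Summing over the at most $Q+1$ relevant sets (queries of size $\le\pk$, plus the output) yields $\Pr_{\Ms}[\Ms\text{ not good}]\le(Q+1)\,\frac{1}{(r+1)!}(2\k^{2}/\d)^{r+1}$. For $\mathcal{A}$ to exceed the $\alpha^{r-1}_k$‑approximation on \emph{every} instance this quantity must be at least $1$, i.e.\ $Q\ge\frac{(r+1)!}{2^{r+1}}\cdot\frac{\d^{r+1}}{\k^{2r+2}}-1=\Omegarm(\d^{r+1}/\k^{2r+2})$ because $r$ is a fixed integer; for randomized $\mathcal{A}$ one takes the probability jointly over its coins and $\Ms$, concluding the same way.

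\textbf{Main obstacle.} The delicate step is the indistinguishability argument: one must argue carefully that, for a good $\Ms$, the oblivious oracle is faithful, so $\mathcal{A}$ cannot separate this instance from the ``all queries miss $\Ms$'' scenario --- this is precisely where Properties~2 and~4 are used, and where determinism is needed to freeze the transcript so that a single draw of $\Ms$ suffices for the union bound. Handling an infeasible output and pinning down the hypergeometric constant so the bound reads $\Omegarm(\d^{r+1}/\k^{2r+2})$ are routine. Overall this is the argument of \citep[Theorem~4.2]{nemhauser1978best}; the only genuinely new input is that Properties~1--5 have been arranged so that the same adversary works for a weakly modular, rather than submodular, objective.
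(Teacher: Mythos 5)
Your proposal is correct and is essentially the same argument the paper relies on: the paper does not write out a proof of this proposition but defers to the adversary/indistinguishability argument of \citet[Theorem 4.2]{nemhauser1978best}, which is exactly what you have reconstructed (oblivious cardinality-only oracle, a fixed transcript for a deterministic algorithm, Properties~2 and~4 giving faithfulness on a ``good'' hidden set $\Ms$, Properties~2, 3 and~5 bounding the output value by $\alpha^{r-1}_k\F(\Sso)$, and a hypergeometric union bound yielding $\Omegarm(\d^{r+1}/\k^{2r+2})$). Your accounting of the $Q+1$ relevant sets, the extension of $\Sh$ to size $\k$ using $\d\ge2\k$, and the reduction of randomized algorithms by fixing coins are all consistent with that reference, so there is nothing to correct.
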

By using the above properties and proposition, 
we obtain the main theorem. 
\begin{thm} 
	Consider a class of problems of form
	$\max_{\Ss:|\Ss|\le\k}\F(\Ss)$ 
	that satisfies the following conditions: 
	$\F$ is monotone and 
	has SBR
	$\brkk=1$ and 
	SPR
	$\prkk\ge 1/2 - \Thetarm(1/k)
	\overset{\k\to\infty}{\longrightarrow}1/2$. 
	For this class of problems,  
	no algorithms that evaluate $\F$  
	only on polynomially many subsets can 
	achieve an approximation guarantee that exceeds 
	$1-e^{-1}=1-e^{-\brkk}$ in general. 
\end{thm}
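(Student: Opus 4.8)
The plan is to adapt the symmetry-based value-oracle argument of~\citep{nemhauser1978best}. I would fix an unknown ``hidden'' set $\Ms\subseteq\dset$ with $|\Ms|=\k$ and a positive integer $r\le\k$, and build $\F$ so that $\F(\Ss)$ depends only on $(\ms,\ns)\coloneqq(|\Ss\cap\Ms|,|\Ss|)$, i.e.\ $\F(\Ss)=\Gr(\ms,\ns)$ for a symmetric gadget $\Gr$ satisfying Properties~1--5. Granting such a $\Gr$, \Cref{a_prop:nem} (the quantitative form of~\citep[Thm.~4.2]{nemhauser1978best}) gives that any algorithm beating the ratio $\alpha_k^{r-1}$ must query $\F$ at least $\Omegarm(\d^{r+1}/\k^{2r+2})$ times: by Properties~2 and~4 the oracle answers carry no usable information until one has located a set with $|\Ss\cap\Ms|\ge r+1$ and $|\Ss|\le\pk=2\k-r+1$, which by symmetry requires that many queries.

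First I would write $\Gr$ down explicitly, generalizing the Nemhauser--Wolsey gadget so that marginal values decay at the geometric rate $(1-1/\k)$, normalized so that the first $r$ elements of $\Ms$ behave exactly like ordinary elements (Property~2), the maximum over $\{\,|\Ss|\le\k\,\}$ is $\Gr(\k,\k)=\k(\k-r+1)^{\k-r}$ and is attained at $\Ss=\Ms$ (Property~3), $\Gr(m,\cdot)$ is saturated at that value once $|\Ss|>\pk$ (Property~4), and $\Gr(m,\k)/\Gr(\k,\k)=\alpha_k^{r-1}=1-\tfrac{\k-r+1}{\k}\bigl(\tfrac{\k-r}{\k-r+1}\bigr)^{\k-r+1}$ for $m\le r$ (Property~5). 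With this closed form, Properties~2--5 are routine bookkeeping, as is the limit $\alpha_k^{r-1}\to1-e^{-1}$ as $\k\to\infty$ with $r$ fixed (using $\bigl(1-\tfrac1{\k-r+1}\bigr)^{\k-r+1}\to e^{-1}$), which is what ties the attainable-ratio bound to $1-e^{-1}$.

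The main obstacle is Property~1 --- showing that $\Gr$ is genuinely WM with $\brkk=1$ and, above all, $\prkk\ge\prlb=\tfrac12-\tfrac12\cdot\tfrac{r-1}{2\k-r+1}$. The bound $\brkk=1$ follows by direct verification (indeed the gadget is built to be submodular, as in~\citep{nemhauser1978best}, so every SBR equals $1$). For $\prkk$ I would first rewrite $\prkk=\min\bigl\{\,\Fdel{\Ss}{\Ls}\big/\textstyle\sum_{j\in\Ss}\Fdel{j}{\Ls}\ :\ \Ls\cap\Ss=\emptyset,\ |\Ls|\le\k,\ |\Ss|\le\k\,\bigr\}$ (with $0/0\coloneqq1$), then exploit that $\F$ is a function of $(\ms,\ns)$ only: substituting the closed form of $\Gr$ and introducing normalized coordinates $x,y,z$ for the relevant cardinalities of $\Ls$ and $\Ls\cup\Ss$ turns this combinatorial minimum into the continuous program of minimizing $\dfrac{z-(z-x)(1-1/\k)^{\k(y-x)}}{x(1-z)+yz}$ over the polytope $0\le x\le y\le\frac{\k}{\k-r+1}$, $0\le x\le z\le1$, $y-z\le\frac{\d}{\k-r+1}-1$. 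The remaining --- and genuinely technical --- step is to lower-bound this program by $\prlb$; I expect this to reduce, after using monotonicity in $z$ and in $y$ to push to an extreme face of the polytope, to a one-variable estimate relying on $(1-1/\k)^t\le e^{-t/\k}$ and convexity.

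Finally I would assemble the contradiction. Suppose some algorithm achieved an approximation ratio $\rho>1-e^{-1}$ on every WM instance with $\brkk=1$ and $\prkk\ge1/2-o(1)$, using only $\poly(\d)$ queries. Choosing $r$ large enough to dominate the degree of that polynomial and letting $\k=\k(\d)\to\infty$ slowly (e.g.\ $\k=\floor{\log\d}$), the family above simultaneously has $\prkk\ge\prlb=1/2-\Thetarm(1/\k)\to1/2$, satisfies $\alpha_k^{r-1}<\rho$ for all large $\d$, and forces $\Omegarm(\d^{r+1}/\k^{2r+2})$ queries to beat $\alpha_k^{r-1}$ --- more than the algorithm performs. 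This contradicts \Cref{a_prop:nem}, completing the proof.
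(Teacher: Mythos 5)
Your plan is essentially the paper's own proof: the same Nemhauser--Wolsey-style gadget $\F(\Ss)=\Gr(|\Ss\cap\Ms|,|\Ss|)$ with Properties~1--5, the same query lower bound via \Cref{a_prop:nem}, the same rewriting of $\prkk$ as the continuous program in $(x,y,z)$, and the same asymptotics ($r$ fixed but large, $\k\to\infty$ with $\d$) to push $\alpha_k^{r-1}\to1-e^{-1}$ and $\prkk\to1/2$. The one step you leave vague --- lower-bounding the continuous program --- is handled in the paper exactly by the inequalities you name, $(1-1/\ell)^{\ell}\le e^{-1}$ followed by $e^{-a}\le\frac{1}{1+a}$, which collapse the objective to $\frac{1}{1+y-x}\ge\prlb$.
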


\begin{proof}
	The proof comprises two parts: 
	(I) 
	we prove the statement by assuming that 
	there exists a function $\F(\Ss)=\Gr(\ms,\ns)$ 
	satisfying Properties 1--5, 
	and 
	(II) 
	we show how to construct such a function.  
	
	\paragraph{Proof of the Statement}
	Take $k$ to be a monotone function of $\d$ 
	that satisfies
	$\lim_{\d\to\infty}k=\infty$ 
	and 
	$k= O(\d^{\frac{1-c}{2}})$, where $c$ is any constant 
	such that $0<c<1$. 
	Thanks to Property~1 and Proposition~\ref{a_prop:nem}, 
	we have the following conditions: 
	\begin{itemize}
		\item 
		$\brkk=1$ and 
		$\prkk\ge
		\frac{1}{2} - \frac{1}{2}\cdot\frac{r-1}{2k-r+1}$. 
		\item
		To achieve an approximation guarantee 
		that is better than $\alpha_k^{r-1}$ requires 
		$\Omegarm(\d^{c(r+1)})$ times function evaluation. 
	\end{itemize}
	Since we can take 
	$r$ to be any fixed positive integer satisfying  $r\le\k=O(\d^{\frac{1-c}{2}})$, 
	we see that $\Omegarm(\d^{c(r+1)})$ is not polynomial in $\d$. 
	Furthermore, 
	we have 
	$\prkk \overset{\k\to\infty}{\longrightarrow}1/2$ 
	and 
	$\alpha_k^{r-1} \overset{\k\to\infty}{\longrightarrow} 1-e^{-1}$.  
	Hence we obtain the statement by considering $\d\to\infty$. 
	
	\paragraph{Construction of $\Gr$}
	Given any positive integer $\ell$ ($\le k$), 
	we define the following function $\Hl(m,n)$ 
	for integers $m\in[0,\ell]$ and $n\in[0,\d]$ 
	that satisfy $m\le n$: 
	\begin{align}
	\Hl(m,n)
	&\coloneqq 
	\begin{cases}
	\ell^\ell
	-
	\ell^{\ell-1}
	(\ell-m)
	\left(1- \frac{1}{\ell}\right)^{n-m}
	&\text{if $n\le\k+\ell$,}
	\\
	\ell^\ell
	&\text{otherwise.}
	\end{cases}
	\end{align}
	Note that 
	the function is non-negative and that we have 
	\begin{align}
	\Hl(0,0) = 0  
	& & 
	\text{and} 
	& & 
	\Hl(0,n) = \Hl(1,n) = \ell^{\ell}\left(1-\frac{1}{\ell} \right)^n . 
	\end{align}
	Given any integers 
	$m_1, n_1, m_2, n_2$
	such that 
	\begin{align}
	0\le m_1\le n_1, 
	& &  
	0 \le m_2 \le n_2, 
	& &  
	m_1+m_2 \le \ell, 
	& &  
	\text{and} 
	& & 
	n_1+n_2 \le \d, 
	\end{align}
	we define 
	\begin{align}
	\Hldel{m_2,n_2}{m_1,n_1}
	&\coloneqq
	\Hl(m_1+m_2, n_1+n_2) - \Hl(m_1,n_1)
	\\
	&=
	\ell^{\ell-1}
	\left( 1-\frac{1}{\ell}\right)^{n_1-m_1} 
	\left( \ell - m_1 - (\ell - m_1 - m_2) \left( 1 - \frac{1}{\ell}\right)^{n_2 - m_2} \right). 
	\end{align}
	When $(m_2,n_2) = (1,1)$ and $(0,1)$, 
	for any $m_1,n_1$ satisfying the above conditions, 
	we have
	\begin{align}
	\Hldel{1,1}{m_1,n_1}
	=
	\ell^{\ell-1} 
	\left( 1-\frac{1}{\ell}\right)^{n_1 - m_1}
	& & 
	\text{and} 
	& & 
	\Hldel{0,1}{m_1,n_1}
	=
	\ell^{\ell-1}\left(1-\frac{m_1}{\ell} \right)
	\left( 1-\frac{1}{\ell}\right)^{n_1 - m_1}, 
	\end{align} 
	respectively. 
	For later use, we prove the following lemma: 
	\begin{alem}\label{a_lem:ratio} 
		For any integers $m_1, n_1, m_2, n_2$ 
		that satisfy 
		\begin{align}\label{a_eq:mnlk}
		0\le m_1 \le n_1 \le \ell, 
		& & 
		0\le m_2 \le n_2 \le \k, 
		& & 
		m_1+m_2\le\ell, 
		& &
		\text{and}
		& &
		n_1+n_2\le\d,
		\end{align}
		we have 
		\[
		\frac
		{\Hldel{m_2,n_2}{m_1,n_1}}
		{
			m_2\times\Hldel{1,1}{m_1,n_1}
			+
			(n_2-m_2)\times\Hldel{0,1}{m_1,n_1}
		}
		\ge 
		\left(2+\frac{\k-\ell}{\ell} \right)^{-1},   
		\]
		where we regard $0/0=1$. 
	\end{alem}
	
	\begin{proof}
		We rewrite the LHS of the target inequality as follows: 
		\begin{align}
		&\frac
		{\Hldel{m_2,n_2}{m_1,n_1}}
		{
			m_2\times\Hldel{1,1}{m_1,n_1}
			+
			(n_2-m_2)\times\Hldel{0,1}{m_1,n_1}
		}
		\\
		={}&
		\frac
		{
			\ell^{\ell-1}
			\left( 1-\frac{1}{\ell}\right)^{n_1-m_1}
			\left(
			\ell-m_1
			-(\ell-m_1-m_2)
			\left( \frac{\ell-1}{\ell}\right)^{n_2-m_2}
			\right)
		}
		{
			m_2\times
			\ell^{\ell-1}
			\left( \frac{\ell-1}{\ell}\right)^{n_1-m_1}
			+
			(n_2-m_2)\times
			\ell^{\ell-1}
			\left( \frac{\ell-1}{\ell}\right)^{n_1-m_1}
			\left(1-
			\frac{m_1}{\ell}
			\right)
		}
		\\
		={}&
		\frac
		{
			\ell-m_1
			-(\ell-m_1-m_2)
			\left(1- \frac{1}{\ell}\right)^{n_2-m_2}
		}
		{
			m_2
			+
			(n_2-m_2)
			\left(1-
			\frac{m_1}{\ell}
			\right)
		}
		\\
		={}&
		\frac
		{
			1-\frac{m_1}{\ell}
			-(1-\frac{m_1}{\ell}-\frac{m_2}{\ell})
			\left(1- \frac{1}{\ell}\right)^{\ell\left( \frac{n_2}{\ell}-\frac{m_2}{\ell}\right) }
		}
		{
			\frac{m_2}{\ell}
			\frac{m_1}{\ell}
			+
			\frac{n_2}{\ell}
			\left( 1 - \frac{m_1}{\ell}\right) 
		}.
		\end{align}
		By defining 
		$x\coloneqq \frac{m_2}{\ell}$, 
		$y\coloneqq \frac{n_2}{\ell}$, 
		and 
		$z\coloneqq 1 - \frac{m_1}{\ell}$, 
		we obtain  
		\begin{align}
		\frac
		{\Hldel{m_2,n_2}{m_1,n_1}}
		{
			m_2\times\Hldel{1,1}{m_1,n_1}
			+
			(n_2-m_2)\times\Hldel{0,1}{m_1,n_1}
		}
		=
		\frac{z - (z - x)\left(1 - \frac{1}{\ell} \right)^{\ell(y-x)} }{x(1-z)+yz}, 
		\label{a_eq:xyz}
		\end{align}
		where $x,y,z$ must satisfy 
		the following inequalities from~\eqref{a_eq:mnlk}:  
		\begin{align}
		0\le z \le 1, 
		& & 
		0\le x \le y \le \frac{\k}{\ell} = 1+\frac{\k-\ell}{\ell}, 
		& & 
		x \le z,
		& & 
		\text{and} 
		& & 
		y-z \le \frac{\d}{\ell} - 1.
		\end{align}
		The RHS of~\eqref{a_eq:xyz} can be bounded 
		from below by $\left(2+\frac{\k-\ell}{\ell} \right)^{-1}$ 
		as follows: 	
		\begin{align}
		\frac{z - (z - x)\left(1 - \frac{1}{k} \right)^{k(y-x)} }
		{x(1-z)+yz}
		&\ge
		\frac{z - (z - x)e^{-(y-x)} }
		{x(1-z)+yz}
		& & \cuz{$\left(1-1/k \right)^k \le e^{-1} $}
		\\
		&\ge 
		\frac{z - (z - x)\frac{1}{1+y-x} }
		{x(1-z)+yz}
		& & \cuz{$e^{-a}\le\frac{1}{1+a}$ for $a>-1$}
		\\
		&=
		\frac{1}{1+y-x} 
		\\
		&\ge
		\left( {2+\frac{\k-\ell}{\ell}}\right)^{-1}. 
		& & \cuz{$x\ge0$ and $y\le 1+\frac{k-\ell}{\ell}$}
		\end{align}
		Thus, the lemma holds. 
	\end{proof}
	
	By using the above $\Hl(m,n)$ with $\ell=\k-r+1$, 
	we construct $\Gr(m,n)$ for any integers $m\in[0,k]$ and $n\in[0,d]$ 
	as follows: 
	\begin{align}
	\Gr(m,n)
	\coloneqq
	\begin{cases}
	n\times \Hr(0,1)
	& \text{if $0\le m \le n \le r$,}
	\\
	(r-1)\times\Hr(0,1)
	+
	\Hr(0,n-r+1)
	& \text{if $0\le m \le r$ and $r \le n \le d$,}
	\\
	(r-1)\times\Hr(0,1)
	+
	\Hr(m-r+1,n-r+1) & \text{if $r\le m \le \k$ and $r \le n \le d$.}
	\end{cases}
	\end{align}
	By using $\Gr$, 
	we define $\F(\Ss)=\Gr(\ms,\ns)$.  
	We can confirm that $\Gr$ has 
	Properties 2--5 as in the proof of \citep{nemhauser1978best}. 
	Below we show that the function has Property~1. 
	We let $\d$ satisfy $\d\ge2\k$.
	The monotonicity can be confirmed easily by examining $\Fdel{j}{\Ss}$ for 
	each case. 
	Furthermore, 
	by analogy with the proof in \citep{nemhauser1978best}, 
	we can show that $\F(\Ss)=\Gr(\ms,\ns)$ 
	is submodular over all 
	subsets of size at most $2k$: 
	I.e., 
	$\Fdel{j}{\Ss}\ge\Fdel{j}{\Ts}$ for any 
	$\Ss\subseteq\Ts$ satisfying $|\Ts|<2k$ 
	and $j\notin\Ts$. 
	This suffices to prove that $\brkk=1$ holds. 
	
	Below we prove $\prkk\ge\left( 2+\frac{r-1}{k-r+1} \right)^{-1}$. 
	Note that SPR
	can be written as
	\begin{align}
	\prkk=
	\min_{
		{\small \begin{array}{l}
			\Ls,\Ss:\Ls\cap \Ss=\emptyset, \\
			|\Ls|\le\k, |\Ss|\le \k
			\end{array}}
	}
	\frac{\Fdel{\Ss}{\Ls}}{\sum_{j\in \Ss}\Fdel{j}{\Ls}},
	\end{align}
	where we regard $0/0=1$. 
	In what follows, 
	for any disjoint $\Ls,\Ss\subseteq\dset$ of size at most $k$, 
	we consider bounding
	$\frac{\Fdel{\Ss}{\Ls}}{\sum_{j\in\Ss}\Fdel{j}{\Ls}}$ 
	from below.  
	Depending on the 
	values of 
	$\ml=|\Ls\cap\Ms|$, 
	$\ms=|\Ss\cap\Ms|$, 
	$\nl=|\Ls|$, 
	and 
	$\ns=|\Ss|$, 
	we have the following six cases.   
	We first examine each case and then show that 
	$\frac{\Fdel{\Ss}{\Ls}}{\sum_{j\in\Ss}\Fdel{j}{\Ls}}\ge\left( 2+\frac{r-1}{k-r+1} \right)^{-1}$ holds 
	for all cases. 
	
	\begin{description}
		\item[Case 1: $\ns+\nl < r$.] 
		In this case, we have 
		$
		\Fdel{\Ss}{\Ls} 
		=
		(\ns-\nl)\Hr(0,1)
		$; 
		i.e., the function is modular. 
		Therefore, we have
		$
		\frac{\Fdel{\Ss}{\Ls}}{\sum_{j\in \Ss}\Fdel{j}{\Ls}}
		=1.
		$
		\item[Case 2: $\nl< r$ and  $\ml + \ms \le r \le \ns+\nl$.] 
		In this case, we have
		\begin{align}
		\Fdel{\Ss}{\Ls}
		&=
		\Hr(0,\ns+\nl-r+1) - 
		(\nl-r+1)\Hr(0,1),
		\\
		\Fdel{j}{\Ls}
		&=
		\Hr(0,1).
		\end{align}
		Note that we have $|\Ls|\le\k$ and $|\Ss|\le\k$. 
		Due to the submodularity over all subsets of size at most $2k$, 
		the more elements $\Ls$ includes, 
		the smaller $\Fdel{\Ss}{\Ls}$ becomes, 
		which means $\Fdel{\Ss}{\Ls}$ attains its minimum 
		when $\nl=r-1$. 
		Therefore, we have 
		\begin{align}
		\frac{\Fdel{\Ss}{\Ls}}{\sum_{j\in\Ss}\Fdel{j}{\Ls}}
		\ge
		\frac{\Hr(0,\ns)}{\ns\times\Hr(0,1)}
		=
		\frac{\Hrdel{0,\ns}{0,0}}{\ns\times\Hrdel{0,1}{0,0}}.  
		\end{align}
		
		\item[Case 3: $\nl< r$ and $r \le \ms+\ml$.] 
		We have 
		\begin{align}
		\Fdel{\Ss}{\Ls} 
		&=
		\Hr(\ml+\ms-r+1,\ns+\nl-r+1) - 
		(\nl-r+1)\Hr(0,1),
		\\
		\Fdel{j}{\Ls}
		&=
		\Hr(0,1).
		\end{align}
		By analogy with the above case, 
		$\Fdel{\Ss}{\Ls}$ attains its minimum when $\nl=r-1$. 
		Therefore, 
		\begin{align}
		\frac{\Fdel{\Ss}{\Ls}}{\sum_{j\in\Ss}\Fdel{j}{\Ls}}
		&\ge 
		\frac{\Hr(\ms+\ml-r+1,\ns)}{\ns\times\Hr(0,1)}
		=
		\frac{\Hrdel{\ms+\ml-r+1,\ns}{0,0}}{\ns\times\Hrdel{0,1}{0,0}}  
		\\
		&
		=
		\frac{\Hrdel{\ms+\ml-r+1,\ns}{0,0}}
		{
			(\ms+\ml-r+1)\Hrdel{1,1}{0,0}
			+
			(\ns-\ms-\ml+r-1)\Hrdel{0,1}{0,0}
		}, 
		\end{align}
		where the last equality comes from $\Hr(0,n)=\Hr(1,n)$. 
		Note that we have 
		$\ms+\ml-r+1\le\ell=\k-r+1$ 
		since 
		$\ms+\ml=|\Ss\cap\Ms|+|\Ls\cap\Ms|\le|\Ms|=\k$, 
		where the inequality comes from the fact that $\Ls$ and $\Ss$ are disjoint. 
		Furthermore, we have 
		$\ns-\ms-\ml+r-1\ge0$ since 
		$\ns\ge\ms$ and $\ml\le\nl\le r-1$. 
		
		\item[Case 4: $\ml< r \le \nl$ and $\ms+\ml \le r$.]
		We have 
		\begin{align}
		\Fdel{\Ss}{\Ls} 
		&=
		\Hr(0,\ns+\nl-r+1) - 
		\Hr(0,\nl-r+1)
		\\
		&=
		\Hrdel{0,\ns}{0,\nl-r+1},
		\\
		\Fdel{j}{\Ls} 
		&=
		\Hrdel{0,1}{0,\nl-r+1}, 
		\end{align}
		and thus we obtain  
		\begin{align}
		\frac{\Fdel{\Ss}{\Ls}}{\sum_{j\in \Ss}\Fdel{j}{\Ls}} 
		= 
		\frac{\Hrdel{0,\ns}{0,\nl-r+1}}
		{\ns\times \Hrdel{0,1}{0,\nl-r+1}}.  
		\end{align}
		
		\item[Case 5: $\ml< r \le \nl$ and $\ms+\ml \ge r$.] 
		We have 
		\begin{align}
		\Fdel{\Ss}{\Ls} 
		&=
		\Hr(\ms+\ml-r+1,\ns+\nl-r+1) - 
		\Hr(0,\nl-r+1)
		\\
		&=
		\Hrdel{\ms+\ml-r+1,\ns}{0,\nl-r+1},
		\\
		\Fdel{j}{\Ls} 
		&=
		\Hrdel{0,1}{0,\nl-r+1}, 
		\end{align}
		and thus we obtain  
		\begin{align}
		&\frac{\Fdel{\Ss}{\Ls}}{\sum_{j\in \Ss}\Fdel{j}{\Ls}}  
		\\
		={}&
		\frac{\Hrdel{\ms+\ml-r+1,\ns}{0,\nl-r+1}}
		{\ns\times \Hrdel{0,1}{0,\nl-r+1}}. 
		\\
		={}&
		\frac{\Hrdel{\ms+\ml-r+1,\ns}{0,\nl-r+1}}
		{
			(\ms+\ml-r+1)\Hrdel{1,1}{0,\nl-r+1}
			+
			(\ns-\ms-\ml+r-1)\Hrdel{0,1}{0,\nl-r+1}
		},
		\end{align}
		where we used $\Hrdel{0,1}{0,n}=\Hr(0,n+1)-\Hr(0,n)
		=\Hr(1,n+1)-\Hr(0,n)=\Hrdel{1,1}{0,n}$. 
		Note that we can obtain 
		$\ms+\ml-r+1\le\ell=\k-r+1$ 
		and 
		$\ns-\ms-\ml+r-1\ge0$ 
		by analogy with Case~3. 
		
		\item[Case 6: $\ml\ge r$.] 
		We have
		\begin{align}
		\Fdel{\Ss}{\Ls} 
		&=
		\Hr(\ms+\ml-r+1,\ns+\nl-r+1) - 
		\Hr(\ml-r+1,\nl-r+1)
		\\
		&=
		\Hrdel{\ms,\ns}{\ml-r+1,\nl-r+1},
		\\
		\Fdel{j}{\Ls}
		&=
		\begin{cases}
		\Hrdel{0,1}{\ml-r+1,\nl-r+1} 
		& \text{if $j\notin\Ms$}, 
		\\
		\Hrdel{1,1}{\ml-r+1,\nl-r+1} 
		& \text{if $j\in\Ms$}.  
		\end{cases}
		\end{align}
		In this case, we obtain 
		\begin{align}
		&\frac{\Fdel{\Ss}{\Ls}}{\sum_{j\in \Ss}\Fdel{j}{\Ls}} 
		\\
		={}&
		\frac{\Hrdel{\ms,\ns}{\ml-r+1,\nl-r+1}}
		{\ms\times\Hrdel{1,1}{\ml-r+1,\nl-r+1}
			+
			(\ns-\ms)\times\Hrdel{0,1}{\ml-r+1,\nl-r+1} }.
		\end{align}	
	\end{description}
	In all cases, 
	the value of 
	$\frac{\Fdel{\Ss}{\Ls}}{\sum_{j\in \Ss}\Fdel{j}{\Ls}}$ 
	is lower bounded by $\left(2+\frac{r-1}{k-r+1} \right)^{-1} $ 
	thanks to Lemma~\ref{a_lem:ratio} with $\ell=k-r+1$, 
	where we let 
	$(m_1, n_1, m_2, n_2) = 
	(0, 0, 0, \ns), 
	(0, 0, \ms+\ml-r+1, \ns), 
	(0, \nl-r+1, 0, \ns), 
	(0, \nl-r+1, \ms+\ml-r+1, \ns)
	$, 
	and 
	$(\ml-r+1, \nl-r+1, \ms, \ns)$ 
	in Cases~2, 3, 4, 5, and 6, respectively.  
	Note that 
	the conditions required in Lemma~\ref{a_lem:ratio} are satisfied
	in all cases. 
\end{proof}

\subsection{Discussion on Easier Subclasses of  WMM}\label{a_subsec:discussion}
While \Cref{thm:hard} 
means that the existing ($1-e^{-\br_{\Ss,\k}}$)-approximation guarantee~\citep{das2018approximate} 
achieved by \greedy{} 
cannot be improved in polynomial time in general,  
there may exist easier subclasses of WMM 
that are not considered in \Cref{thm:hard}, 
for which better guarantees may be possible.  
One such example is 
WMM such that 
SBR and SPR defined on the whole domain
(i.e., $\br_{\d}$ and $\pr_{\d}$) are lower bounded 
by constants.   
Therefore, 
as regards algorithms whose guarantees 
are proved by 
using the (weak) submodularity on the whole domain,  
their guarantees may be improved by using 
bounded $\pr_{\d}$. 
One such algorithm is the {\it continuous greedy} algorithm~\citep{calinescu2011maximizing}, 
and so a better guarantee for WMM with 
bounded $\br_{\d}$ and $\pr_{\d}$ may be possible 
by using continuous-greedy-based methods. 
However, 
whether this approach works or not is 
non-trivial since 
we currently lack a guaranteed 
{\it rounding} scheme for WM functions. 
On the other hand, 
as in~\citep{bian2017guarantees,das2018approximate}, 
the proofs of \greedy{} 
rely only on the weak submodularity defined on 
the restricted domain (or bounded $\brkk$). 
With regard to such algorithms, 
\Cref{thm:hard} suggests that, 
even if $\br_{\d}$ and $\pr_{\d}$ are bounded, 
it is hardly possible to obtain 
approximation ratios better than $1-e^{-\brkk}$ 
only via slight modification of the existing proofs. 
To conclude, 
there are the following two possibilities. 
It is hard to 
improve $1-e^{-\brkk}$ even for easier subclasses of 
WMM (e.g., $\br_\d$ and $\pr_\d$ are bounded), 
or 
there exist new techniques 
(e.g., a continuous-greedy-based one) 
and 
it is possible to obtain approximation guarantees 
that can exceed $1-e^{-\brkk}$ 
for some easier subclasses.

\end{document}